\tikzset{
	position/.style args={#1:#2 from #3}{
		at=(#3.#1), anchor=#1+180, shift=(#1:#2)
	}
}
\newcolumntype{x}[1]{>{\centering\arraybackslash}p{#1}}
\newcommand{\SequenceSwapSP}{\textsc{SemiSwap}}
\newcommand{\SequenceSwapSPv}{$\textsc{SemiSwap}(v)$}
\newcommand{\MaxAssets}{\textsc{MaxAssets}}
\newcommand{\MaxAssetsv}{$\textsc{MaxAssets}_v$}
\newcommand{\Reach}{\textsc{Reach}}
\newcommand{\Reachv}{$\textsc{Reach}(v)$}
\newcommand{\Reachvl}{$\textsc{Reach}(v,\ell)$}
\newcommand{\Swap}{\textit{swap}}
\newcommand{\maxSat}{\textsc{Max-2-SAT}}
\newcommand{\setCover}{\textsc{Set Cover}}
\newcommand{\vertexCover}{\textsc{Vertex Cover}}
\newcommand{\partition}{3-\textsc{Partition}}
\newcommand{\Tpartition}{\textsc{Partition}}
\newcommand{\IS}{\textsc{Independent Set}}
\newcommand{\satcon}{3-\textsc{SAT Connectivity}}
\newcommand{\bsigma}{\boldsymbol{\sigma}}
\newcommand{\deb}{\textsf{de}}
\newcommand{\cre}{\textsf{cr}}
\newcommand{\fnetwork}{\mathcal{F} = (G, \deb, \cre, \vecl, \veca^x, \vecf)}
\newcommand{\svo}{s_{v_1}}
\newcommand{\svt}{s_{v_2}}
\newcommand{\tvo}{t_{v_1}}
\newcommand{\tvt}{t_{v_2}}
\newcommand{\aovo}{a_{v_1}}
\newcommand{\aovt}{a_{v_2}}
\newcommand{\atvo}{\sig{a}_{v_1}}
\newcommand{\atvt}{\sig{a}_{v_2}}
\newcommand{\vecf}{\mathbf{f}}
\newcommand{\vecl}{\mathbf{l}}
\newcommand{\veca}{\mathbf{a}}
\newcommand{\vecp}{\mathbf{p}}
\newcommand{\vecc}{\mathbf{c}}
\newcommand{\F}{\mathcal{F}}
\newcommand{\G}{\mathcal{G}}
\newcommand{\HH}{\mathcal{H}}
\newcommand{\sig}[1]{{#1}^{\sigma}}
\newcommand{\pre}[1]{{#1}^{\text{pre}}}
\newcommand{\diff}[1]{{#1}^{\text{diff}}}
\newcommand{\N}{\mathcal{N}}
\newcommand{\NN}[0]{\mathbb{N}}
\newcommand{\RR}[0]{\mathbb{R}}
\newcommand{\classP}{\textsf{P}}
\newcommand{\classNP}{\textsf{NP}}
\newcommand{\classPLS}{\textsf{PLS}}
\newcommand{\classAPX}{\textsf{APX}}
\newcommand{\classPSPACE}{\textsf{PSPACE}}
\newtheorem{theorem}{Theorem}[section]
\newtheorem{proposition}[theorem]{Proposition}
\newtheorem{lemma}[theorem]{Lemma}
\newtheorem{corollary}[theorem]{Corollary}
\newtheorem{observation}[theorem]{Observation}
\newtheorem{example}[theorem]{Example}
\newtheorem{definition}[theorem]{Definition}
\newtheorem*{openproblem*}{Open Problem}
\newcommand{\changed}[1]{#1}
\renewcommand{\paragraph}[1]{\medskip \noindent \textrm{\textbf{#1.}} $\,$}
\journal{Theoretical Computer Science}
\begin{document}

\begin{frontmatter}
\title{Dynamic Debt Swapping in Financial Networks}

\author[ffm]{Henri Froese}
\ead{henri.froese@stud.uni-frankfurt.de}
\author[rwth]{Martin Hoefer}
\ead{mhoefer@cs.rwth-aachen.de}
\author[rwth]{Lisa Wilhelmi\corref{cor1}}
\ead{wilhelmi@algo.rwth-aachen.de}
\cortext[cor1]{Corresponding author}

\affiliation[ffm]{organization={Dept.\ of Computer Science, Goethe University Frankfurt},
            addressline={Robert-Mayer-Strasse 11-15}, 
            city={Frankfurt am Main},
            postcode={60325}, 
            country={Germany}}
\affiliation[rwth]{organization={Dept.\ of Computer Science, RWTH Aachen University},
            addressline={Ahornstrasse 55}, 
            city={Aachen},
            postcode={52074}, 
            country={Germany}}

\begin{abstract}
A \emph{debt swap} is an elementary edge swap in a directed, weighted graph, where two edges with the same weight swap their targets. Debt swaps are a natural and appealing operation in financial networks, in which nodes are banks and edges represent debt contracts. They can improve the clearing payments and the stability of these networks. However, their algorithmic properties are not well-understood.

We analyze the computational complexity of debt swapping. Our main interest lies in \emph{semi-positive} swaps, in which no creditor strictly suffers and at least one strictly profits. These swaps lead to a Pareto-improvement in the entire network. We consider network optimization via sequences of $v$-improving debt swaps from which a given bank $v$ strictly profits. For ranking-based clearing, we show that every sequence of semi-positive $v$-improving swaps has polynomial length. In contrast, for arbitrary $v$-improving swaps, the problem of reaching a network configuration that allows no further swaps is \classPLS-complete.
        
In global optimization, the goal is to maximize the utility of a given bank $v$ by performing a sequence of debt swaps in the network. This problem is \classNP-hard to approximate for multiple types of swaps.

Moreover, we study reachability problems -- deciding if a sequence of swaps exists between given initial and final networks. We design a polynomial-time algorithm to decide this question for arbitrary swaps and derive hardness results for several other types of swaps.

Many of our results can be extended to networks with arbitrary monotone clearing.
\end{abstract}

\begin{keyword}
Debt Swap \sep
Financial Networks \sep
Local Search
\end{keyword}

\end{frontmatter}

\clearpage

\section{Introduction}

The international financial system represents a complex and highly interconnected network. A prominent threat is that tight structural connections between banks, funds, and other financial institutions can yield devastating cascading effects. These systemic risks have been a major issue in the financial crisis of 2008 and several other crises since then. Mitigating these risks is an important problem, which has recently also received attention in computer science. The goal here is to understand the computational complexity of network-based interbank models for financial systems and potential interventions. 

A fundamental aspect of modeling financial systems is \emph{clearing}, a process of debt resolution by which solvency of banks in the system gets determined. In a financial network (where banks are nodes, and weighted, directed edges correspond to debt contracts) the clearing problem asks for a network flow that represents a consistent assignment of payments of debtor banks to their creditors. Based on the clearing, a bank might not receive the full amount of credit it gave to others. As a consequence, it might not have sufficient funds to clear all her own debt. Naturally, the solution to the clearing problem is of central importance to banks since it directly affects their assets.

Banks have an obvious incentive to improve their clearing conditions. Moreover, govern\-men\-tal regulators, central banks, and other financial services share an interest in maintaining and improving the financial system. While allocation rules for clearing are often fixed, other interventions for improvement have the potential to significantly improve the clearing conditions.
A particularly interesting variant of network improvement is \emph{debt swapping}~\cite{PappDebtSwapping}: There are two debt contracts among two pairs of banks $(u_1,v_1)$ and $(u_2,v_2)$. The contracts have the same value. In a debt swap, the creditor banks $v_1$ and $v_2$ change their role as recipients of these contracts. This leaves every bank with the same amounts of debt as before -- only the network is affected by the exchange of the edges. A debt swap is a local edge-swap operation and represents a small change in the network structure. Since it involves only a small number of banks, it can be executed in a distributed fashion by the participating banks. Banks are more likely to agree to this operation if it does not deteriorate their network position, i.e., it does not decrease their assets in the clearing of the resulting network.

Debt swaps are a very natural and interesting swap operation with many conceptual advantages. For example, they are usually less demanding than the more prominent approach of portfolio compression~\cite{SchuldenzuckerPortfolioCompression, veraart2020PortfolioCompression} (for a discussion of the relation between these two operations, see~\cite{PappDebtSwapping}). Moreover, seemingly different operations proposed by regulators (such as, e.g., claims trades) can be interpreted as debt swaps~\cite{HoeferVW24}. From a computational perspective, debt swaps are related to a substantial body of literature on edge swap operations in graphs. Numerous variants of such swaps have attracted a large interest in computer science over the decades (see related work discussed below). The computational aspects of debt swapping, however, are not well understood.

In this paper, we strive to understand debt swap dynamics and their effects on the network. Since a debt swap is a local improvement step in the network, we first focus on local optimization (e.g., from an initial network, can we reach a configuration that allows no more profitable debt swaps?). In addition, we also consider aspects of global optimization (i.e., maximize the assets of a single bank via a sequence of debt swaps) and reachability problems (i.e., decide if a target network structure can be reached using a sequence of debt swaps). We discuss all our results consistently for networks in which banks use natural ranking-based clearing payments~\cite{CsokaDecentralizedClearing,BertschingerStrategicPayments,ioannidis2022financial}. Indeed, many of our results can be extended to arbitrary monotone clearing. Throughout the paper we mention the extension explicitly whenever it applies.

\paragraph{Contribution and Techniques}
After discussing preliminaries in Section~\ref{sec:model}, we classify debt swaps into positive (both creditors profit strictly in the clearing after the swap), semi-positive (one creditor profits strictly, the other one remains the same), or arbitrary ones. Our characterization results in Section~\ref{sec:characterization} show that in every financial network, there can be no positive swap. Semi-positive swaps can exist, and, in fact, they always represent a \emph{Pareto-improvement}: After the swap \emph{no bank in the entire network suffers}, every bank obtains at least as many assets as before. As such, the \emph{recovery rate} of every bank weakly increases, as well as the payments towards every debt contract. This makes semi-positive swaps very attractive. Similar results were shown by Papp and Wattenhofer~\cite{PappDebtSwapping} for proportional payments. Our paper generalizes these results substantially to \emph{all monotone payments}.

We then concentrate on improvement dynamics with semi-positive debt swaps. Since they Pareto-improve the assets of all banks (and strictly for a creditor bank), every sequence of semi-positive swaps is inherently acyclic and terminates in a local optimum. A natural and challenging question that we study in Section~\ref{sec:debt-swap-dyn} is whether there are short sequences that allow to reach a local optimum in polynomial time. \changed{Towards answering this question, we} further refine the characterization of semi-positive swaps for ranking-based payments. \changed{Our characterization involves several classes of swaps. For each but one class, we show that there can be only a polynomial number of swaps from that class in every sequence. The existence of short sequences then depends crucially on one remaining class of so-called \emph{active extension swaps} (for a formal statement see Definitions~\ref{def:extension-swap} and~\ref{def:active-extension-swap} below).}

When a single bank $v$ strictly profits from every semi-positive swap, then we prove that the number of active extension swaps is also bounded by a polynomial in the size of the network (and, thus, so is every sequence of semi-positive swaps). Instead, when we consider arbitrary swaps in which a given bank $v$ strictly improves, then there are initial networks from which \emph{every} sequence of such swaps has exponential length. More fundamentally, we prove that it is \classPLS-complete to construct any local optimum, i.e., any network where liabilities of each node are consistent with the initial network and that allows no (arbitrary) debt swap in which the assets of a given bank $v$ strictly increase. 

The problem without the requirement of strict improvement for a single bank $v$ turns out to be extremely challenging. We also make some progress here -- under a condition on the liabilities and the in-degree of partly paid edges of every bank, we can show existence of a short sequence of semi-positive swaps to a local optimum. The general problem remains as an interesting avenue for future work.

In Section~\ref{sec:opt}, we consider global optimization, i.e., the problem of determining the maximal assets of a given bank that can be obtained by a sequence of debt swaps from a given initial network. We show that this problem is \classNP-hard to approximate, both when considering arbitrary or only semi-positive debt swaps. Similar hardness results apply when determining the maximal \emph{sum of assets of several banks}. 

In Section~\ref{sec:reach} we then turn to reachability problems: Given an initial and a target network on the same set of nodes $V$, the goal is to decide whether or not there is a sequence of debt swaps to turn the initial network into the target network. For arbitrary debt swaps, we show that the problem can be solved in polynomial time. In contrast, when we have to maintain a lower bound on the assets of a bank $v$ throughout, the problem becomes \classPSPACE-complete. For semi-positive debt swaps, we show that the problem is \classNP-hard. 
	
In Table~\ref{table:summary} we summarize our main results with references to the respective theorems and corollaries in the technical part.

\begin{table}[h!]
\renewcommand{\arraystretch}{1.3}
\begin{center}
\changed{
\begin{tabular}{l||c|c}
 & \multicolumn{2}{c}{Local Optimization}        \\
Swaps         &           & $v$-improving        \\\hline\hline
Semi-Positive & \classP\ in special cases & \classP \\
              & (Thm. \ref{thm:shortSequence}) & (Thm. \ref{thm:shortSequenceV}) \\ \hline
Arbitrary     &           & \classPLS-complete  \\
              &           & (Thm. \ref{thm:PLS}) \\
\multicolumn{3}{c}{}\\
\multicolumn{3}{c}{}\\
              & \multicolumn{2}{c}{Global Optimization} \\
Swaps         &            &$v$-improving \\ \hline\hline
Semi-Positive & $n^{1/2-\varepsilon}$-hard & \classNP-hard   \\ 
              & (Thm. \ref{thm:opt-IS}), Cor. \ref{cor:opt-IS}) & (Thm. \ref{thm:opt-2partition}, Thm. \ref{thm:opt-3partition}) \\ \hline
Arbitrary     & \classP  & \classPSPACE-complete \\
& (Thm. \ref{thm:greedy})  & (Cor. \ref{cor:PSPACE}/Thm. \ref{thm:PSPACE}) \\
\multicolumn{3}{c}{}\\
\multicolumn{3}{c}{}\\
              & \multicolumn{2}{c}{Reachability}   \\
Swap          &       & $v$-improving/with LB \\ \hline\hline
Semi-Positive & \multicolumn{2}{c}{\classNP-hard} \\
              & \multicolumn{2}{c}{(Prop. \ref{prop:reach-NP})} \\ \hline
Arbitrary     & \classAPX-hard  & \classAPX-hard \\
              & (Cor. \ref{cor:maxAssetHard}) & (Cor. \ref{cor:maxAssetImproveHard}) \\
\end{tabular}
}
\caption{Main results of this paper\label{table:summary}}
\end{center}
\end{table}

\paragraph{Related Work}
There has been a recent surge of interest in the algorithmic and game-theoretic aspects of financial networks. Most works adopt and extend the clearing model by Eisenberg and Noe~\cite{EisenbergSystemicRisk} (discussed in more detail below), where banks are nodes, edges are debt contracts, and clearing is governed by proportional payments. Most closely related to our work is recent work by Papp and Wattenhofer~\cite{PappDebtSwapping}, who introduce and study the concept of debt swaps for proportional payments. They show that positive swaps do not exist, semi-positive swaps lead to Pareto-improvement, discuss shock models and hardness of a number of optimization problems. Another network-altering operation was recently introduced by Kanellopoulos et al.~\cite{kanellopoulos2023debt}. A debt transfer describes the process of a bank transferring a debt claim to another creditor. Formally, this operation exchanges a path of length two by one single edge. The authors consider optimization problems and a game-theoretic model where the banks strategically decide whether or not to perform a debt transfer. 

More generally, algorithmic aspects of (extensions of) the Eisenberg-Noe model have been subject to substantial recent interest, including, e.g., the computation of clearing states in networks with credit-default swaps (CDS)~\cite{schuldenzucker2017clearing, ioannidis2022financial, IKV_ICALP22}. Some clearing models (e.g., using CDS'es) suffer from default ambiguity~\cite{SchuldenzuckerDefault}, and several strategies for resolving this problem have been proposed~\cite{PappW21_WINE}. Changes of the instance in terms of adding or deleting single liabilities, donations to other banks, lending, manipulation of external assets, allocation of stimulus checks, or bailouts were analyzed in~\cite{PappNetworkAware,kanellopoulos2022forgivingDebt, egressy2024price, papachristou2022allocating}. While most works assume the clearing state to manifest as a fixed point, the impact of \emph{sequential} updates is studied in~\cite{PappW21}. Further temporal aspects are investigated in~\cite{FriedetzkyKMST25}, where liabilities are associated with individual maturity intervals. The works investigate the complexity of allocating integral and fractional payments in order to minimize defaults.

There is a growing interest in clearing problems beyond the restriction to proportional payments, using more general classes of monotone functions to model the clearing process~\cite{CsokaDecentralizedClearing}. This extension poses many interesting algorithmic and, in particular, game-theoretic problems that are very recently starting to get explored~\cite{BertschingerStrategicPayments,KanellopoulosNetworkGames,HoeferWilhelmi}. 

An extended abstract of this work has been \changed{published in the proceedings of the 4th Symposium on Algorithmic Foundations of Dynamic Networks (SAND 2025)~\cite{FroeseHW25}.}

\section{Model and Preliminaries}
\label{sec:model}

\paragraph{Financial Networks} 
A \emph{financial network} $\fnetwork$ 
is given by a weighted, directed multi-graph $G=(V,E)$ with $|V|=n$ nodes and $|E|=m$ edges. Each node $v \in V$ resembles a financial institution or \emph{bank}. Each edge $e \in E$ represents a debt contract, where a debtor bank $\deb(e) \in V$ owes an amount of $l_e \in \NN_{>0}$ to a creditor bank $\cre(e)$. Formally, $e$ is a directed edge $(\deb(e), \cre(e))$ with weight $l_e$. We assume that there are no loops, i.e., $\deb(e) \neq \cre(e)$. The set of incoming and outgoing edges of $v$ are denoted by $E^-(v) = \{ e \in E \mid \cre(e) = v\}$ and $E^+(v) = \{e \in E \mid \deb(e) = v\}$. The \emph{total liabilities} of a bank $v$ are defined as $L_v=\sum_{e \in E^+(v)} l_e$. Further, every bank holds \emph{external assets} $a^x_v \in \NN$, i.e., assets of $v$ that are independent of the network. The payments that a bank receives from its debtors within the network are called \emph{internal assets} and, together with external assets, form the available assets of a bank.

An \emph{allocation rule} is a collection $\vecf=(\vecf_v)_{v\in V}$, where for bank $v$ $\vecf_v = (f_e)_{e \in E^+(v)}$ determines a distribution of the available assets $b_v \in \RR_{\geq 0}$ towards her liabilities. $f_e : \RR_{\geq 0} \rightarrow [0, l_e]$ is the \emph{payment function} for edge $e \in E^+(v)$. For every bank $v$, the payment functions satisfy the following properties: (1) $f_e$ is continuous\footnote{Monotonicity, edge capacity, and limited flow conservation together imply continuity. We postulate it explicitly for convenience.} and non-decreasing in $b_v$ (monotonicity), (2) each bank pays at most the respective liability, i.e., $f_e(b_v) \leq l_e$ (edge capacity), and (3) every bank spends all assets until all liabilities are paid off, i.e., $\sum_{e \in E^+(v)}f_e(b_v) = \min \{b_v, L_v\}$ (limited flow conservation). The available assets of $v$ are defined as the sum of external and internal assets $b_v = a^x_v + \sum_{e \in E^-(v)}f_e(b_{\deb(e)}) \ge 0$. 

These constraints give rise to a fixed-point problem. A vector of assets $(b_v)_{v \in V}$ that fulfills all constraints is called \emph{feasible}. For every allocation rule $\vecf$, the set of feasible vectors forms a complete lattice w.r.t.\ coordinate-wise comparison~\cite{BertschingerStrategicPayments, CsokaDecentralizedClearing}. Consequently, for every allocation rule $\vecf$ there exists a supremum $(a_v)_{v \in V}$ of feasible assets and resulting payments $\vecp =(p_e)_{e\in E}$ with $p_e = f_{e}(a_{\deb(e)})$ that pointwise maximize the assets of each bank and the payments on each edge. \changed{Following previous literature, we assume} the clearing settles on these maximal payments, which we term the \emph{clearing state}. 
\changed{Thus, the clearing state is uniquely determined for every instance.}
In the clearing state, bank $v$ receives \emph{total incoming assets} of $\sum_{e \in E^-(v)}p_e$ and holds \emph{total assets} of $a_v = a^x_v + \sum_{e \in E^-(v)}p_e$.

In this paper, we will discuss all our results consistently in the context of a natural class of \emph{edge-ranking} rules~\cite{BertschingerStrategicPayments, CsokaDecentralizedClearing,ioannidis2022financial}: For each bank $v$, her outgoing edges $E^+(v)$ are ordered by a permutation $\pi_v$. First, $v$ directs all her payments towards edge $\pi_v(1)$ until the edge is completely paid off or $v$ has no remaining assets. Then, she pays off edge $\pi_v(2)$ until the edge is completely paid off or $v$ has no remaining assets. This process continues until either all debt is settled or $v$ runs out of funds. 

Apart from being natural and well-motivated in applications, edge-ranking rules also have appealing computational properties. Given a financial network with edge-ranking rules, a maximal clearing state can be computed in strongly polynomial time~\cite{BertschingerStrategicPayments}. Moreover, these favorable properties extend to general monotone integral~\cite{CsokaDecentralizedClearing,
BertschingerStrategicPayments} 
allocation rules in explicit representation by using edge-rankings over a suitable set of auxiliary edges~\cite{BertschingerStrategicPayments}. In this way, \emph{all our statements generalize} to unit-ranking rules with edge-ranking representation.

Other classes of payments have also been of interest in the literature (e.g., \emph{proportional payments}, where every bank allocates her total assets in proportion to her total liabilities, or combinations of ranking and proportional). Indeed, many of our results extend even to arbitrary monotone allocation rules.

\paragraph{Debt Swaps} 
For a given financial network $\F$, a debt swap $\sigma$ is a structural change to the network. For two edges $e_1, e_2$ among four distinct banks with the same liability (i.e., $l_{e_1} = l_{e_2} =: \ell$) we exchange their creditors. The result can be described by an adjustment in the creditor function by swapping the entries of $e_1$ and $e_2$. Note that the allocation rule $\vecf$ of the banks remains unaffected by the swap, since the debtor function remains unchanged. However, due to the structural change, a debt swap may change the payments in the clearing state. We denote the resulting financial network, creditor function, clearing state, and total assets of a bank $v$ after the swap by $\sig{\F}$, $\sig{\cre}$, $\sig{\vecp}$ and $\sig{a}_v$, respectively.

\begin{definition}[Debt Swap]
Consider a financial network $\F$ with edges $e_1, e_2 \in E$. We use the short notation $v_1 = \cre(e_1)$, $v_2 = \cre(e_2)$, $u_1 = \deb(e_1)$ and $u_2 = \deb(e_2)$ throughout. Suppose that the liabilities $l_{e_1} = l_{e_2}$ and $u_1,u_2,v_1,v_2$ are pairwise distinct nodes. A \emph{debt swap} $\sigma$ of $e_1$ and $e_2$ creates a new network $\sig{\F}=(G,\deb,\sig{\cre},\vecl,\veca^x,\vecf)$ where $\sig{\cre}(e_1) = v_2$, $\sig{\cre}(e_2) = v_1$, and $\sig{\cre}(e) = \cre(e)$ for all $e \in E \setminus \{e_1,e_2\}$. 
\end{definition}
\changed{Throughout the paper, $e_1$ and $e_2$ are used to refer to the trading edges, while $u_1, u_2, v_1$ and $v_2$ refer to the respective debtor and creditor banks.}
Observe that a debt swap may introduce multi-edges even in initially simple graphs.
\changed{The edge-ranking rule of a bank can be interpreted as a priority ranking of contracts (instead of the creditor banks). When performing a debt swap of two edges $e_1$ and $e_2$, we assume that $u_1$ allocates payments towards the new edge with creditor $v_2$ in the same way as for $v_1$ before the trade. Hence, besides exchanging the creditor of one outgoing edge, the permutation of outgoing edges (and thus the priority ranking) remains the same.}

We proceed with a small example that shows some of the interesting effects of debt swaps. 

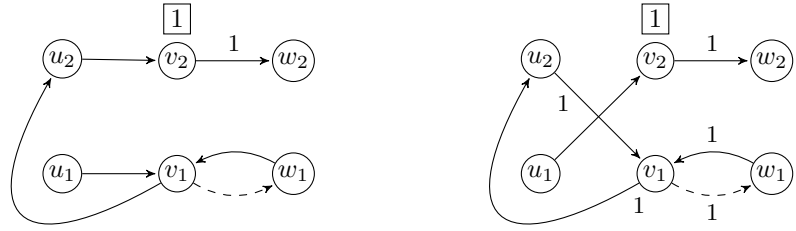
\begin{figure}[t]
\centering
\begin{tikzpicture}[>=stealth', shorten >=1pt, auto,
    node distance=1cm, 
    transform shape, align=center, 
    bank/.style={circle, draw, inner sep = 1}]
    \node[bank] (u1) at (0,0) {$u_1$};
	\node[bank] (v1) [right = of u1] {$v_1$};
	\node[bank] (u2) [above = of u1] {$u_2$};
	\node[bank] (v2) [above = of v1] {$v_2$};
	\node[bank] (w1) [right = of v1] {$w_1$};
	\node[bank] (w2) [right = of v2] {$w_2$};
	\node[rectangle, draw=black, inner sep=2.5pt] (xv2) [above=0.1cm of v2] {$1$};
	\draw[->] (u1) to (v1);
	\draw[->] (u2) to (v2);
	\draw[->] (v1) to[in=240, out=210, looseness=2.8] (u2);
	\draw[->,dashed] (v1) to[bend right] (w1);
	\draw[->] (w1) to[bend right] (v1);
	\draw[->] (v2) to node[above, midway] {\small $1$} (w2);
\end{tikzpicture}
\hspace{1.5cm}
\begin{tikzpicture}[>=stealth', shorten >=1pt, auto,
    node distance=1cm, 
    transform shape, align=center, 
    bank/.style={circle, draw, inner sep = 1}]
    \node[bank] (u1) at (0,0) {$u_1$};
	\node[bank] (v1) [right = of u1] {$v_1$};
	\node[bank] (u2) [above = of u1] {$u_2$};
	\node[bank] (v2) [above = of v1] {$v_2$};
	\node[bank] (w1) [right = of v1] {$w_1$};
	\node[bank] (w2) [right = of v2] {$w_2$};
	\node[rectangle, draw=black, inner sep=2.5pt] (xv2) [above=0.1cm of v2] {$1$};
	\draw[->] (u1) to (v2);
	\draw[->] (u2) to node[below=2pt, pos=0.1] {\small $1$} (v1);
	\draw[->] (v1) to[in=240, out=210, looseness=2.8] node[below=2pt, at start] {\small $1$} (u2);
	\draw[->,dashed] (v1) to[bend right] node[below, midway] {\small $1$} (w1);
	\draw[->] (w1) to[bend right] node[above, midway] {\small $1$} (v1);
	\draw[->] (v2) to node[above, midway] {\small $1$} (w2);
\end{tikzpicture}
\caption{The figures visualize a financial network $\F$  before (left) and $\sig{\F}$ after a debt swap (right). All edges have unit weight and edge labels indicate payments. For $v_1$, the solid edge is the preferred outgoing edge. $v_2$ has external assets of 1 as shown in the node label.}
\label{fig:introEx}
\end{figure}

\begin{example}\label{ex:intro}\rm
Consider the initial financial network in Fig.~\ref{fig:introEx} (left) with edge-ranking allocation rules. Bank $v_1$ first pays off all debt to $u_2$ before directing payments to $w_1$, i.e, $\pi_{v_1}(1)=(v_1,u_2), \pi_{v_1}(2)=(v_1,w_1)$. All other banks pay towards their unique outgoing edge. Consider the clearing state. For bank $v_2$ the external assets of 1 equal her total liabilities $L_{v_2}$. Thus, $v_2$ can settle all her debt towards $w_2$. For the other edges, suppose there are positive payments on edge $(u_2,v_2)$. Then, by definition, $u_2$ has to have these assets available, so $v_1$ must pay this amount to $u_2$. By the same argument, $u_1$ must have positive total assets and pay them to $v_1$. However, $u_1$ obviously has no assets, and there are no payments on these edges. Due to the allocation rule of $v_1$, there are no payments in the cycle of $v_1$ and $w_1$, unless $v_1$ fully pays off $(v_1,u_2)$ first (for which she needs assets from $u_1$). The clearing state has a payment of 1 on $(v_2,w_2)$. The total assets are 1 for $v_2$ and $w_2$, and 0 for all others. 

Now consider the financial network in Fig.~\ref{fig:introEx} (right) after the swap of edges $(u_1,v_1)$ and $(u_2,v_2)$ to $(u_1,v_2)$ and $(u_2,v_1)$. Obviously, $v_2$ can still pay off her single outgoing edge. After the swap, there is a cycle including $v_1$ and $u_2$ over prioritized edges. When the payments on both edges are raised to 1, the fixed-point properties remain satisfied. Since $v_1$ pays off all debt on the edge with highest priority, she will direct any additional assets towards the edge of her second priority to $w_1$. Here a new cycle with $w_1$ over prioritized edges arises, and the fixed-point conditions allow to further raise payments along that cycle to 1. 

The additional payments after the swap result from the emergence of the cycle between $u_2$ and $v_1$. The total assets of $v_1,u_2$ and $w_1$ are increased to $a_{v_1}=2, a_{u_2}=1$ and $a_{w_1}=1$ after the swap. For the total assets of the other banks the swap is inconsequential. 
\hfill $\blacksquare$
\end{example}

A debt swap of two edges $e_1$ and $e_2$ maintains the local network properties of the debtors $u_1$ and $u_2$. It has no effect on the set of liability values. Moreover, it does not impact their incoming edges. As such, we are interested in the gains and losses in total assets of the creditors $v_1$ and $v_2$, since (arguably) they are the key actors to implement a debt swap. The following classification of debt swaps slightly extends the one in~\cite{PappDebtSwapping}. Observe that the debt swap explained in Example~\ref{ex:intro} is semi-positive and Pareto-improving.

\begin{definition}[Swap Types]\label{def:types-of-swaps}
A debt swap $\sigma$ of edges $e_1$ and $e_2$ is called (1) \emph{positive} if $\sig{a}_{v_1} > a_{v_1}$ and $\sig{a}_{v_2} > a_{v_2}$, (2) \emph{semi-positive} if $\sig{a}_{v_1} \geq a_{v_1}$ and $\sig{a}_{v_2} \geq a_{v_2}$ and exactly one in\-equality is strict, (3) \emph{Pareto-improving} if $\sig{a}_{v} \geq a_{v}$ for all $v\in V$ and at least one inequality is strict.
\end{definition}

\section{Characterization}\label{sec:characterization}
To reason about debt swaps, we first provide a structural analysis of the clearing payments before and after debt swaps. We obtain a characterization of semi-positive and positive swaps. The analysis substantially generalizes techniques from~\cite{PappDebtSwapping} and yields two major insights. First, there are no positive swaps. Second, every semi-positive debt swap yields a Pareto-improvement in the \emph{entire network}, i.e., no bank in the financial network is harmed while at least one bank strictly profits. Thus, semi-positive swaps can only improve the recovery rate of each bank (i.e., the ratio of available assets over liabilities). Consequently, the payment assigned to each debt contract does not decrease. Interestingly, the converse also holds: Every Pareto-improving debt swap is semi-positive. 

Similar to~\cite{PappDebtSwapping}, we use an auxiliary network with the same payments as the resulting network after the debt swap. In this auxiliary network, we replace the swap edges with auxiliary sources and sinks with appropriate assets representing the payments on the swap edges. For a contradiction, we assume that the debt swap is positive. By tracing the payments in the auxiliary network, we show that payments in the original network before the swap did not correspond to the maximum clearing state. This contradiction shows that 
positive swaps cannot exist. As a rather direct consequence from this characterization, we obtain that semi-positive swaps lead to a Pareto-improvement in the entire network.

The main distinction from the proofs in~\cite{PappDebtSwapping} for proportional payments is that these payments have limited structure in the auxiliary network and, as such, are easier to handle. Instead, our proofs rely only on monotonicity of the allocation rule. By arguing about a broad range of possibilities and dependencies of potential payments, we can show that these results indeed apply to all financial networks with \emph{any monotone allocation rule}. 

\subsection{Structural Properties}
Monotonicity of the allocation rule implies that when the total assets of bank $v$ increase, then payments on each $e \in E^+(v)$ do not decrease. Intuitively, therefore, every clearing state of a financial network can be decomposed into two clearing states of appropriate networks. We proceed in two steps: First, each bank $v$ is assigned reduced external assets $\hat{a}^x_v \leq a^x_v$. This results in a new financial network, for which we get the first clearing state. In this state, $v$ (partially) pays her liabilities. For the second step, we reduce each liability by the payment in the first step, and $v$ receives external assets of $a^x_v-\hat{a}^x_v$. This again results in a new financial network, for which we get the second clearing state. The sum of the two clearing states is the clearing state of the original network.

This separability condition can be used to show that increasing the external assets of one bank Pareto-improves the total assets of every bank.  

\begin{restatable}[Monotonicity]{lemma}{lemMonotonicity}
Suppose two financial networks $\F$ and $\hat{\F}$ differ only in the external assets of a single bank $v$ such that $a^x_v > \hat{a}^x_v$. For every bank $u \in V$, the total assets in the clearing state satisfy $a_u \geq \hat{a}_u$.
\end{restatable}

The payments in a financial network emerge due to a combination of external assets and cyclic money flow. Clearly, a sink node $t$ (with $E^+(t) = \emptyset$) is not included in any cycle. Thus, all incoming payments of a sink node originated from external assets of other banks. 

Indeed, we can show that the increase of $\Delta>0$ in external assets of a source node $s$ (with $E^-(s) = \emptyset$) raises the total assets of all sinks in sum by at most $\Delta$. 

\begin{restatable}[Non-Expansivity]{lemma}{lemNonExpan}\label{lem:non-expansivity}
For a given financial network, let $T\subset V$ be the set of all sink nodes. If the external assets of a source $s\in V \setminus T$ are increased to $a'^x_s = a^x_s + \Delta$, then the total assets of all sinks increase by at most $\Delta$, i.e., $\sum_{t \in T} a'_t \leq \Delta + \sum_{t \in T} a_t$.
\end{restatable}

A similar property can be shown if $s$ is not a source bank. The argument uses an auxiliary construction that externalizes the incoming payments of bank $s = v$ by splitting $v$ into two separate nodes $s_v$ and $t_v$. The sink $t_v$ holds the same incoming edges as $v$ in the original graph. Similarly, source $s_v$ has the same outgoing edges as $v$. Setting the external assets of $s_v$ to the total assets $a_v$ of $v$ in the clearing state ensures that the payments of $s_v$ and $v$ are equal. We can then verify that the modified network has the same clearing state as the original network, and, in particular, $t_v$ has the same incoming payments as $v$. Clearly, the nodes representing $v$ are not contained in any cycle.

\begin{figure}[t]
\begin{subfigure}[t]{0.32\textwidth}
    \centering
    \resizebox{!}{1.1\textwidth}{
    \begin{tikzpicture}[>=stealth', shorten >=1pt, auto,
    node distance=1cm, scale=1, 
    transform shape, align=center, 
    bank/.style={circle, draw, minimum size=2.5pt}]
    \node[bank] (u1) at (0,0) {$u_1$};
	\node[bank] (v1) [below = of u1] {$v_1$};
	\node[bank] (u2) [right = of u1] {$u_2$};
	\node[bank] (v2) [right = of v1] {$v_2$};
	\node[bank] (w1) [below = of v1] {$w_1$};
	\node[bank] (w2) [below = of v2] {$w_2$};
	\node[rectangle, draw=black, inner sep=2.5pt] (xv2) [right=0.1cm of v2] {$1$};
	\draw[->] (u1) to (v2);
	\draw[->] (u2) to node[below=2pt, pos=0.1] {\small $1$} (v1);
	\draw[->] (v1) to[in=120, out=150, looseness=2] node[above=2pt, midway] {\small $1$} (u2);
	\draw[->,dashed] (v1) to[bend right] node[right, midway] {\small $1$} (w1);
	\draw[->] (w1) to[bend right] node[right, midway] {\small $1$} (v1);
	\draw[->] (v2) to node[right, midway] {\small $1$} (w2);
\end{tikzpicture}}
\end{subfigure}
\begin{subfigure}[t]{0.32\textwidth}
    \centering
    \resizebox{!}{\textwidth}{
    \begin{tikzpicture}[>=stealth', shorten >=1pt, auto,
        node distance=0.5cm, scale=1, 
        transform shape, align=center, 
        bank/.style={circle, draw, inner sep=0.1cm}]
        \node[bank] (u1) at (0,0) {$u_1$};
    	\node[circle, draw=black, inner sep=0.05cm] (tv1) [below = of u1] {$t_{v_1}$};
    	\node[circle, draw=black, inner sep=0.05cm] (sv1) [below = of tv1] {$s_{v_1}$};
    	\node[bank] (u2) [right = of u1] {$u_2$};
    	\node[circle, draw=black, inner sep=0.05cm] (tv2) [below = of u2] {$t_{v_2}$};
    	\node[circle, draw=black, inner sep=0.05cm] (sv2) [below = of tv2] {$s_{v_2}$};
    	\node[bank] (w1) [below = of sv1] {$w_1$};
    	\node[bank] (w2) [below = of sv2] {$w_2$};
    	\node[rectangle, draw=black, inner sep=2.5pt] (xv1) [left=0.15cm of sv1] {$2$};
    	\node[rectangle, draw=black, inner sep=2.5pt] (xv2) [right=0.1cm of sv2] {$1$};
    	\draw[->] (u1) to (tv2);
    	\draw[->] (u2) to node[right, pos=0.3] {\small $1$} (tv1);
    	\draw[->] (sv1) to[in=140, out=150, looseness=1.9] node[left, midway] {\small $1$} (u2);
    	\draw[->,dashed] (sv1) to[bend right] node[right, midway] {\small $1$} (w1);
    	\draw[->] (w1) to[bend right] node[right, midway] {\small $1$} (tv1);
    	\draw[->] (sv2) to node[right, midway] {\small $1$} (w2);
    \end{tikzpicture}}
\end{subfigure}
\begin{subfigure}[t]{0.32\textwidth}
    \centering
    \resizebox{!}{\textwidth}{
    \begin{tikzpicture}[>=stealth', shorten >=1pt, auto,
    node distance=0.5cm, scale=1, 
    transform shape, align=center, 
    bank/.style={circle, draw, inner sep=0.1cm}]
    \node[bank] (u1) at (0,0) {$u_1$};
    \node[circle, draw=black, inner sep=0.05cm] (tu1) [below = of u1] {$t_{u_1}$};
	\node[bank] (v1) [below = of tu1] {$v_1$};
	\node[bank] (u2) [right = of u1] {$u_2$};
	\node[circle, draw=black, inner sep=0.05cm] (tu2) [below = of u2] {$t_{u_2}$};
	\node[bank] (v2) [right = of v1] {$v_2$};
	\node[bank] (w1) [below = of v1] {$w_1$};
	\node[bank] (w2) [below = of v2] {$w_2$};
	\node[rectangle, draw=black, inner sep=2.5pt] (xv1) [left=0.15cm of v1] {$1$};
	\node[rectangle, draw=black, inner sep=2.5pt] (xv2) [right=0.1cm of v2] {$1$};
	\draw[->] (u1) to (tu1);
	\draw[->] (u2) to node[right, midway] {\small $1$} (tu2);
	\draw[->] (v1) to[in=140, out=150, looseness=1.9] node[left, midway] {\small $1$} (u2);
	\draw[->,dashed] (v1) to[bend right] node[right, midway] {\small $1$} (w1);
	\draw[->] (w1) to[bend right] node[right, midway] {\small $1$} (v1);
	\draw[->] (v2) to node[right, midway] {\small $1$} (w2);
\end{tikzpicture}}
\end{subfigure}
\caption{The network on the left corresponds to the network $\sig{\F}$ in Example~\ref{ex:intro}. Black edge labels indicate payments. The middle network results from applying source-sink equivalency to $v_1$ and $v_2$, whereas the right network results from applying source-sink equivalency for edges $(u_2,v_1)$ and $(v_2,u_1)$.}
\label{fig:source-sink-eq}
\end{figure}

\begin{definition}
    For a given financial network $\F$ with bank $v$, we define an adjusted network $\HH$ as follows. We replace $v$ by a source node $s_v$ and a sink node $t_v$. Sink $t_v$ has external assets 0 and the same set of incoming edges as $v$, i.e., $\cre_{\HH}(e) = t_v$ for every $e \in E^-(v)$ and ${E_{\HH}}^-(t_v) = E^-(v)$. Source $s_v$ has external assets $a_v$ and the same outgoing edges as $v$, i.e., $\deb_{\HH}(e) = s_v$ for every $e \in E^+(v)$, and ${E_{\HH}}^+(s_v)=E^+(v)$. Source $s_v$ in $\HH$ uses the allocation rule of $v$ in $\F$.
\end{definition}
An example construction can be seen in Fig.~\ref{fig:source-sink-eq}, where banks $v_1$ and $v_2$ are replaced by sources and sinks. 
The next lemma shows that the clearing state in the auxiliary network is equivalent to the one in the original network.

\begin{restatable}[Source-Sink-Equivalency]{lemma}{lemSourceSinkEq}\label{lem:source-sink-eq}
For a given financial network $\F$ and an adjusted network $\HH$ it holds that
\begin{enumerate}
    \item[(1)] the networks have equivalent clearing states, i.e., $p^{\HH}_e = p_e$ for all $e \in E$.
    \item[(2)] the sink $t_v$ has total assets $a_v - a^x_v$.
\end{enumerate}
\end{restatable}

A very similar construction can be used to externalize the payments on an edge. The right image in Fig.~\ref{fig:source-sink-eq} depicts an example construction.

\begin{corollary}[Source-Sink-Equivalency for Edges]\label{cor:sink-source-eq}
For a given financial network $\F$ with edge $e$, let $u = \deb(e)$ and $v = \cre(e)$. Add the sink $t_v$ and set the target of $e$ to $\cre(e) = t_v$. Increase the external assets of $v$ by $p_e$. Then, the resulting financial network $\G$ has the same clearing state as $\F$, i.e., $p^{\G}_e = p_e$, for all $e \in E$.
\end{corollary}

\subsection{Existence of Debt Swaps}\label{sec:ex}

Example~\ref{ex:intro} shows that semi-positive and Pareto-improving debt swaps can exist, for both proportional and edge-ranking rules.

\begin{observation}\label{obs:swap_existence}
    For proportional payments and edge-ranking rules, financial networks with semi-positive and Pareto-improving debt swaps exist.
\end{observation}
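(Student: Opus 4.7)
The plan is to exhibit a single small network that demonstrates both cases. For edge-ranking rules, Example~\ref{ex:intro} already provides such an instance: the swap of $(u_1,v_1)$ and $(u_2,v_2)$ strictly increases the total assets of the creditor $v_1$ from $0$ to $2$, while the total assets of the other creditor $v_2$ remain equal to $1$. Hence the swap is semi-positive, and by Theorem~\ref{thm:semi-swap-eq-Pareto-impro} it is also Pareto-improving. I would therefore just quote this example for the edge-ranking case.

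For the proportional case, I would reuse the very same graph and external assets as in Example~\ref{ex:intro}, only replacing the edge-ranking rule at $v_1$ by the proportional rule. The only nontrivial step is to verify the maximum clearing states before and after the swap. Before the swap, the only cycle involving $v_1$ is $v_1\to w_1\to v_1$, and it receives no external input: writing $x$ for the circulating payment and using that under proportional allocation $v_1$ splits its assets equally over its two unit liabilities (total liability $L_{v_1}=2$), the fixed-point relations give $x=\min(x/2,1)$, which forces $x=0$. Consequently $a_{v_1}=a_{u_2}=a_{w_1}=0$, while $a_{v_2}=1$ from its external endowment. After the swap, the new edge $(u_2,v_1)$ together with $(v_1,u_2)$ creates a second cycle through $v_1$, and the maximum fixed point can sustain unit flow on both cycles simultaneously: $v_1$ receives $1$ from each of $u_2$ and $w_1$, holds total assets $2$, and under proportional payments fully pays both outgoing unit liabilities. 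Thus $a_{v_1}$ jumps from $0$ to $2$ while $a_{v_2}=1$ is preserved, so the swap is semi-positive. Pareto-improvement then follows from Corollary~\ref{cor:no-pos-Pareto}.

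The main care needed is the fixed-point analysis of the pre-swap network under proportional payments: each traversal of the cycle $v_1\to w_1\to v_1$ shrinks the flow by a factor of $1/2$ because $v_1$ proportionally splits its incoming assets across two unit liabilities, and, in the absence of any external anchor feeding $v_1$ or $w_1$, this forces the cycle payment to collapse to $0$. Once this observation is in place, the comparison of total assets before and after the swap is immediate, and the observation follows for both allocation rules from the same underlying construction.
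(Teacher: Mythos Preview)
Your proposal is correct and follows essentially the same approach as the paper: the paper's entire justification for this observation is the single sentence preceding it, which points to Example~\ref{ex:intro} as witnessing semi-positive and Pareto-improving swaps for both edge-ranking and proportional rules. You do the same, with the added (and welcome) rigor of explicitly verifying the clearing states under proportional payments, which the paper leaves implicit.
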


Our main result in this section shows that every financial network with a monotone allocation rule has no positive debt swap. 

\begin{restatable}{theorem}{thmNoPositive}\label{thm:no-pos-swap}
    A financial network with monotone allocation rules allows no positive debt swap.
\end{restatable}

\begin{figure}[t]
\begin{subfigure}[t]{0.32\textwidth}
\centering
\resizebox{!}{0.5\textwidth}{
\begin{tikzpicture}[>=stealth', shorten >=1pt, auto,
    node distance=1cm, scale=1, 
    transform shape, align=center, 
    bank/.style={circle, draw, inner sep=0.1cm}]
    \node[bank] (u1) at (0,0) {$u_1$};
	\node[bank] (v1) [below = of u1] {$v_1$};
	\node[bank] (u2) [right = of u1] {$u_2$};
	\node[bank] (v2) [below = of u2] {$v_2$};
	\node[rectangle, draw=black, inner sep=2.5pt] (xv1) [left=0.1cm of v1] {\small{$a^x_{v_1}$}};
	\node[rectangle, draw=black, inner sep=2.5pt] (xv2) [right=0.1cm of v2] {\small{$a^x_{v_2}$}};
    \draw[->] (u1) to  node[right, midway] {\small $p_{e_1}$} (v1);
	\draw[->] (u2) to node[right, midway] {\small $p_{e_2}$} (v2);
	\draw[->] (-0.6,0.6) to (u1);
	\draw[->] (0.6,0.6) to (u1);
	\draw[->] (1.1,0.6) to (u2);
	\draw[->] (2.3,0.6) to (u2);
	\draw[->] (-0.6,-1) to (v1);
	\draw[->] (0.6,-1) to (v1);
	\draw[->] (1.1,-1) to (v2);
	\draw[->] (2.3,-1) to (v2);
	\draw[->] (v1) to (-0.6,-2.4);
	\draw[->] (v1) to (0.6,-2.4);
	\draw[->] (v2) to (1.1,-2.4);
	\draw[->] (v2) to (2.3,-2.4);
\end{tikzpicture}}
\end{subfigure}
\begin{subfigure}[t]{0.32\textwidth}
\centering
\resizebox{!}{0.75\textwidth}{
\begin{tikzpicture}[>=stealth', shorten >=1pt, auto,
    node distance=1cm, scale=1, 
    transform shape, align=center, 
    bank/.style={circle, draw, inner sep=0.1cm}]
    \node[bank] (u1) at (0,0) {$u_1$};
	\node[circle, draw=black, inner sep=0.05cm] (tu1) [below = of u1] {$t_{u_1}$};
	\node[bank] (v1) [below = of tu1] {$v_1$};
	\node[bank] (u2) [right = of u1] {$u_2$};
	\node[circle, draw=black, inner sep=0.05cm] (tu2) [below = of u2] {$t_{u_2}$};
	\node[bank] (v2) [below = of tu2] {$v_2$};
	\node[rectangle, draw=black, inner sep=2.5pt] (xv1) [left=0.1cm of v1] {\small{$a^x_{v_1}+p_{e_1}$}};
	\node[rectangle, draw=black, inner sep=2.5pt] (xv2) [right=0.1cm of v2] {\small{$a^x_{v_2}+p_{e_2}$}};
    \draw[->] (u1) to  node[right, midway] {\small $p_{e_1}$} (tu1);
	\draw[->] (u2) to node[right, midway] {\small $p_{e_2}$} (tu2);
	\draw[->] (-0.6,0.6) to (u1);
	\draw[->] (0.6,0.6) to (u1);
	\draw[->] (1.1,0.6) to (u2);
	\draw[->] (2.3,0.6) to (u2);
	\draw[->] (-0.6,-2.8) to (v1);
	\draw[->] (0.6,-2.8) to (v1);
	\draw[->] (1.1,-2.8) to (v2);
	\draw[->] (2.3,-2.8) to (v2);
	\draw[->] (v1) to (-0.6,-4);
	\draw[->] (v1) to (0.6,-4);
	\draw[->] (v2) to (1.1,-4);
	\draw[->] (v2) to (2.3,-4);
\end{tikzpicture}}
\end{subfigure}
\begin{subfigure}[t]{0.32\textwidth}
\centering
\resizebox{!}{0.9\textwidth}{
\begin{tikzpicture}[>=stealth', shorten >=1pt, auto,
    node distance=0.7cm, scale=1, 
    transform shape, align=center, 
    bank/.style={circle, draw, inner sep=0.1cm}]
    \node[bank] (u1) at (0,0) {$u_1$};
	\node[circle, draw=black, inner sep=0.05cm] (tu1) [below = of u1] {$t_{u_1}$};
	\node[circle, draw=black, inner sep=0.05cm] (tv1) [below = of tu1] {$t_{v_1}$};
	\node[circle, draw=black, inner sep=0.05cm] (sv1) [below = of tv1] {$s_{v_1}$};
	\node[bank] (u2) [right = of u1] {$u_2$};
	\node[circle, draw=black, inner sep=0.05cm] (tu2) [below = of u2] {$t_{u_2}$};
	\node[circle, draw=black, inner sep=0.05cm] (tv2) [below = of tu2] {$t_{v_2}$};
	\node[circle, draw=black, inner sep=0.05cm] (sv2) [below = of tv2] {$s_{v_2}$};
	\node[rectangle, draw=black, inner sep=2.5pt] (xv1) [left=0.1cm of sv1] {\small{$a_{v_1}$}};
	\node[rectangle, draw=black, inner sep=2.5pt] (xv2) [right=0.1cm of sv2] {\small{$a_{v_2}$}};
    \draw[->] (u1) to  node[right, midway] {\small $p_{e_1}$} (tu1);
	\draw[->] (u2) to node[right, midway] {\small $p_{e_2}$} (tu2);
	\draw[->] (-0.63,0.63) to (u1);
	\draw[->] (0.63,0.63) to (u1);
	\draw[->] (0.8,0.63) to (u2);
	\draw[->] (2,0.63) to (u2);
	\draw[->] (-0.63,-2.2) to (tv1);
	\draw[->] (0.63,-2.2) to (tv1);
	\draw[->] (0.8,-2.2) to (tv2);
	\draw[->] (2,-2.2) to (tv2);
	\draw[->] (sv1) to (-0.63,-4.8);
	\draw[->] (sv1) to (0.63,-4.8);
	\draw[->] (sv2) to (0.8,-4.8);
	\draw[->] (sv2) to (2,-4.8);
\end{tikzpicture}}
\end{subfigure}
\vspace{0.2cm}\\
\begin{subfigure}[t]{0.32\textwidth}
\centering
    $\F$
\end{subfigure}
\begin{subfigure}[t]{0.32\textwidth}
\centering
    $\G$
\end{subfigure}
\begin{subfigure}[t]{0.32\textwidth}
\centering
    $\HH$
\end{subfigure}
\caption{Schematic illustration of networks $\G$ and $\HH$ for a financial network $\F$.}
\label{fig:no-positiv-swap}
\end{figure}

To prove this result, we assume for a contradiction there exists a network $\F$ with a positive swap of edges $e_1$ and $e_2$. To analyze the payments in $\F$, source-sink-equivalency is applied to the swap edges. Let $\G$ denote the resulting network, see Fig.~\ref{fig:no-positiv-swap} for an illustration. To accommodate for the reduced incoming payments $a_{v_1} - a^x_{v_1} - p_{e_1}$ of $v_1$, the external assets are set to $a^x_{v_1} + p_{e_1}$. The external assets of $v_2$ are set analogously.  

The network $\HH$ is obtained by additionally applying source-sink equivalency to $v_1$ and $v_2$ in $\G$. Then, $t_{v_1}$ and $t_{v_2}$ receive incoming payments of $a_{v_1} - a^x_{v_1} - p_{e_1}$ and $a_{v_2} - a^x_{v_2} - p_{e_2}$, respectively. This construction has no impact on the payments in the clearing state. Analogously, the networks $\sig{\G}$ and $\sig{\HH}$ are constructed for the network $\sig{\F}$ after the debt swap. The networks $\G, \sig{\G}$ and $\HH, \sig{\HH}$ have the same cycles but differ in external assets, respectively.

The next technical lemma characterizes the effects of increasing external assets of $\svo$ and $\svt$ in $\HH$.

Let $\HH_{\Delta}$ be the resulting network when increasing the external assets of $\svo$ and $\svt$ in $\HH$ by $\delta_{\svo} \in [0, \atvo - \aovo]$ and $\delta_{\svt} \in [0, \atvt - \aovt]$, respectively. We use $\Delta_{w}$ to denote the increase in total assets of bank $w$, i.e., $\Delta_w = a^{\HH_{\Delta}}_w - a_w$. We use $\F_{\Delta}$ to denote the network when reversing the source-sink equivalency both for $t_{v_1}, s_{v_1}$ and $t_{v_2}, s_{v_2}$ as well as for $t_{u_1}$ and $t_{u_2}$.

\begin{restatable}{lemma}{lemNoPosSwapTwo}\label{lem:no-pos-swap2}
For $\HH_{\Delta}$ and $\F_{\Delta}$ the following are true:
\begin{enumerate}
    \item[(1)] $\delta_{\svo} + \delta_{\svt} = \Delta_{\tvo} + \Delta_{\tvt} + \Delta_{t_{u_1}} + \Delta_{t_{u_2}}$.
    \item[(2)] If $\delta_{\svo}> 0$ or $\delta_{\svt}>0$, then $\delta_{\svo} \neq \Delta_{\tvo} + \Delta_{t_{u_1}}$ and $\delta_{\svt} \neq \Delta_{\tvt} + \Delta_{t_{u_2}}$.
\end{enumerate}
\end{restatable}

\smallskip
Suppose there exists a positive debt swap for a financial network $\F$, in which the total assets of both creditor banks $v_1$ and $v_2$ strictly increase by $\delta_{v_1}>0$ and $\delta_{v_2}>0$. In $\HH$ the debt swap is simulated by increasing the total assets of source nodes $\svo$ and $\svt$ by $\delta_{v_1}$ and $\delta_{v_2}$, respectively. We will show that there exists a suitable set of values for $\delta_{v_1}$ and $\delta_{v_2}$ such that $v_1$ receives $\delta_{v_1}$ additional incoming payments. Intuitively, then, the pairs $u_1$, $v_1$ and $u_2, v_2$ are each contained in the same cycle. Hence, payments could be raised along the cycles, which contradicts maximality of the clearing state. This idea can be used to complete the proof of Theorem~\ref{thm:no-pos-swap}.

The theorem shows that both creditors participating in a debt swap cannot simultaneously strictly increase their total assets. In the next part, we consider the payments on the edges $e_1$ and $e_2$ before and after a generic swap $\sigma$. The first proposition shows they cannot both be strictly increased.

\begin{restatable}{proposition}{propNoSwapPosPaym}\label{prop:no-swap-pos-payments}
    For every financial network, there is no debt swap $\sigma$ such that $\sig{p}_{e_1} > p_{e_1}$ and $\sig{p}_{e_2} > p_{e_2}$.
\end{restatable}

\begin{restatable}{proposition}{propSemiSwapPaym}\label{prop:semi-swap-payments}
    For every semi-positive debt swap $\sigma$, we have either $\sig{p}_{e_1} > p_{e_1}$ and $\sig{p}_{e_2} \leq p_{e_2}$, or $\sig{p}_{e_2} > p_{e_2}$ and $\sig{p}_{e_1} \leq p_{e_1}$.
\end{restatable}
\smallskip

Towards semi-positive swaps, the total assets of $v_1$ and $v_2$ Pareto-improve. Perhaps surprisingly, the second main result of this section shows that a swap $\sigma$ is semi-positive if and only if it Pareto-improves the assets of every bank in the network. In particular, no bank in the financial network is harmed while at least one bank strictly profits. The proof of the theorem uses our previous results on source-sink-equivalency and an auxiliary network $\HH$.

\begin{restatable}{theorem}{thmPareto}\label{thm:semi-swap-eq-Pareto-impro}
  For every financial network with monotone allocation rules, a debt swap is semi-positive if and only if it is Pareto-improving.
\end{restatable}

\changed{For edge-ranking payments it is crucial that the ranking of debtors remains the same, even if, say, $u_1$ already had another existing edge to $v_2$ before the swap. Suppose that after the swap, the swap edge would get merged with the existing one to the better position in the ranking. Then creditors of edges ranked in between could suffer from lower payments after the swap. Such a swap with subsequent merge could be semi-positive without being Pareto-improving. In contrast, for proportional payments a subsequent merge of edges is unproblematic, since the payments towards each creditor are determined by the overall proportion of debt but not the splitting into individual edges.}

\section{Debt Swap Dynamics}\label{sec:debt-swap-dyn}
\label{sec:local}
When two creditors perform a debt swap, the structure of the underlying graph changes. This might enable further swaps which potentially further increase the assets of the involved banks. In this way, the swap operation naturally gives rise to \emph{dynamics} where, in each step, a pair of creditors agree to perform a swap in the current network. Conceptually, semi-positive swaps are very attractive -- they are equivalent to Pareto-improving swaps and improve the situation not only for the creditors but for the entire network. 

In this section, we consider natural distributed dynamics of debt swaps. Given an initial financial network with monotone allocation rules, our goal is to find a small number of semi-positive debt swaps to reach a local optimum, i.e., a network structure without semi-positive swaps. We denote this as the \SequenceSwapSP\ problem. While all banks (weakly) profit from a semi-positive swap, we are interested in semi-positive swaps that are $v$-improving, i.e., \emph{strictly increase the assets of a given bank $v$}. This is natural, e.g., when $v$ is a creditor that profits strictly in every such swap, or $v$ is a bank that is ``too big to fail'' and should strictly profit from network changes. We denote the problem with semi-positive $v$-improving swaps by \SequenceSwapSPv. Given a financial network, can we reach a local optimum by a sequence of polynomially many semi-positive $v$-improving swaps? 

We consider this question for the broad class of edge-ranking rules. In a given network with $m$ edges, there are at most $\genfrac(){0pt}{1}{m}{2}$ edge swaps. Recall that we can compute the clearing state for edge-ranking rules in strongly polynomial time~\cite{BertschingerStrategicPayments}. Hence, for a given network, we can also enumerate all semi-positive debt swaps in strongly polynomial time.

\subsection{Sequences of Semi-Positive Swaps}

\paragraph{Characterization} Consider a bank $v$ with edge-ranking rule. $v$ pays her outgoing edges in the order of a permutation $\pi_v$ until she runs out of funds. Assume $v$ cannot settle all debt. Let $e$ denote the unique highest-ranked outgoing edge of $v$ that is not fully paid. We call $e$ the \emph{active edge} of $v$. Clearly, all edges $e'$ with higher priority than $e$ are fully paid for, i.e., they are \emph{saturated}. Additionally, all edges $e'$ with lower priority than $e$ receive no payments. Note that the set of all active edges is \emph{acyclic} -- we can strictly increase the feasible payments along any cycle, which is impossible when the clearing state has maximal feasible payments.

Since the active outgoing edge of every bank is unique, the subgraph of active edges is simple and never contains multi-edges. When analyzing this subgraph, we use the notation for simple graphs, i.e., $e=(u,v)$ if $\deb(e)=u$ and $\cre(e)=v$, for $e\in E$. More generally, we will use the standard notation $(u,v)$ whenever it unambiguously describes one edge.

A semi-positive swap Pareto-improves the total assets of all banks. It can only increase the number of saturated edges. If a semi-positive swap saturates at least one additional edge, we call it a \emph{saturating swap}. As a second class, \emph{extension swaps} are non-saturating swaps with an interesting structural property. \changed{Recall that $v_1= \cre(e_1)$ and $v_2=\cre(e_2)$ before the trade.}

\begin{definition}[Extension Swap]
\label{def:extension-swap}
Consider a semi-positive debt swap $\sigma$ of edges $e_1$ and $e_2$ with $p_{e_1}<p_{e_2}$. The swap is called \emph{extension swap} if (1) no additional edge gets saturated and (2) in $\F$ there is an \emph{active path} $P$, i.e., a path of active edges from $v_1$ to $v_2$, (3) $l_e - p_e > p_{e_2}-p_{e_1}$ for every $e \in P$, (4) in the clearing state $p_{e_1} = \sig{p}_{e_2}$ and $p_{e_2} = \sig{p}_{e_1}$.
\end{definition}

In an extension swap, creditors $v_1$ and $v_2$ swap two edges $e_1$ and $e_2$ with different payments, where w.l.o.g.\ we assume $p_{e_2}>p_{e_1}$. The swap does not create a cycle of active edges, this would yield a new saturated edge along this cycle. Therefore, $v_1$ receives in $\sig{\F}$ additional incoming assets of $\delta_{v_1} = p_{e_2}-p_{e_1}>0$. Since there exists a path $P$ from $v_1$ to $v_2$ of active edges, each with residual liability of more than $\delta_{v_1}$, all additional assets get routed to $v_2$. The total assets of $v_2$ after the swap remain the same (see Fig.~\ref{fig:extension-swap} for an example).

\begin{figure}[t]
\begin{subfigure}[t]{0.49\textwidth}
\centering
\begin{tikzpicture}[>=stealth', shorten >=1pt, auto,
    node distance=1cm, scale=1, 
    transform shape, align=center, 
    bank/.style={circle, draw, inner sep=0.1cm}]
    \node[bank] (u1) at (0,0) {$u_1$};
    \node[bank] (u2) [right= of u1] {$u_2$};
    \node[bank] (v1) [below= of u1] {$v_1$};
    \node[bank] (v2) [right= of v1] {$v_2$};
    \node[rectangle, draw=black, inner sep=2.5pt] (xu2) [right=0.15cm of u2] {$k$};
    \draw[->] (u1) -- (v1);
    \draw[->] (u2) -- node[right, midway] {$k$} (v2);
    \draw[->] (v1) to[bend right] (v2);
\end{tikzpicture}
\end{subfigure}
\begin{subfigure}[t]{0.49\textwidth}
\centering
\begin{tikzpicture}[>=stealth', shorten >=1pt, auto,
    node distance=1cm, scale=1, 
    transform shape, align=center, 
    bank/.style={circle, draw, inner sep=0.1cm}]
    \node[bank] (u1) at (0,0) {$u_1$};
    \node[bank] (u2) [right= of u1] {$u_2$};
    \node[bank] (v1) [below= of u1] {$v_1$};
    \node[bank] (v2) [right= of v1] {$v_2$};
    \node[rectangle, draw=black, inner sep=2.5pt] (xu2) [right=0.15cm of u2] {$k$};
    \draw[->] (u1) -- (v2);
    \draw[->] (u2) -- node[right=0.1, pos=0.3] {$k$} (v1);
    \draw[->] (v1) to[bend right] node[below, midway] {$k$} (v2);
\end{tikzpicture}
\end{subfigure}
\caption{Minimal example of an extension swap where all edges have weight $M>k$ and edge labels indicate payments in the clearing state. The swap extends the path of $u_2$'s payments to reach $v_2$, increasing $v_1$'s total assets of $v_1$ along the way.}
\label{fig:extension-swap}
\end{figure}

The next proposition proves an interesting and very useful characterization.

\begin{restatable}{proposition}{propExtension}\label{prop:extension-swaps}
For a given financial network with edge-ranking rules, every semi-positive debt swap is either a saturating or an extension swap.
\end{restatable}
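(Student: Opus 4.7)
The plan is to show that any semi-positive swap $\sigma$ which is not saturating satisfies all four conditions of an extension swap. Throughout, let $\delta := p_{e_2}-p_{e_1}$ and, for each common edge $e\in E\cap\sig{E}$, let $g_e := \sig{p}_e-p_e$. By Theorem~\ref{thm:semi-swap-eq-Pareto-impro} the swap is Pareto-improving, so $\sig{b}_u\geq b_u$ for every node $u$ and monotonicity of the edge-ranking rule yields $g_e\geq 0$. A crucial observation used repeatedly is that under the non-saturation hypothesis, $g_e>0$ forces $e$ to be the active edge of its source in both $\F$ and $\sig{\F}$, so the active edges in $E\cap\sig{E}$ coincide in the two networks.

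First, I would reduce to $p_{e_2}>p_{e_1}$. If $p_{e_1}=p_{e_2}$, the outgoing rankings of $u_1$ and $u_2$ are untouched and the clearing state remains unchanged unless the new edges $\sigma e_1,\sigma e_2$ create a fresh active cycle in $\sig{\F}$; raising payments along such a cycle up to its bottleneck saturates some edge, contradicting non-saturation. Thus WLOG $p_{e_2}>p_{e_1}$, which implies $e_1$ is either $u_1$'s active edge (when $0<p_{e_1}<\ell$) or below its active edge (when $p_{e_1}=0$), and $e_2$ is either $u_2$'s active edge or saturated.

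The heart of the proof, and what I expect to be the main obstacle, is property~(4): $\sig{p}_{\sigma e_1}=p_{e_2}$ and $\sig{p}_{\sigma e_2}=p_{e_1}$. These would follow from $\Delta_{u_1}=\Delta_{u_2}=0$ via $\sig{p}_{\sigma e_2}=f_{e_1}(\sig{b}_{u_1})=f_{e_1}(b_{u_1})=p_{e_1}$ and its symmetric analogue. When $e_1$ is below $u_1$'s active edge the equation is immediate (non-saturation of $u_1$'s true active edge keeps $f_{e_1}$ at $0$), and when $e_2$ is saturated the capacity cap forces $\sig{p}_{\sigma e_1}=\ell=p_{e_2}$. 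In the remaining sub-cases I would assume $\Delta_{u_1}>0$ and derive a contradiction by tracing a gain cascade along active edges back to one of the only two possible sources, $v_1$ or $v_2$: a cascade from $v_1$ combined with the active edge $u_1\to v_1$ (namely $e_1$) closes a cycle in $\F$'s active DAG, while a cascade from $v_2$ combined with the new active edge $u_1\to v_2$ in $\sig{\F}$ (namely $\sigma e_2$, which occupies $e_1$'s ranking slot at $u_1$) closes a cycle in $\sig{\F}$'s active DAG. Both contradict acyclicity; a symmetric case analysis for $u_2$ yields $\Delta_{u_2}=0$.

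Finally, with property~(4) established, the swap amounts to a net gain injection of $+\delta$ at $v_1$ and $-\delta$ at $v_2$. Because each node has a unique active outgoing edge under ranking, the $\delta$ produced at $v_1$ traces a unique chain of active edges in $\F$; if this chain terminated anywhere other than $v_2$, then $v_2$'s deficit could not be covered and $\Delta_{v_2}<0$ would violate Pareto-improvement. Hence the chain is an active path $P$ from $v_1$ to $v_2$, establishing property~(2), and the edgewise gain $g_e=\delta$ along $P$ combined with non-saturation $\sig{p}_e<l_e$ gives the strict residual bound $l_e-p_e>\delta$ of property~(3).
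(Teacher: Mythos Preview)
Your argument is correct and takes a genuinely different route from the paper's proof. The paper works in the auxiliary source--sink network $\HH$ obtained by externalising both swap edges and the creditors $v_1,v_2$; it then invokes Lemma~\ref{lem:no-pos-swap1}(3) (the $(t_{u_1},t_{u_2},t_{v_1},t_{v_2})$-linearity of $s_{v_1}$) together with Lemma~\ref{lem:no-pos-swap2}(2) to conclude that all of the extra assets injected at $s_{v_1}$ travel along a single active path, rule out $t_{u_2}$ as its endpoint by an active-cycle argument in $\sig{\F}$, and read off properties~(2)--(4) from the resulting path to $t_{v_2}$. You instead stay in the original networks and analyse the nonnegative gain vector $g_e=\sig{p}_e-p_e$ directly: Pareto-improvement (Theorem~\ref{thm:semi-swap-eq-Pareto-impro}) gives $g_e\ge 0$, non-saturation pins every edge with $g_e>0$ to the active forest of both $\F$ and $\sig{\F}$, and the backward trace plus the two cycle contradictions (one in $\F$'s active DAG, one in $\sig{\F}$'s) yield~(4) without ever building $\HH$. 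The forward chain from $v_1$ then gives~(2) and~(3). Your approach is more elementary and self-contained---it uses only Theorem~\ref{thm:semi-swap-eq-Pareto-impro} and the in-tree structure of active edges---while the paper's is shorter on the page because it recycles the machinery already developed for Theorem~\ref{thm:no-pos-swap}. Two places deserve a sentence more of care: your $p_{e_1}=p_{e_2}$ reduction (``unchanged unless a fresh active cycle'') is really the same cycle argument you later use for~(4), and in the final step you should note explicitly that the forward chain cannot terminate at a fully-solvent node $\neq v_2$ (else $\Delta_{v_2}=-\delta<0$), which your phrasing ``anywhere other than $v_2$'' does cover but only implicitly.
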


\begin{proof}
Consider a semi-positive swap $\sigma$ with edges $e_1,e_2$ in a financial network $\F$. W.l.o.g.\ assume $\sig{a}_{v_1} > a_{v_1}$ and $\sig{a}_{v_2} = a_{v_2}$. Suppose $\sigma$ is not a saturating swap. We show that $\sigma$ is an extension swap by proving properties (1)-(4) of the definition.

Property (1) is trivial. As a consequence, $\sigma$ does not close a new cycle of active edges; otherwise, since the set of active edges in $\sig{\F}$ is acyclic, an additional edge is saturated after the swap.

Construct the auxiliary network $\HH$ by applying source-sink equivalency to edges $e_1,e_2$ and nodes $v_1,v_2$ and simulate the debt swap by increasing the external assets of $s_{v_1}$ and $s_{v_2}$ by $\delta_{v_1}=\sig{a}_{v_1}-a_{v_1}$ and $\delta_{v_2}=0$, respectively. Due to Lemma~\ref{lem:no-pos-swap1} (3), $s_{v_1}$ routes the entire additional assets of $\delta_{v_1}$ to $\tvo,\tvt,t_{u_1}$ and $t_{u_2}$. Every bank has at most one active outgoing edge. Hence, before the swap there is exactly one path $P = \{(s_{v_1},w_1),(w_1,w_2),\ldots \}$ of active edges that ends in exactly one of the four sinks. Since no edge gets saturated, $s_{v_1}$ forwards all additional assets along $P$ to that sink. 

This sink must be either $\tvt$ or $t_{u_2}$ due to Lemma~\ref{lem:no-pos-swap2}, (2). Suppose it is $t_{u_2}$. Then $(u_2,v_2)$ is not active in $\F$, since otherwise $(P \setminus \{(s_{v_1},w_1)\}) \cup \{ (v_1,w_1), (u_2,v_1)\}$ would be a cycle of active edges in $\sig{\F}$. However, if $(u_2,v_2)$ is not active in $\F$, then neither is $(u_2,t_{u_2})$ in $\HH$. Hence, $t_{u_2}$ cannot be the sink of $P$.

As a consequence, $\tvt$ must be the sink of $P$. This implies existence of the active path and shows property (2). Since $t_{u_1}$ and $t_{u_2}$ are not involved in $P$ and there is no new cycle of active edges, these sinks receive the same assets as before the swap. This implies the same payments on swapped edges, $p_{e_1} = \sig{p}_{e_2}$ and $p_{e_2} = \sig{p}_{e_1}$. As such, property (4) holds.

Since there are no new cycles of active edges, we see that $\delta_{v_1} = \sig{p}_{e_1} - p_{e_1} = p_{e_2} - p_{e_1}$, so indeed $p_{e_1} < p_{e_2}$. Thus, $v_1$ gains additional assets of $\delta_{v_1}$ from the swapped edge. Similarly, $v_2$ looses assets of $\delta_{v_1}$ from the swapped edge. Since $\delta_{v_2} = 0$, this implies that in $\HH$ all additional assets of $\delta_{v_1}$ must get routed completely from $s_{v_1}$ along $P$ to $\tvt$. Since no additional edge gets saturated, we must have $\ell_e - p_e > p_{e_2} - p_{e_1}$, i.e., property (3) follows. As a conclusion, $\sigma$ must be an extension swap.
\end{proof}

We further classify extension swaps based on the activation status of the involved edges.
\begin{definition}[Classes of Extension Swaps]
    \label{def:active-extension-swap}
    Consider extension swap $\sigma$ of edges $e_1$ and $e_2$ with $p_{e_1} < p_{e_2}$. In a \emph{non-active} swap both edges are nonactive. In a \emph{semi-active} swap, either $e_1$ or $e_2$ is active. In an \emph{active} swap, both $e_1$ and $e_2$ are active.
\end{definition}
Observe that in a non-active swap, the edges $e_1$ and $e_2$ are each either fully saturated or receive no payments. Since $p_{e_1}<p_{e_2}$, we have both $p_{e_1}=0$ and $p_{e_2} = l_{e_2}$. In a semi-active swap, $p_{e_1} = 0$ or $p_{e_2} = l_{e_2}$. Finally, in an active swap, $0 \le p_{e_1} < p_{e_2} < l_{e_2}$.

\paragraph{Periods, Phases, Active Extension Swaps}
We now discuss \changed{the construction of} short sequences of swaps and start with saturating swaps.

\begin{observation}
    \label{obs:saturating}
    Given an initial financial network with edge-ranking rules and $m$ edges, every sequence of semi-positive debt swaps can contain at most $m$ saturating ones.
\end{observation}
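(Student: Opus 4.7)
The plan is to combine two facts established earlier in the paper: (i) by Theorem~\ref{thm:semi-swap-eq-Pareto-impro}, every semi-positive debt swap is Pareto-improving for the entire network, so along any sequence of semi-positive swaps the total assets $a_u$ of every bank $u$ are weakly monotone non-decreasing; and (ii) under an edge-ranking rule, the set of saturated outgoing edges of a bank $u$ is a weakly monotone function of $a_u$, since $u$ pays her outgoing liabilities greedily in the order of her fixed permutation $\pi_u$.

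First I would observe that a debt swap $\sigma$ only redirects the targets of the two swapped edges; for each debtor $u_1, u_2$ involved, the list of outgoing liabilities and the priority ranking $\pi_{u_i}$ are unchanged, and for every other bank the outgoing edges are completely untouched. Hence the predicate ``the outgoing edge at rank $k$ of bank $u$ is saturated'' is well-defined both before and after any swap, and it is equivalent to the inequality $\sum_{j \le k} l_{\pi_u(j)} \le a_u$. Combined with (i) and (ii), this implies that for every bank $u$ the set of saturated rank positions at $u$ can only grow along the sequence, and therefore the global number of saturated edges, summed over all banks, is weakly monotone non-decreasing.

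Finally, by the definition of a saturating swap (a semi-positive swap in which at least one additional edge becomes saturated), each such swap strictly increases the global count of saturated edges by at least one. Since this count lies between $0$ and the total number of edges $m$, any sequence of semi-positive swaps can contain at most $m$ saturating ones, establishing the observation.

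There is no real obstacle in the argument; the only point to handle with some care is the identification of ``edges'' across a swap, which is why I insist on phrasing saturation from the debtor's perspective in terms of rank positions, so that the claim ``once saturated, always saturated'' is literally true and the counting argument goes through.
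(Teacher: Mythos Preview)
Your proof is correct and follows the same approach implicit in the paper: semi-positive swaps are Pareto-improving (Theorem~\ref{thm:semi-swap-eq-Pareto-impro}), hence assets are non-decreasing, hence the set of saturated edges can only grow, and with $m$ edges in total there can be at most $m$ saturating swaps. Your additional care in phrasing saturation via rank positions at each debtor (so that the notion is stable under the relabeling of edge endpoints caused by a swap) is a useful clarification that the paper leaves implicit.
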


We define a \emph{period} as a sequence of consecutive extension swaps. Every sequence of semi-positive swaps has at most $m+1$ periods. We turn attention to non-active and semi-active extension swaps. Every swap has exactly one non-profiting creditor. We classify non-active and semi-active swaps in a period based on this bank.
\begin{lemma}
    \label{lem:non-semi-active}
    For every $v \in V$, every single period can contain at most $2 \cdot |E^-(v)|$ many non-active or semi-active extension swaps such that $v$ is the non-profiting creditor.
\end{lemma}

\begin{proof}
    For each semi-active swap of this kind, one of two possible events occurs: Either (1) an incoming non-active saturated edge of $v$ is changed to an active (non-saturated) one, or (2) an incoming active edge of $v$ with positive payment is changed to a non-active one with 0 payment. In a non-active swap, a sequential composition of both events occurs: An incoming non-active saturated edge of $v$ is changed directly to a non-active one with 0 payments.
     
    In a semi-positive swap, the payment on every edge can only weakly increase. No edge gets saturated (this would end the period). Hence, no active edge can become non-active (neither saturated, nor with 0 payments). Events (1) and (2) cannot re-occur for an incoming edge of $v$. There are at most $2 \cdot |E^-(v)|$ events for node $v$ in the period, and at most $2 \cdot |E^-(v)|$ many non-active or semi-active extension swaps with $v$ as the non-profiting creditor.
\end{proof}

Hence, from an initial network with $m$ edges, there are at most $O(m^2)$ saturating, non-active, or semi-active extension swaps. The challenge lies in bounding the number of active extension swaps. We define a \emph{phase} as a sequence of consecutive active extension swaps. For bounding the length of swap sequences to a polynomial, Observation~\ref{obs:saturating} and Lemma~\ref{lem:non-semi-active} allow to restrict attention to the number of swaps in a single phase. 

Recall that the set of active edges is cycle-free. Every node has at most one outgoing active edge. Consequently, the set of active edges forms a forest of in-trees, where all edges are directed towards the parent node. We denote the set of these in-trees by $\mathcal{T}$. In an active extension swap, the active path $P$ runs from $v_1$ to $v_2$. As such, we can only have active extension swaps where all four banks involved in the swap are in the same tree $T \in \mathcal{T}$, and the non-profiting creditor $v_2 \in T$ is an ancestor of all $v_1, u_1, u_2 \in T$. 

We first focus on sequences of active extension swaps that strictly improve a given bank $v$. Afterwards, we consider the more general case of arbitrary active extension swaps, where we do not condition on a particular bank benefiting.

\paragraph{Improvement for a Given Bank}
We analyze sequences of semi-positive swaps that are $v$-improving for a given bank $v$, i.e., in every swap $v$ strictly profits. Note that $v$ may not be involved in the swap as a creditor bank herself. Our main result is the following.
\begin{theorem}
    \label{thm:shortSequenceV}
    For every bank $v$ in a financial network with edge-ranking rules, \emph{every} sequence of $v$-improving semi-positive swaps has polynomial length. \SequenceSwapSPv\ can be solved in polynomial time.
\end{theorem}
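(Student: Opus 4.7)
The plan is to leverage the already-stated reductions to isolate the only remaining difficulty: bounding the number of $v$-improving active extension swaps within a single phase. By the Observation on saturating swaps and by the Lemma on non-active and semi-active extension swaps, any sequence of $v$-improving semi-positive swaps contains at most $O(m)$ saturating swaps and, across all periods, $O(m^2)$ non-active and semi-active extension swaps. Hence it suffices to bound the length of a single phase of consecutive active extension swaps.

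I would first establish two structural invariants. Within a phase no edge gets saturated, so each node's active outgoing edge is fixed in its priority position, and only the creditor endpoint of that edge can change---and this happens only when the node plays the role of $u_1$ or $u_2$. Neither $v$ itself nor any ancestor of $v$ in the active tree can fill these roles in a $v$-improving active extension swap: $u_1 \in S_v \setminus \{v\}$ by definition, while if some ancestor $r_i$ of $v$ were $u_2$, then (since $u_2$'s parent must be $v_2 \in B \setminus \{v\}$ and $r_i$'s parent on the backbone is $r_{i+1}$) we would have $u_2$ equal to the child of $v_2$ on the active path $P$, which produces an active cycle after the swap and violates extension-swap property~(3). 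Hence the entire backbone $B$ from $v$ to the root of its active tree is static throughout the phase, and every $v$-improving active extension swap exchanges an internal edge of $S_v$ with an entry edge from an off-spine subtree outside $S_v$ to a backbone node in $B \setminus \{v\}$. This causes the subtree rooted at $u_1$ to migrate out of $S_v$, the subtree rooted at $u_2$ to migrate in, and payments on the edges of $P$ from $v_1$ up to $v$ to rise by $\delta = p_{e_2} - p_{e_1}$.

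I would then construct a monotone potential with polynomial range to bound the number of swaps per phase. A natural starting point is an inversion count of the form
\[
  I \;=\; \sum_{c} \bigl|\bigl\{(a, b) : a \in A_c,\ b \in B_c,\ b > a\bigr\}\bigr|,
\]
where, for each edge weight $c$, $A_c$ and $B_c$ are suitably chosen multisets of payments on active edges of weight $c$ whose sources lie in $S_v$ and outside $S_v$, respectively. Each $v$-improving swap resolves the particular inversion $(p_{e_1}, p_{e_2})$, since afterwards $p_{e_1}$ sits in $B_c$ and $p_{e_2}$ sits in $A_c$. Since $I \le \sum_c |A_c|\,|B_c| = O(m^2)$, this would give $O(m^2)$ swaps per phase. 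Combining with the bounds on the other swap types, every sequence of $v$-improving semi-positive swaps has polynomial length, and the algorithm for \SequenceSwapSPv\ repeatedly applies any such swap---detectable in strongly polynomial time via clearing-state computation---until none exists.

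The main obstacle is making this potential strictly monotone in the presence of non-trivial subtrees: the internal edges of the migrating subtrees shift en bloc between the inside-$S_v$ and off-spine regions and can create new inversions that offset the one resolved by the swap. Overcoming this requires a more careful accounting. One candidate remedy is to restrict the potential to ``top-level'' entry edges only (counting inversions separately for each fixed position of $v_1$ in $S_v$) and argue inductively on the level of $v_1$; an alternative is a lexicographic potential exploiting the observation that during a phase the multiset of subtree masses attached to the spine is preserved while each swap merely redistributes them so that higher-mass subtrees monotonically accumulate inside $S_v$---once this sorted configuration is reached, no further $v$-improving active extension swap is possible, which yields the polynomial bound.
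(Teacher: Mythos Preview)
Your reduction to a single phase and your structural invariants are correct and essentially match the paper's setup: the backbone from $v$ to the root of its active in-tree is fixed throughout the phase, every $v$-improving active extension swap has $v_1 \in T_v$, $v_2$ a proper ancestor of $v$, $u_1 \in T_v \setminus \{v\}$, and $u_2$ in an off-spine subtree hanging off $v_2$. (A minor quibble: the obstruction to $u_2$ lying on the backbone is that the swap would then close an active cycle, which forces a new saturation; it is property~(1) of the extension-swap definition that is violated, not property~(3).)

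Where your proposal falls short is exactly where you say it does: you do not produce a potential that provably decreases. The inversion count $I$ over all active edges inside $T_v$ versus off-spine breaks because the migrating subtrees carry their internal edges across the partition, and neither of your two sketched remedies is developed far enough to close the gap. As written, the proof is a plan with an acknowledged hole rather than a proof.

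The paper's argument avoids a global potential entirely and is considerably simpler. Its key observation---which you do not exploit---is that the sources of the non-path incoming active edges of any fixed ancestor $v_2$ sit in off-spine subtrees, and hence are never on the active path of any $v$-improving swap during the phase; their payments are therefore frozen for as long as they remain attached to $v_2$. Each swap with non-profiting creditor $v_2$ replaces one such frozen value $p_{e_2}$ by a strictly smaller one $p_{e_1}$. Since the payment carried by any node's active outgoing stub is globally non-decreasing over the phase (Pareto-improvement plus no new saturation), the paper concludes that once an edge has been swapped away from $v_2$ in the role of $e_2$ it cannot later re-attach to $v_2$. Summing over the ancestors gives at most $m^2$ swaps per phase.

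In short: the paper localises the bookkeeping to the incoming neighbourhood of each backbone node, where nothing moves except via the swap itself; your inversion potential is global, which is precisely why the subtree migrations spoil it. Your first proposed remedy---restricting attention to ``top-level'' entry edges---is heading in the right direction; carried out per ancestor $v_2$ rather than per depth of $v_1$, it becomes essentially the paper's argument.
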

\begin{proof}
\begin{figure}[t]
	\centering
	\begin{tikzpicture}[>=stealth', shorten >=1pt, auto,
		node distance=1cm, scale=1, 
		transform shape, align=center, 
		bank/.style={circle, draw, inner sep=1},
		triangle/.style = {regular polygon, regular polygon sides=3, rotate=297, draw, inner sep=4pt},
		btriangle/.style = {regular polygon, regular polygon sides=3, rotate=0, draw, inner sep=4pt}]
		\node[circle, draw, inner sep=5pt] (v) at (0,0) {};
		\node[circle, draw, inner sep=3pt] (u1) [above right=0.1cm and 1cm of v] {$v$};
		\node[circle, draw, inner sep=5pt] (u2) [above right=0.1cm and 1cm of u1] {};
		\node[bank] (v2) [above right=0.1cm and 1cm of u2] {$v_2$};
		\node[btriangle] (tu1) [below=0.75cm of u1] {};
		\node[btriangle] (tu2) [below=0.75cm of u2] {};
		\node[btriangle] (tv2) [below=0.75cm of v2] {};
		\node[bank] (v1) [below left =0.1cm and 1cm of v] {$v_1$};
		\node[regular polygon, regular polygon sides=3, rotate=63, draw, inner sep=4pt] (tv) [below right=0.6cm and 1cm of v1] {};
		\node[triangle] (tv1) [below left=0.6cm and 1cm of v1] {};
		\node[btriangle] (tv12) [below=0.75cm of v1] {};
		\draw[->, line width=0.04cm] (v1) to (v);
		\draw[->, line width=0.04cm] (v) to (u1);
		\draw[->, line width=0.04cm] (u1) to (u2);
		\draw[->, line width=0.04cm] (u2) to (v2);
		\draw[->] (tv1) to (v1);
		\draw[->] (tv12) to (v1);
		\draw[->] (tv) to (v1);
		\draw[->, thick, blue!50!white] (tv1) to node[below=0.1cm, midway] {\footnotesize{\textcolor{black}{$p_{e_1}$}}} (v1);
		\draw[->] (tu1) to (u1);
		\draw[->] (tu2) to (u2);
		\draw[->, thick, blue!50!white] (tv2) to node[right=0.1cm, pos=0.4] {\footnotesize{\textcolor{black}{$p_{e_2}$}}} (v2);
	\end{tikzpicture}
	\caption{Schematic picture of a $v$-improving semi-positive swap of active edges where payments $p_{e_2}>p_{e_1}$. Triangles indicate subtrees. When the blue edges are swapped then payments on the thick black edges strictly increase.}
	\label{fig:shortSequencev}
\end{figure}
    It suffices to prove the result for active extension swaps in a single phase. Consider the tree $T$ containing $v$. Suppose there is an active extension swap, then the set of banks that strictly profit lie on the active path $P$, with the exception of creditor $v_2$. Hence, if such a swap is also $v$-improving, $v$ is a node on the active path, the creditor $v_1$ is $v$ or a descendant of $v$, and the creditor $v_2$ is an ancestor of $v$. As such, all active edges on the path $P_v$ from $v$ to the root of $T$ cannot be swapped. In terms of payments, the edges whose payments are strictly increased by the swap are located in the subtree $T_v$ rooted at $v$ or on the path $P_v$. The swap keeps all other payments the same.

    Note that for the swap, $p_{e_1} < p_{e_2}$, i.e., the edge $e_2$ with higher payment is incoming to the ancestor $v_2$ of $v$ and gets swapped to $v_1 = v$ or a descendant $v_1$ of $v$. For $v_2$, the only incoming edge that increases in payment is the one on $P_v$. All other incoming edges of $v_2$ only keep their payments (when not being swapped) or are replaced by ones with strictly decreased payments (when being swapped). In contrast, for each edge, payments are non-decreasing over time. This implies that if edge $e_2$ is swapped away from $v_2$ for some edge $e_1$, edge $e_2$ cannot re-appear at $v_2$ via a swap with $e_1$ or any edges swapped with $e_1$.
    For a schematic illustration, see Fig.~\ref{fig:shortSequencev}.

    Hence, a single phase contains at most $m^2$ many $v$-improving active extension swaps.
\end{proof}

This result is potentially surprising, since for other allocation rules there are indeed extremely long sequences of such swaps. 
Indeed, with proportional rules there are exponentially long sequences of $v$-improving semi-positive swaps.

\begin{restatable}{proposition}{propProportionExp} \label{prop:proportionExp}
    There is a financial network with $n$ banks, $n$ edges and proportional allocation rule, from which there is a sequence of $\Omega(2^n)$ many $v$-improving semi-positive swaps.
\end{restatable}

\begin{proof}
\begin{figure}[t]
	\centering
	\resizebox{!}{0.2\textwidth}{
	\begin{tikzpicture}[>=stealth', shorten >=1pt, auto,
		node distance=1.5cm, scale=1, 
		transform shape, align=center, 
		bank/.style={circle, draw, inner sep=1}]
		\node[circle, draw, inner sep=3pt] (v) at (0,0) {$v$};
		\node[bank] (w0) [right=of v] {$w_0$};
		\node[bank] (w1) [above=of v] {$w_1$};
		\node[bank] (u0) [right=of w1] {$u_0$};
		\node[bank] (u1) [right=of u0] {$u_1$};
		\node (dots) [right=of u1] {$\cdots$};
		\node[bank] (um) [right=of dots] {$u_{m}$};
		\node[rectangle, draw=black, inner sep=2.5pt] (xu0) [above=0.1cm of u0] {\small{$2^0$}};
		\node[rectangle, draw=black, inner sep=2.5pt] (xu1) [above=0.1cm of u1] {\small{$2^1$}};
		\node[rectangle, draw=black, inner sep=2.5pt] (xum) [above=0.1cm of um] {\small{$2^{m}$}};
		\draw[->] (v) to[bend left] (w0);
		\draw[->] (w0) to[bend left] (v);
		\draw[->] (v) to[bend left] (w1);
		\draw[->] (w1) -- (u0);
		\draw[->] (u0) to node[below, midway] {\footnotesize{$2^1-1$}} (u1);
		\draw[->] (u1) to node[below, pos=0.6] {\footnotesize{$2^2-1$}} (5.5,2.05);
		\draw[->] (7,2.05) to node[below, pos=0.4] {\footnotesize{$2^{m}-1$}} (um);
	\end{tikzpicture}}
	\caption{Financial network with proportional allocation rules that admits a sequence of debt swaps with exponential length, where $m=n-4$. Each edge has weight $2^n$ and edge labels indicate payments.}
	\label{fig:expo}
\end{figure}

    We construct the initial network with a simple underlying graph as follows. There are $n$ nodes and $n$ edges. Every node has at most one incoming edge. For simplicity, all edges have a sufficiently large liability of $2^n$. There is a single node $v$ with out-degree 2, a single node with outdegree 0. All nodes have in-degree 1. The node $v$ is involved in exactly one directed cycle of two banks $v$ and $w_0$ and edges $(v,w_0)$, $(w_0,v)$. These banks have no external assets. The second outgoing edge of $v$ leads into a path with the remaining $n-2$ nodes. We denote the nodes in the path by $v,w_1,u_0,u_1,\ldots,u_{n-4}$ in the order of their occurrence. Bank $u_i$ has external assets of $2^i$, for every $i=0,\ldots,n-4$. In the initial clearing state, there are no payments on cycle edges $(v,w_0)$, $(w_0,v)$, and on edges $(v,w_1)$ $(w_1,u_0)$. Edge $(u_i,u_{i+1})$ has payment $\sum_{j=0}^i 2^j = 2^{i+1} - 1$, for every $i=0,\ldots,n-5$. A visualization can be seen in Figure~\ref{fig:expo}. 

    We outline the first four steps of an exponentially long sequence for clarity. Our sequence maintains the structural invariant that the network has a single cycle involving $v$ and a path from $v$ to $u_{n-4}$. The nodes $w_0$ and $w_1$ always remain the two out-neighbors of $v$. They alternate in being in the cycle or on the path. The assets of $v$ strictly improve by 2 in every swap.
    
    First, swap edges $(u_0,u_1)$ and $(w_0,v)$. Then the path is $v,w_0,u_1,u_2,\ldots,u_{n-4}$, and the cycle is $v, w_1, u_0$. Since $u_0$ pays its external assets of 1 in the cycle, and $v$ splits its assets evenly among $(v,w_0)$ and $(v,w_1)$, the clearing state has a payment of 2 on edge $(u_0,v)$, a payment of 1 on $(v,w_0)$, $(v,w_1)$ and $(w_1,u_1)$, and a payment of $2^{i+1}-1$ on every edge $(u_i,u_{i+1})$. Clearly, this is a $v$-improving, semi-positive swap.

    Second, swap edges $(u_0,v)$ and $(u_1,u_2)$. Then the the path is $v,w_1,u_0,u_2,\ldots,u_{n-4}$, and the cycle is $v, w_0, u_1$. Since $u_1$ pays its external assets of 2 in the cycle, the clearing state has a payment of 4 on edge $(u_1,v)$, a payment of 2 on $(v,w_0)$, $(v,w_1)$ and $(w_1,u_0)$, a payment of 3 on $(u_0,u_2)$, and $2^{i+1}-1$ on every other edge $(u_i,u_{i+1})$. Clearly, this is a $v$-improving, semi-positive swap.

    Third, swap edges $(w_0,u_1)$ and $(u_0,u_2)$. Then the path is $v,w_0,u_2,\ldots,u_{n-4}$, and cycle is $v, w_1, u_0, u_1$. The clearing state has a payment of 6 on edge $(u_1,v)$, a payment of 3 on $(v,w_0)$, $(v,w_1)$ and $(w_1,u_0)$, payment 4 on $(u_0,u_1)$. Finally, we have a payment of $2^{i+1}-1$ on every other edge $(u_i,u_{i+1})$. 

    Forth, swap edges $(u_1,v)$ and $(u_2,u_3)$. Then the path is $v,w_0,u_0,u_1,u_3,\ldots,u_{n-4}$, and the cycle is $v, w_1, u_2$. The clearing state has a payment of 8 on edge $(u_2,v)$, a payment of 4 on $(v,w_1)$, $(w_1,u_2)$ and $(v,w_0)$, $(w_0,u_0)$. There is a payment of 5 on $(u_0,u_1)$, 7 on $(u_1,u_3)$ and $2^{i+1}-1$ on every other edge $(u_i,u_{i+1})$. 

    Now at this point, we have moved $u_2$ into the cycle, and reset $u_0$ and $u_1$ to their original positions as the first $u$-nodes on the path. Thus, by repeating steps 1-3, we can bring $u_0, u_1$ in front of $u_2$ in the cycle, thus arriving at a stage with payment 14 on edge $(u_2,v)$ and 7 on both $(v,w_0)$ and $(v,w_1)$. Then, by swapping $(u_2,v)$ and $(u_3,u_4)$, we move $u_3$ in the cycle and reset $u_0$, $u_1$ and $u_2$. Thus, we can again repeat the entire sequence. In each swap, the payment on $(v,w_0)$ and $(v,w_1)$ increases by exactly 1.

    More generally, suppose we have nodes $v, w_1, u_0, u_1, \ldots, u_j$ in the cycle in that order, and we have nodes $v, w_0, u_{j+1},\ldots,u_{n-4}$ on the path in that order. At this point, we have payments of $2^{j+1}-1$ on both $(v,w_0)$ and $(v,w_1)$. Now by swapping $(u_j,v)$ and $(u_{j+1}, u_{j+2})$, we have nodes $v, w_0, u_{j+1}$ in the cycle and reset $u_0,\ldots,u_j$ on the path $v, w_1, u_0,\ldots, u_j, u_{j+2},\ldots,u_{n-4}$. Then we can repeat the entire sequence executed so far and move to $v, w_1, u_0, \ldots, u_{j+1}$ in the cycle and $v, w_0, u_{j+2},\ldots,u_{n-4}$ on the path. The last doubling step in this exponential construction occurs when $u_{n-3}$ moves into the cycle, and the sequence ends when the cycle is $v,w_1,u_0,\ldots,u_{n-5}$ and the path $v, w_0, u_{n-4}$. At this point, we have made $2^{n-4}-1$ semi-positive $v$-improving swaps.
\end{proof}

\paragraph{Beyond Improvement for a Given Bank}
In the remainder of the section, we concentrate on arbitrary active extension swaps (without the property that they are $v$-improving). Bounding the length of such sequences represents a very challenging problem. We make some initial progress on this problem: For a non-trivial class of instances, we construct a \emph{single} short sequence of such swaps in each phase.

For the remainder of the section, we focus on instances with two network conditions, namely \changed{\emph{a constant number of active incoming edges} and \emph{sufficient residuals}}. First, suppose that throughout every node has at most $d$ incoming active edges, for some constant $d$. For example, $d=1$ means that every network created in the sequence has at most one incoming active edge for each $v \in V$. \changed{In particular, the number of active incoming edges is bounded by $d$ when every node has an \emph{indegree} of at most $d$.}
\begin{proposition}
    \label{prop:d=1}
    Consider an initial financial network with edge-ranking rules. Every sequence of semi-positive swaps, during which every node has at most $d=1$ active incoming edges, has polynomial length. 
\end{proposition}
\begin{proof}
  In an active extension swap, the non-profiting creditor $v_2$ has two incoming active edges. Thus, if $d=1$, there can be no active extension swap. All swaps are saturating, semi-active or non-active swaps. Their number is bounded by $O(m^2)$.
\end{proof}
When $d \ge 2$, we consider a second condition -- the network should have sufficient residuals.

\begin{definition}[Sufficient Residuals]
    In a financial network $\F$, a \emph{potential active swap} is a pair of active edges $e_1=(u_1,v_1)$ and $e_2=(u_2,v_2)$ such that $l_{e_1} = l_{e_2}$, $p_{e_1} < p_{e_2}$, and there is a path of active edges from $v_1$ to $v_2$. The network $\F$ has \emph{sufficient residuals} if for every potential active swap, $l_e - p_e > p_{e_2} - p_{e_1}$ for all $e \in P$, i.e., the potential active swap is an active extension swap.
\end{definition}
Intuitively, in a potential active swap all conditions for an extension swap are fulfilled, with the exception of the residual liabilities on the path $P$ of active edges. If a network has \emph{sufficient residuals}, then a potential active swap is indeed an active extension swap, since there is sufficient residual liability on $P$ to forward the marginal increase of total assets for $v_1$ towards $v_2$. For example, if the liabilities of active edges are sufficiently high compared to clearing payments, then the increases in payments will not ``exhaust'' the liabilities, and we are guaranteed to maintain the property of sufficient residuals throughout the swap sequence. 

Let us examine active extension swaps in networks with sufficient residuals.
\begin{corollary}\label{cor:lowestToRoot}
 For a financial network with edge-ranking rules and sufficient residuals, the pair $e_1$ and $e_2$ represents an active extension swap \emph{if and only if} all conditions hold: (1) $l_{e_1}=l_{e_2}$, (2) $p_{e_1} < p_{e_2}$, (3) $v_2$ is an ancestor of $v_1, u_1$ and $u_2$ in an in-tree $T$ of active edges, and (4) $u_2$ is not an ancestor of $v_1$ in $T$. 
\end{corollary}    
Clearly, if $v_2$ is an ancestor of $v_1$, there is a path $P$ of active edges, so conditions (1)-(4) characterize potential active swaps (and hence, active extension swaps). As a consequence, our goal now is to swap edges with small payments as high as possible in the tree. In particular, consider edges classified by their liability values. For each class, we intend to swap the active edges with lowest payments to the root of the in-tree. If there are active edges from the same class (a) at the root and (b) deeper in a different subtree, s.t.\ the root edge has strictly more payments, then Corollary~\ref{cor:lowestToRoot} implies there exists an active extension swap.

Formally, we construct a sequence of active extension swaps in a single phase ``top-down'' in each in-tree. In the first \emph{stage}, we only execute active swaps involving a node $v \in V_1$, where $V_1$ is the set containing the root of every $T \in \mathcal{T}$. If there are no more such active extension swaps (or there is a swap from another class and the phase can be ended), then the first stage ends. In the second stage, for every $v \in V_1$ with $v$ root of $T$, the set $V_2$ contains all direct children of $v$ in $T$. We consider active extension swaps that involve a node $v \in V_2$. The stage ends when no more such active extension swaps exist (or the phase can be ended by a swap of another class). More generally, in stage $i$ we consider active extension swaps involving nodes of depth $i-1$ in the trees.

\begin{lemma}
    \label{lem:stageInduction}
    Suppose we are given an initial financial network with edge-ranking rules, sufficient residuals and constant $d$.
    For every $v \in V_i$ there is no active extension swap involving $v$ in any stage $j > i$.
\end{lemma}

\begin{proof}
    We prove the lemma by induction. Consider stage 1 and $V_1$. Since $v \in V_1$ is the root of some in-tree $T$, it has no outgoing active edge. It can only be part of an extension swap as a non-profiting creditor. At the end of stage 1 there is no active extension swap with $v \in V_1$. Consider a liability value and suppose $v$ has $\ell$ incoming edges of this liability. Then, by Corollary~\ref{cor:lowestToRoot}, the $\ell - 1$ edges with lowest payments in this liability class in $T$ must be incident to $v$. If the edge with $\ell$-smallest payment in the class is not incident to $v$, it must be located in a different subtree than the $\ell-1$ other edges (otherwise there is an active extension swap with the remaining incoming edge of $v$ from that liability class). This property holds for each class of edges with the same liabilities.
        
    For contradiction, consider the first time at which an extension swap involving $v$ arises in a subsequent stage $j > i$. This means there is an edge incident to $v$ that can be swapped with one that is deeper in the tree and has less payment. However, this is impossible: In any previous extension swap the edges that received strictly increased payments lie on the path $P$ between the creditors. Since $v$ was not involved in any extension swap, the edges incident to $v$ still have the payments they had at the end of stage 1. Moreover, all active edges get swapped only in the respective subtrees. As such, no active extension swap with the edges of $v$ can arise. This proves the base case.

    For the hypothesis, assume the lemma holds for all stages $k = 1,\ldots,i-1$. Consider stage $i$ and a node $v \in V_i$. At the end of stage $i$, there are no extension swaps involving node $v \in V_i$. Again, consider the first time at which an extension swap involving $v$ arises in a subsequent stage $j > i$. By hypothesis, there are no extension swaps involving nodes from stages $k < i$ in this time period. Hence, in the extension swap $v$ must be the non-profiting creditor. All other nodes must be from the subtree $T_v$ rooted at $v$. We can repeat the arguments from the base case with the root and $T$ for $v$ and $T_v$. This proves the inductive step.  
\end{proof}

It remains to construct a short sequence of active extension swaps within a single stage $i$. By Lemma~\ref{lem:stageInduction}, for each $v \in V_i$, no ancestor is involved in any extension swap of stage $i$. Hence, we restrict attention to the subtree rooted in $v$. We assume $v$ is a root of a separate subtree $T$, and we must bound the number of active extensions swaps involving $v$. 

\begin{lemma}
    For any financial network with edge-ranking rules and sufficient residuals, if every $v \in V_i$ has at most constantly many incoming active edges, then there can be at most a polynomial number of active extension swaps in stage $i$.
\end{lemma}

\begin{proof}
    Consider $v \in V_i$ with $d_v \leq d$ incoming active edges, for some constant $d$. Active extension swaps always keep this number the same for every node. Since $v$ is a non-profiting creditor in every swap of stage $i$, the total assets of $v$ do not change during the stage. However, in each swap, there is at least one incoming edge of $v$ (on the active path of the swap) that strictly increases in payments. Since payments on each edge do not decrease, the same subset of $d_v$ incoming active edges cannot repeat during the stage. As a consequence, at the end of each swap, $v$ has a different subset of neighbors for its incoming active edges. The number of different sets of neighbors is bounded by $\genfrac(){0pt}{1}{n}{d_v}$, which is polynomial for constant $d_v$.
\end{proof}

Together these insights prove Theorem~\ref{thm:shortSequence}, which summarizes our result.

\begin{theorem}
    \label{thm:shortSequence}
    The \SequenceSwapSP\ problem for financial networks with edge-ranking rules can be solved in polynomial time if, for all financial networks encountered during the sequence, the network has sufficient residuals and every node has at most a constant number of active incoming edges.
\end{theorem}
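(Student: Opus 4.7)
The plan is to combine the structural bounds from this section into an explicit polynomial-time algorithm that produces a valid sequence of semi-positive swaps ending in a local optimum. Since the clearing state for edge-ranking rules can be computed in strongly polynomial time~\cite{BertschingerStrategicPayments}, the set of available semi-positive swaps can be enumerated in polynomial time per round; hence it suffices to exhibit a schedule whose total number of rounds is polynomial in $n$ and $m$ under the stated invariants.

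First I would separate swaps by type using Proposition~\ref{prop:extension-swaps}. By Observation~\ref{obs:saturating}, at most $m$ saturating swaps occur overall, and they partition the sequence into at most $m+1$ periods. Within any single period, Lemma~\ref{lem:non-semi-active} bounds the number of non-active and semi-active extension swaps by $\sum_{v\in V} 2|E^-(v)| = 2m$. Summing over periods, this contributes $O(m^3)$ rounds in total, all computable in polynomial time. It therefore remains to bound the number of active extension swaps inside a single phase.

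For a single phase I would run the top-down stage-by-stage schedule described just before Lemma~\ref{lem:stageInduction}: in stage $i$ the algorithm performs only those active extension swaps whose non-profiting creditor lies at depth $i-1$ in its in-tree $T \in \mathcal{T}$ of active edges; the stage ends once no such swap is available, at which point the algorithm advances to stage $i+1$. The sufficient-residuals hypothesis makes Corollary~\ref{cor:lowestToRoot} applicable at every encountered network, so the candidates at each stage reduce to a purely combinatorial condition (equal liabilities, strict payment inequality, and an ancestry relation in the current active forest) and can be enumerated in polynomial time. Lemma~\ref{lem:stageInduction} guarantees that advancing is monotone: once stage $i$ terminates, no later swap of the phase can involve any node from $V_1 \cup \cdots \cup V_i$, so the algorithm never has to revisit earlier stages within the same phase. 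The final lemma of this section bounds the number of swaps executed inside a single stage at a node $v$ by $\binom{n}{d}$, since $v$ must display a strictly new set of endpoints on its (at most $d$) incoming active edges after each swap that involves it; summing over the at most $n$ nodes and at most $n$ stages gives $\mathrm{poly}(n)$ active extension swaps per phase.

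The main obstacle I expect is the careful handling of the interaction between the three sources of rounds. Whenever the algorithm is mid-phase and the only available semi-positive swap happens to be saturating or non-/semi-active, that swap ends the current phase and resets the stage counter for a fresh active-forest structure; the $O(m^2)$ period bound and the $m$ saturating bound must be combined with the per-phase active bound without double counting. Verifying that the hypothesized invariants (sufficient residuals and $d$-bounded active in-degree) are only \emph{used}, never \emph{assumed} beyond the scope granted by the theorem, is the other point that requires care — in particular, Corollary~\ref{cor:lowestToRoot} must hold at every intermediate network along the constructed sequence. Putting these bounds together yields a sequence of length polynomial in $n$ and $m$, and hence a polynomial-time algorithm for \SequenceSwapSP\ under the stated conditions.
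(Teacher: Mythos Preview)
Your proposal is correct and follows essentially the same approach as the paper: it combines Observation~\ref{obs:saturating}, Lemma~\ref{lem:non-semi-active}, the top-down stage schedule with Lemma~\ref{lem:stageInduction}, and the per-stage $\binom{n}{d}$ bound in exactly the way the paper assembles them into the final theorem. The only slip is the $O(m^3)$ count for non-/semi-active swaps across all periods (it is $2m \cdot (m+1) = O(m^2)$), but this is harmless since the bound is still polynomial.
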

\begin{openproblem*}
    \label{op:shortSequence}
    Can the \SequenceSwapSP\ problem for financial networks with edge-ranking rules be solved in polynomial time?
\end{openproblem*}

\subsection{Sequences of Arbitrary Swaps}
\label{sec:localArb}

\changed{
\begin{figure}[t]
\centering
\begin{tikzpicture}[>=stealth', shorten >=1pt, auto,
    node distance=1cm, 
    transform shape, align=center, 
    bank/.style={circle, draw, inner sep = 1}]
    \node[bank] (v1) at (0,0) {$v_1$};
    \node[bank] (v2) at (1.5,0) {$v_2$};
    \node[bank] (v3) at (3,0) {$v_3$};
    \node[bank] (v4) at (4.5,0) {$v_4$};
    \node[bank] (v5) at (6,0) {$v_5$};
    \draw[->] (0,0.7) to node[right] {2} (v1);
    \draw[->] (v1) to node[left=0.1cm,pos=0.4]{1} (-0.25,-0.7);
    \draw[->] (v1) to node[right=0.1cm,pos=0.4]{2} (0.25,-0.7);
    \draw[->] (1.5,0.7) to node[right] {2} (v2);
    \draw[->] (v2) to node[left=0.1cm,pos=0.4]{1} (1.25,-0.7);
    \draw[->] (v2) to node[right=0.1cm,pos=0.4]{2} (1.75,-0.7);
    \draw[->] (3,0.7) to node[right] {2} (v3);
    \draw[->] (v3) to node[right] {2} (3,-0.7);
    \draw[->] (4.5,0.7) to node[right] {1} (v4);
    \draw[->] (v4) to node[right] {1} (4.5,-0.7);
    \draw[->] (5.75,0.7) to node[left=0.1cm, pos=0.4] {1} (v5);
    \draw[->] (6.25,0.7) to node[right=0.1cm, pos=0.4] {1} (v5);
\end{tikzpicture}\\
\vspace{1cm}
\begin{tikzpicture}[>=stealth', shorten >=1pt, auto,
    node distance=1cm, 
    transform shape, align=center, 
    bank/.style={circle, draw, inner sep = 1}]
    \def\r{1}
    \node[bank] (v1) at (210:\r) {$v_1$};
    \node[bank] (v2) at (330:\r) {$v_2$};
    \node[bank] (v3) at (90:\r) {$v_3$};
    \node[bank] (v4) [below = of v1] {$v_4$};
    \node[bank] (v5) [below = of v2] {$v_5$};
    \draw[->] (v2) to node[above] {2} (v1);
    \draw[->] (v3) to node[right=0.1cm, pos=0.4] {2} (v2);
    \draw[->] (v1) to node[left=0.1cm, pos=0.6] {2} (v3);
    \draw[->] (v1) to node[below=0.1cm,pos=0.1] {1} (v5);
    \draw[->] (v4) to node[below]{1} (v5);
    \draw[->] (v2) to node[below=0.1cm,pos=0.1]{1} (v4);
\end{tikzpicture}
\hspace{1.5cm}
\begin{tikzpicture}[>=stealth', shorten >=1pt, auto,
    node distance=1cm, 
    transform shape, align=center, 
    bank/.style={circle, draw, inner sep = 1}]
    \def\r{1}
    \node[bank] (v1) at (210:\r) {$v_1$};
    \node[bank] (v2) at (330:\r) {$v_2$};
    \node[bank] (v3) at (90:\r) {$v_3$};
    \node[bank] (v4) [below = of v1] {$v_4$};
    \node[bank] (v5) [below = of v2] {$v_5$};
    \draw[->] (v1) to node[above] {2} (v2);
    \draw[->] (v2) to node[right=0.1cm, pos=0.6] {2} (v3);
    \draw[->] (v3) to node[left=0.1cm, pos=0.4] {2} (v1);
    \draw[->] (v1) to node[left] {1} (v4);
    \draw[->] (v4) to node[below]{1} (v5);
    \draw[->] (v2) to node[right]{1} (v5);
\end{tikzpicture}
\caption{\changed{Example of a set of nodes together with corresponding stubs on the top, and two graphs with consistent sets of edges below.}}
\label{fig:stubs}
\end{figure}
}

In this section, we consider sequences of arbitrary swaps. In contrast to semi-positive ones, these swaps have no natural criterion of progress. We again focus on swaps in which a single bank $v$ has to improve strictly. Unfortunately, sequences of such arbitrary $v$-improving swaps can be very long, for every monotone allocation rule (including edge-ranking and proportional rules). Our results show that there are initial networks from which \emph{every} sequence of $v$-improving swaps to a local optimum (in which no such swaps exist) is exponentially long.

Indeed, we obtain a more general statement. Consider a  maximization problem for the total assets of a given bank $v$ defined as follows. In an instance of the \MaxAssetsv\ problem, we are given a set of banks $V$. For each bank $v \in V$ we are given a non-negative value for the external assets as well as a number of incoming and outgoing \emph{half-edges} or \emph{stubs}. For each such stub $i$ of $v$, we have a positive weight $l_i > 0$. Clearly, these stubs shall be \emph{consistent} with at least one set of edges $E$ among $V$, i.e., there must be a feasible underlying multi-graph $G=(V,E)$ in the sense that the stubs can be matched to form $G$. In particular, each incoming stub of weight $l_i$ of each node $v$ can be matched to some outgoing stub of weight $l_i$ of some node $u \neq v$ to form an edge $e\in E$ with $\deb(e)=u$ and $\cre(e)=v$. \changed{An example is depicted in Figure~\ref{fig:stubs}.} Consistency of stubs with \emph{at least one} set $E$ can be tested in polynomial time by solving a straightforward bipartite matching problem. For the remainder, we assume that it holds.

An instance of \MaxAssetsv\ has an edge-ranking rule $\vecf$ that specifies the order in which banks pay their outgoing stubs. The rule is given implicitly by a \emph{clearing oracle} $\mathcal{O}$. A clearing oracle for rule $\vecf$ receives as input the remaining financial network $(G, \deb, \cre, \vecl, \veca^x)$ and outputs the clearing state. Note that a clearing oracle $\mathcal{O}$ can be computed in polynomial time for edge-ranking rules~\cite{BertschingerStrategicPayments}.

Observe that any set of edges $E$ that is consistent with the stubs gives rise to a financial network and a clearing state (where the latter can be computed using $\mathcal{O}$). The goal is to find such a consistent set to maximize the total assets of a given bank $v$ in the resulting clearing state. The objective function in the instance are the total assets for a given bank $v$.

Finally, for edge swaps we define a neighborhood over the set of financial networks whose edges are consistent with the stubs in the input. For a given financial network $\F$, the \Swap\ neighborhood $\mathcal{N}(\F)$ contains all networks $\F'$ that can be obtained by performing an arbitrary debt swap in $\F$. A debt swap exchanges the matching of two pairs of stubs with pairwise disjoint incident nodes and with the same weights and thereby maintains consistency.

\begin{proposition}
\MaxAssetsv\ with \Swap\ neighborhood is in the complexity class \classPLS\ whenever the clearing oracle can be computed in polynomial time.
\end{proposition}
Clearly, as noted above, we can obtain an initial network by computing a bipartite matching (indeed, \emph{every} network consistent with the stubs represents a bipartite matching of the stubs and vice versa). The \Swap\ neighborhood of a network contains $O(m^2)$ other networks. For each of these networks, the corresponding clearing state can be determined using oracle $\mathcal{O}$. Hence, the \Swap\ neighborhood $\N(\F)$ for a given network $\F$ can be searched efficiently.

When locally maximizing \MaxAssetsv\ we do not necessarily restrict attention to a sequence of $v$-improving swaps from a given initial network to a local optimum. Instead, when given an initial network, we obtain an instance of \MaxAssetsv\ by breaking all edges into stubs. The goal is to find \emph{any} local optimum network consistent with the stubs derived from the initial network. We show that this problem is \classPLS-complete \changed{\cite{johnson1988easy}, thus, unlikely to be solvable in polynomial time. It is easy to see that our \classPLS-reduction in the subsequent theorem is \emph{tight}~\cite[Definition 3]{Yannakakis03}, i.e., it preserves solutions, neighborhoods, and objective values from the \maxSat\ problem with \textit{flip} neighborhood. This implies the existence of instances and starting states that require exponentially long sequences of swaps as well as \classPSPACE-completeness of reachability questions (see also Corollary~\ref{cor:PSPACE} below).}


\begin{restatable}{theorem}{thmPLS} \label{thm:PLS}
    There is a tight \classPLS-reduction from \maxSat\ with \emph{flip} neighborhood to \MaxAssetsv\ with \Swap\ neighborhood.
\end{restatable}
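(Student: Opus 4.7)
The plan is a gadget-based \classPLS-reduction in which each boolean variable corresponds to a pair of swappable edges and each clause to a payment-routing sub-network, with edge weights chosen so that the only feasible debt swaps in the constructed network are exactly the variable-flip swaps.

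Given a \maxSat\ instance $I$ with variables $x_1,\ldots,x_n$ and weighted clauses $C_1,\ldots,C_m$, I would build a network $\Phi(I)$ as follows. For each variable $x_i$ introduce nodes $u_i^1, u_i^2, T_i, F_i$ together with outgoing stubs of the same weight $W_i$ at $u_i^1$ and $u_i^2$ and matching incoming stubs at $T_i$ and $F_i$; let $u_i^1$ hold external assets $W_i$ and $u_i^2$ hold none. The canonical matching $(u_i^1, T_i), (u_i^2, F_i)$ encodes $x_i = \text{true}$, and the single \Swap\ of these two edges produces the matching for $x_i = \text{false}$. For each clause $C_j$ with weight $w_j$, introduce a node $c_j$ with a single outgoing edge $(c_j, v)$ of weight $w_j$. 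For every positive (respectively negative) occurrence of $x_i$ in $C_j$, add an edge $(T_i, c_j)$ (respectively $(F_i, c_j)$) of weight at least $w_j$. Complete $T_i$ and $F_i$ to total liability exactly $W_i$ using padding edges to dummy sinks, and equip $T_i$ and $F_i$ with edge-rankings that pay all clause-edges before the padding edge. Finally, choose the $W_i$'s and all other edge weights as pairwise distinct values via unique offsets, so that the only swappable pair of edges in the entire network is the pair of variable-flip edges inside each variable gadget. Thus the \Swap\ neighborhood of every feasible network is exactly the set of single-variable flips.

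A direct payment-tracing argument then gives $a_v = \sum_{j:\,C_j\text{ satisfied}} w_j$: when $x_i = \text{true}$, $T_i$ receives $W_i$ from $u_i^1$ and, by its edge-ranking, fully saturates each outgoing clause-edge, so every $c_j$ containing $x_i$ positively receives at least $w_j$ and forwards exactly $w_j$ to $v$ by the capacity of $(c_j, v)$; symmetrically for $x_i = \text{false}$. The reverse map $\Psi$ reads each $x_i$ off the current matching of the variable gadget in polynomial time. Because the only available swaps are variable flips and each flip changes $a_v$ by exactly the change in the weighted number of satisfied clauses, improving swaps in $\Phi(I)$ stand in bijection with improving flips in $I$, so $(\Phi, \Psi)$ is a valid \classPLS-reduction.

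For tightness, take $\mathcal{R}$ to be the image of $\Phi$ over all assignments. Since all non-swap-pair weights are pairwise distinct, the matching of those stubs is forced and the feasible networks coincide with $\mathcal{R}$; in particular every local optimum lies in $\mathcal{R}$, and the transition graph restricted to $\mathcal{R}$ is isomorphic to the flip graph of $I$, so the path condition in the definition of a tight reduction holds automatically. The main obstacle is precisely the one we resolved above: ensuring that no spurious debt swap outside variable gadgets can disrupt the simulation or shortcut the improvement structure. This is handled by the unique weight offsets across all non-variable-flip edges, which collapse the \Swap\ neighborhood exactly onto single-variable flips and so make the tightness conditions trivial once the cost identity for $a_v$ is established.
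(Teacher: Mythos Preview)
Your proposal is correct and follows essentially the same approach as the paper: variable gadgets with a funded/unfunded pair of sources feeding two literal nodes via equal-weight edges (the only swappable pairs), literal-to-clause edges with pairwise distinct weights, and clause-to-$v$ edges of weight $w_j$, yielding $a_v$ equal to the weighted number of satisfied clauses and a bijection between flips and swaps. The paper differs only in minor details---it gives an explicit binary encoding for the distinct weights rather than ``unique offsets,'' and it omits your padding edges by making the incoming weight $kM_i$ large enough that a funded literal node can always saturate all its outgoing clause-edges; both constructions achieve the same ``0 or fully paid'' clearing property that makes the ranking at $T_i,F_i$ irrelevant.
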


\begin{proof}
\begin{figure}
\centering
\resizebox{!}{0.4\textwidth}{
	\begin{tikzpicture}[>=stealth', shorten >=1pt, auto,
		node distance=1cm, scale=1, 
		transform shape, align=center, 
		bank/.style={circle, draw, inner sep=3.1pt}]
        \node[bank] (x1) at (0,0) {$x_1$};
        \node[bank] (s1) [right=of x1] {$s_1$};
        \node[bank] (x2) [right=of s1] {$x_2$};
        \node[bank] (s2) [right=of x2] {$s_2$};
        \node (udots) [right=of s2] {$\dots$};
        \node[bank] (xk) [right=of udots] {$x_k$};
        \node[bank] (sk) [right=of xk] {$s_k$};
        \node[circle, draw, inner sep=1pt] (x1t) [below=of x1] {$x_1^T$};
        \node[circle, draw, inner sep=1pt] (x1f) [right=of x1t] {$x_1^F$};
        \node[circle, draw, inner sep=1pt] (x2t) [right=of x1f] {$x_2^T$};
        \node[circle, draw, inner sep=1pt] (x2f) [right=of x2t] {$x_2^F$};
        \node (mdots) [right=of x2f] {$\cdots$};
        \node[circle, draw, inner sep=1pt] (xkt) [right=of mdots] {$x_k^T$};
        \node[circle, draw, inner sep=1pt] (xkf) [right=of xkt] {$x_k^F$};
        \node (anchor) [below=2cm of x2f] {};
        \node[bank] (k2) [left=0.75cm of anchor] {$\kappa_2$};
        \node[bank] (k1) [left=1.5cm of k2] {$\kappa_1$};
        \node (ldots) [right=0.75cm of anchor] {$\cdots$};
        \node[bank] (kl) [right=1.5cm of ldots] {$\kappa_{\ell}$};
        \node[circle, draw, inner sep=3.5pt] (v) [below=2cm of anchor] {$v$};
        \node[rectangle, draw=black, inner sep=2.5pt] (ex1) [above=0.1cm of x1] {\small{$kM_1$}};
        \node[rectangle, draw=black, inner sep=2.5pt] (ex2) [above=0.1cm of x2] {\small{$kM_2$}};
        \node[rectangle, draw=black, inner sep=2.5pt] (exk) [above=0.1cm of xk] {\small{$kM_k$}};
        \draw[->] (x1) to node[left, midway] {\footnotesize{$kM_1$}} (x1t);
        \draw[->] (s1) to node[left, midway] {\footnotesize{$kM_1$}} (x1f);
        \draw[->] (x2) to node[left, midway] {\footnotesize{$kM_2$}} (x2t);
        \draw[->] (s2) to node[left, midway] {\footnotesize{$kM_2$}} (x2f);
        \draw[->] (xk) to node[left, midway] {\footnotesize{$kM_k$}} (xkt);
        \draw[->] (sk) to node[left, midway] {\footnotesize{$kM_k$}} (xkf);
        \draw[->] (x1t) to node[left=0.1cm, pos=0.2] {\footnotesize{$M_1^T$}} (k1);
        \draw[->] (x1f) to node[left=0.1cm, pos=0.2] {\footnotesize{$M_1^F$}} (k2);
        \draw[->] (x2f) to node[right, pos=0.2] {\footnotesize{$M_2^F$}} (k1);
        \draw[->] (x2f) to node[right=0.1cm, pos=0.2] {\footnotesize{$M_2^F$}} (kl);
        \draw[->] (xkt) to node[left=0.5cm, pos=0.2] {\footnotesize{$M_2^T$}} (kl);
        \draw[->] (xkf) to node[right=0.2cm, pos=0.2] {\footnotesize{$M_2^F$}} (k2);
        \draw[->] (k1) to node[left=0.2cm, pos=0.2] {\footnotesize{$w_{\kappa_1}$}} (v);
        \draw[->] (k2) to node[right=0.2cm, pos=0.2] {\footnotesize{$w_{\kappa_2}$}} (v);
        \draw[->] (kl) to node[right=0.2cm, pos=0.2] {\footnotesize{$w_{\kappa_{\ell}}$}} (v);
    \end{tikzpicture}}
\caption{Example construction for reduction with \maxSat\ where edge labels indicate liabilities.}
\label{fig:Max2Sat}
\end{figure}

\changed{An instance} $\varphi$ of \maxSat\ with the \textit{flip} neighborhood \changed{consists of formula} $\varphi$ consists of $k$ variables $x_1,x_2\dots,x_k$ and $\ell$ clauses $\kappa_1,\ldots,\kappa_{\ell}$ where each clause is assigned a weight $w_{\kappa_j}$ \changed{and consists of at most two variables. The objective is to derive an assignment of variables that maximizes the sum of weights of satisfied clauses. In the flip neighborhood, two assignments are neighboring if they vary in the assignment of exactly one variable.} We construct an input instance to \MaxAssetsv\ as follows. First, include banks $x_i, x_i^T$ and $x_i^F$ for every variable and a bank $\kappa_j$ for every clause. Additionally add bank $v$ and auxiliary bank $s_i$ for all $i\in[k]$.

Before specifying the stubs and their weights for every bank, we describe a binary encoding for every bank corresponding to a literal. Choose integer $\ell$ such that $2^{\ell}\geq w_{\kappa_j}$ for all $j$ and $\ell > 1+\log k$. Then, set the $\ell$-th lowest bit to 1. Use the bits $2$ to $1+\log k$ to encode the index $i$ of the literal and set the first bit from the right to $0$ if the literal is $\neg x_i$ and to $1$ otherwise. All other bits are set to $0$. Observe that the weights assigned to each literal are pairwise different and at least $w_{\kappa_j}$ for all $j$. \changed{The use of these different weights limits the number of possible matchings of the stubs. In particular, when assigning one of the unique weights to only two stubs, the existence of a specific edge in each set $E$ can be enforced. We discuss the exact effects of the weights below.} For simplicity, denote the resulting value for each bank $x_i^T$ by $M_i^T$, and for each $x_i^F$ by $M_i^F$. 

We construct the stubs as follows. Bank $v$ has outdegree zero and $\ell$ incoming stubs such that there exists exactly one with weight $w_{\kappa_j}$. Each bank $\kappa_j$ has exactly one outgoing stub with weight $w_{\kappa_j}$. Furthermore, if literal $x_i$ appears in the clause, then we add an incoming stub with weight $M_i^T$ -- or a stub with weight $M_i^F$ if $\neg x_i$ appears. For every bank $x_i^T$, there is a stub for each clause the literal $x_i$ appears in, each with weight $M_i^T$. The analogous condition holds for $x_i^F$. Using $M_i = \max\{M_i^T,M_i^F\}$, we create a single incoming stub with weight $kM_i$ for each $x_i^T$ and $x_i^F$. Lastly, for every bank $x_i$ and $s_i$, we set the indegree to zero and create one outgoing stub with weight $kM_i$. Observe that no set of edges $E$ consistent with the stubs contains multi-edges.

Finally, bank $x_i$ has external assets of $kM_i$, for every $i \in [k]$.

Overall, the construction of the \MaxAssetsv\ instance is polynomial in the size of the instance of \maxSat. For an example construction see Figure~\ref{fig:Max2Sat}.

For the clearing oracle, each bank with at most one outgoing stub has a trivial clearing payment -- simply assign all assets to the outgoing stub (if any) until it is fully paid. The banks $x_i^T$ and $x_i^F$ can have multiple outgoing stubs. We argue below that in all consistent networks each of these banks has either 0 assets or enough assets to fully pay all outgoing stubs. This implies that for a given consistent network and a given bank, the payments of the bank are the same, for \emph{every} allocation rule (edge-ranking, proportional, etc., even for non-monotone ones).

It is straightforward to see whether a bank has 0 assets or can pay all edges fully. Thus, defining a clearing payment is possible in polynomial time for every bank in every consistent network, for every (or, more precisely, independent of the specific) allocation rule.

We now prove correctness of the reduction. Observe that, by construction, every consistent set of edges must contain edge $(\kappa_j,v)$ since for each $w_{\kappa_j}$ there exists exactly one bank with an outgoing and one bank with an incoming stub. Similarly, $x_i^T$ is the only bank that has outgoing stubs with weight $M_i^T$. Additionally, for every bank $\kappa_j$ with an incoming stub with weight $M_i^T$ the corresponding clause includes literal $x_i$. Hence, there is an edge $(x_i^T,\kappa_j)$ in every consistent set of edges if and only if $x_i$ appears in clause $\kappa_j$. The same argument extends to literals $x_i^F$. Thus, the consistent sets of edges only vary in the edges of banks $x_i$ and $s_i$ to $x_i^T$ and $x_i^F$. Note that for every $i$ it is possible to either choose (1) edges $(x_i,x_i^T)$ and $(s_i,x_i^F)$ or (2) $(x_i,x_i^F)$ and $(s_i,x_i^T)$, each with the same edge weight $kM_i$. In case (1) we interpret literal $x_i$ to be assigned with value TRUE and $\neg x_i$ with value FALSE. On the other hand, in case (2) literal $\neg x_i$ is interpreted as TRUE and $x_i$ as FALSE. In conclusion, every consistent set of edges corresponds to a truth assignment of variables in $\varphi$ and vice versa.

By construction, exactly one of the banks $x_i^T$ or $x_i^F$ has strictly positive incoming assets. This bank defines the corresponding assignment for the variable. By construction, this bank is then able to settle all debt (while the other one settles no debt at all). As a consequence, bank $\kappa_j$ has a fully saturated incoming edge if and only if there exists a satisfying literal in the corresponding clause. Then, by construction, $\kappa_j$ can settle all debt towards $v$. As a result, for any consistent network, the incoming assets of $v$ equal the sum of weights of satisfied clauses in $\varphi$ for the corresponding assignment. In other words, the local optima for the given instance of \maxSat\ correspond exactly to the local optimal networks w.r.t.\ $v$-improving debt swaps.
\end{proof}

This shows that \MaxAssetsv\ with \Swap\ neighborhood is \classPLS-complete. Moreover, tightness of the reduction implies that there exist initial states from which every execution of the standard local search algorithm requires exponentially many steps to reach a local optimum~\cite{Yannakakis03,schaffer1991simplelocalsearchproblems}. The standard local search algorithm for \MaxAssetsv\ yields a sequence of debt swaps.

\begin{corollary}
    \label{cor:PLS}
    It is \classPLS-complete to compute a local optimum for \MaxAssetsv\ with \Swap\ neighborhood. There is an instance of \MaxAssetsv\ and an initial network $\F$ such that every sequence of arbitrary $v$-improving swaps from $\F$ to a local optimum has exponential length.
\end{corollary}
In the reduction of Theorem~\ref{thm:PLS}, each possible network has the property that a bank has either 0 assets or enough assets to fully pay all outgoing stubs. 
The clearing oracle in these networks is trivial, and all payments are completely independent of the allocation rule. 
\begin{corollary}
    The results of Corollary~\ref{cor:PLS} generalize to \MaxAssetsv\ with the \Swap\ neighborhood for any monotone allocation rule with efficient clearing oracle.
\end{corollary}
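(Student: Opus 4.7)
The plan is to inspect the networks produced by the PLS-reduction of Theorem~\ref{thm:PLS} and argue that the edge-ranking rule played no role in determining payments. The key observation, already highlighted in the remark preceding the corollary, is that the construction enjoys the following dichotomy property: in every network consistent with the stubs (hence in every network reachable via \Swap), each bank $v$ satisfies either $b_v = 0$ or $b_v \ge L_v$. The first step of the proof is to verify that this dichotomy indeed holds throughout the reduction. Since a debt swap only exchanges the targets of two equal-weight edges and leaves the stub multiset unchanged, it suffices to check that the gadgets used in Theorem~\ref{thm:PLS} are designed so that every possible matching of stubs induces the dichotomy. This should follow directly from the structural properties of the gadgets used to simulate the \maxSat\ variables and clauses.

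Given the dichotomy, the second step is to show that the clearing state is then uniquely determined by the three axioms of any monotone allocation rule, without any reference to the specific rule. Indeed, limited flow conservation yields $\sum_{e \in E^+(v)} f_e(b_v) = \min\{b_v,L_v\}$. If $b_v = 0$, then this sum is $0$; since payments are non-negative, $f_e(b_v) = 0$ on every outgoing edge. If $b_v \ge L_v$, the sum equals $L_v = \sum_{e \in E^+(v)} l_e$, and since edge capacity forces $f_e(b_v) \le l_e$, equality must hold, so $f_e(b_v) = l_e$ for every outgoing edge. In both cases the payments agree exactly with those produced by the edge-ranking rule in the proof of Theorem~\ref{thm:PLS}.

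The third step is to transport the reduction verbatim. Because the clearing oracle $\mathcal{O}$ is efficient by hypothesis, the \Swap\ neighborhood can still be searched in polynomial time, and the value of the objective $\sig{a}_v$ remains identical to that under edge-ranking across all reachable networks. Consequently, the local optima coincide, the tight PLS-reduction of Theorem~\ref{thm:PLS} still applies, and the lower-bound construction implying exponentially long \Swap\ sequences from some initial $\F$ is preserved. Both conclusions of Corollary~\ref{cor:PLS} therefore carry over to any monotone allocation rule equipped with an efficient clearing oracle.

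The main obstacle is the first step: one must check carefully that the dichotomy is invariant under every swap admissible in the reduction, rather than only at the start. Once this gadget-level verification is complete, the remainder of the argument is immediate from the axioms of monotone allocation rules and the black-box availability of $\mathcal{O}$.
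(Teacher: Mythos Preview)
Your proposal is correct and follows essentially the same approach as the paper: the paper's proof is simply the observation (stated just before the corollary) that in the instances built in Theorem~\ref{thm:PLS} every bank has either $0$ assets or enough to saturate all outgoing stubs, so the clearing payments are independent of the allocation rule. Your write-up is a more careful unpacking of this, and your worry about the ``main obstacle'' is already discharged in the proof of Theorem~\ref{thm:PLS}, which verifies the dichotomy for \emph{all} consistent networks rather than only the initial one.
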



\section{Optimization} \label{sec:opt}

In this section, we study (global) optimization problems based on sequences of debt swaps. In particular, for a given initial network, we are interested in constructing sequences of debt swaps to a reachable network that maximize either the resulting total assets of a specific bank or the sum of total assets for all banks. We consider sequences of semi-positive swaps, $v$-improving swaps, arbitrary swaps, and also a restriction where a specific bank $v$ must be involved in every swap.

Before considering sequences of swaps, we prove a slightly more general result. Therefore, consider the \MaxAssetsv\ problem defined in Section~\ref{sec:localArb} without the neighborhood relation. Instead of local optima we are now interested in \emph{global} ones.
In the same way, we define a \MaxAssets\ problem, where instead of the assets of $v$ the objective function is the sum of total assets of all banks. 

We again assume to have access to a clearing oracle as discussed in Section~\ref{sec:localArb}. Similar to the property in the proof of Theorem~\ref{thm:PLS}, our reductions always ensure that every reachable network has a clearing state that is independent of the allocation rule and can be computed in polynomial time. As such, all results in this section apply beyond edge-ranking rules to all monotone allocation rules.
Additionally, none of the results in this section relies on multi-edges. For this reason, we adopt the notation for simple graphs.

\begin{theorem}\label{them:opt-setCover}
    \MaxAssets\ and \MaxAssetsv\ are \classNP-hard.
\end{theorem}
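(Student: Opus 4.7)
The plan is to reduce from weighted \maxSat\ using the very construction developed in the proof of Theorem~\ref{thm:PLS}. Given any \maxSat\ formula $\varphi$, that proof builds stubs together with a designated bank $v$ such that (i) the set of consistent networks bijects with the truth assignments of $\varphi$, (ii) the clearing state of any consistent network is independent of the monotone allocation rule and computable in polynomial time (because every bank either has zero assets or fully pays all of its outgoing stubs), and (iii) the total assets of $v$ equal the weight of clauses satisfied by the corresponding assignment. Since weighted \maxSat\ is already \classNP-hard in its decision form, deciding whether the assets of $v$ can attain a given threshold in some consistent network is \classNP-hard, giving the result for \MaxAssetsv.

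For \MaxAssets\ I would modify the same construction so that the \emph{total} social welfare is also an affine-monotone function of the satisfied-clause weight. In the original gadget, only the clause banks $\kappa_j$ and $v$ have assignment-dependent assets: the banks $x_i$ and $s_i$ contribute only their fixed external assets, and the pair $\{x_i^T, x_i^F\}$ always receives exactly $kM_i$ in total regardless of the assignment. The only mild complication is that a clause bank $\kappa_j$ may hold residual assets strictly greater than $w_{\kappa_j}$ when several of its literals happen to be simultaneously satisfied. I would eliminate this complication by appending one additional low-priority outgoing stub from each $\kappa_j$ to a fresh sink bank; under any edge-ranking that prioritizes the $(\kappa_j,v)$-edge, the clause bank pays exactly $w_{\kappa_j}$ to $v$ whenever it is satisfied and the sink absorbs any excess. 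The social welfare then equals an assignment-independent constant plus $2\sum_{j\text{ satisfied}} w_{\kappa_j}$, so maximizing it again solves weighted \maxSat.

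The main subtlety is to verify that this modification preserves stub consistency and keeps the clearing oracle polynomial-time and allocation-rule-independent; both follow from the ``pay-all-or-pay-nothing'' structure of consistent networks, exactly as in Theorem~\ref{thm:PLS}. I would therefore only need to re-check these two properties explicitly rather than re-derive the reduction. Alternatively, one could give a direct reduction from \setCover\ or \vertexCover\ along the lines suggested by the comment table in the introductory overview, but the \maxSat-based route above reuses a gadget whose correctness has already been established and therefore avoids introducing and verifying a new one.
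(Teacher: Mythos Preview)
Your treatment of \MaxAssetsv\ is correct and is a genuinely different route from the paper: the paper reduces directly from \setCover, whereas you recycle the \maxSat\ gadget of Theorem~\ref{thm:PLS}. Since that gadget already yields a bijection between consistent networks and truth assignments with $a_v$ equal to the satisfied-clause weight, and weighted \maxSat\ is \classNP-hard as a decision problem, your argument for \MaxAssetsv\ goes through cleanly and has the advantage of reusing an already-verified construction.

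Your argument for \MaxAssets, however, has a real gap. The claim that, after adding sinks, the social welfare equals an assignment-independent constant plus $2\sum_{j\text{ satisfied}} w_{\kappa_j}$ is false. The clause bank $\kappa_j$'s \emph{incoming} assets depend not only on whether $\kappa_j$ is satisfied but on \emph{which} literal satisfies it: if $\kappa_j=(x_i\lor x_{i'})$ is satisfied by $x_i$ alone it receives $M_i^T$, if by $x_{i'}$ alone it receives $M_{i'}^T$, and these values are \emph{deliberately distinct} in the Theorem~\ref{thm:PLS} encoding (they encode the variable index). Hence $\sum_j a_{\kappa_j}$ already varies across assignments that satisfy exactly the same set of clauses. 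Attaching a sink $z_j$ does not touch $\kappa_j$'s incoming assets; it merely forwards the (still variable) excess to $z_j$, so the combined contribution of $\kappa_j$ and $z_j$ to welfare becomes $2\,a_{\kappa_j}-w_{\kappa_j}$, which is even more sensitive to the assignment than before. A second, smaller issue: once $\kappa_j$ has an outgoing edge it cannot fully pay, its clearing state is no longer allocation-rule-independent, contradicting your stated invariant.

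The paper handles \MaxAssets\ with a different and simpler idea: it appends to $v$ a long path of $B$ fresh banks with pairwise distinct edge weights (so no new swaps arise), each of which inherits $v$'s assets. For $B$ chosen polynomially large, the $B$-fold amplification of $a_v$ dominates the bounded variation in the rest of the network, so maximizing total welfare reduces to maximizing $a_v$. This amplification trick is the missing ingredient in your proposal; it would graft onto your \maxSat\ gadget just as well as onto the paper's \setCover\ gadget.
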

\begin{proof}
\begin{figure}[t]
	\centering
	\resizebox{!}{0.35\textwidth}{
	\begin{tikzpicture}[>=stealth', shorten >=1pt, auto,
		node distance=1cm, scale=1, 
		transform shape, align=center, 
		bank/.style={circle, draw, inner sep=2.5pt}]
		\node[bank] (u1) at (0,0) {$u_1$};
		\node[bank] (u2) [above=of u1] {$u_2$};
		\node (udots) [above=0.5cm of u2] {$\vdots$};
		\node[bank] (ul) [above=0.5cm of udots] {$u_{\ell}$};
		\node[circle, draw, inner sep=1.7] (S1) at (1.5,0) {$S_1$};
	 	\node[circle, draw, inner sep=1.7] (S2) [above=of S1] {$S_2$};
		\node (Sdots) [above=0.5cm of S2] {$\vdots$};
		\node[circle, draw, inner sep=1.7] (Sl) [above=0.5cm of Sdots] {$S_{\ell}$};
		\node[bank] (x1) at (5,-1) {$x_1$};
		\node[bank] (x2) [above=of x1] {$x_2$};
		\node[bank] (x3) [above=of x2] {$x_3$};
		\node (xdots) [above=0.5cm of x3] {$\vdots$};
		\node[bank] (xk) [above=0.5cm of xdots] {$x_k$};
		\node[circle, draw, inner sep=4] (v) [right=7cm of u2] {$v$};
		\node (ancher) [right=of v] {};
		\node (wdots) [above=0.25cm of ancher] {$\vdots$};
		\node[bank] (w2) [below=0.25cm of ancher] {$w_2$};
		\node[bank] (w1) [below=of w2] {$w_1$};
		\node[bank] (wl) [above=of wdots] {$w_{\lambda}$};
		\node[rectangle, draw=black, inner sep=2.5pt] (xu1) [left=0.1cm of u1] {\small{$M$}};
		\node[rectangle, draw=black, inner sep=2.5pt] (xu2) [left=0.1cm of u2] {\small{$M$}};
		\node[rectangle, draw=black, inner sep=2.5pt] (xul) [left=0.1cm of ul] {\small{$M$}};
		\draw[->] (u1) to node[below, midway] {\footnotesize{$M$}} (S1);
		\draw[->] (u2) to node[below, midway] {\footnotesize{$M$}} (S2);
		\draw[->] (ul) to node[below, midway] {\footnotesize{$M$}} (Sl);
		\draw[->] (S1) to node[below, pos=0.2] {\footnotesize{$M_1$}} (x1);
		\draw[->] (S1) to node[below=0.1cm, pos=0.2] {\footnotesize{$M_1$}} (x3);
		\draw[->] (S2) to node[above=0.15cm, pos=0.2] {\footnotesize{$M_2$}} (x1);
		\draw[->] (S2) to node[below=0.15cm, pos=0.2] {\footnotesize{$M_2$}} (xk);
		\draw[->] (Sl) to node[below=0.2cm, pos=0.1] {\footnotesize{$M_{\ell}$}} (x2);
		\draw[->] (Sl) to node[above=0.1cm, pos=0.2] {\footnotesize{$M_{\ell}$}} (x3);
		\draw[->] (Sl) to node[above=0.1cm, pos=0.2] {\footnotesize{$M_{\ell}$}} (xk);
		\draw[->] (x1) to node[below=0.2cm, pos=0.2] {\footnotesize{$1$}} (v);
		\draw[->] (x2) to node[below=0.1cm, pos=0.2] {\footnotesize{$1$}} (v);
		\draw[->] (x3) to node[below=0.1cm, pos=0.2] {\footnotesize{$1$}} (v);
		\draw[->] (xk) to node[below=0.2cm, pos=0.2] {\footnotesize{$1$}} (v);
		\draw[->] (w1) to node[below=0.3cm, pos=0.3] {\footnotesize{$M$}} (v);
		\draw[->] (w2) to node[below=0.2cm, pos=0.3] {\footnotesize{$M$}} (v);
		\draw[->] (wl) to node[below=0.2cm, pos=0.3] {\footnotesize{$M$}} (v);
	\end{tikzpicture}}
\caption{Example construction for reduction with \setCover\ where $\lambda=\ell-c$. Edge labels indicate liabilities.}
\label{fig:setCover}
\end{figure}
    We first show the statement for \MaxAssetsv. \changed{While we eventually focus on the \vertexCover\ problem in Corollaries~\ref{cor:maxAssetHardRank} and \ref{cor:maxAssetHard} below, we will describe a reduction from the more general \setCover\ problem to avoid confusion in the reference to graph-theoretic concepts.} An instance of the \setCover\ problem is given by items $\mathcal{X}=\{x_1,x_2,\dots,x_k\}$ and sets $S_1,S_2,\dots,S_{\ell}$ such that $\bigcup_{j \in [\ell]}S_j = \mathcal{X}$, where $[\ell]:= \{1,2,\dots,\ell\}$, \changed{as well as an integer $c > 0$. The goal is to decide whether or not there exists a set cover, i.e., a subset $\mathcal{C} \subset [\ell]$ with $\bigcup_{j \in \mathcal{C}} S_j = \mathcal{X}$, that has cardinality $|\mathcal{C}| = c$.} We construct an instance of \MaxAssetsv\ with banks $v$, $x_i$ for every item and $S_j$ for every set. Moreover, we add banks $u_1,u_2,\dots,u_{\ell}$ and $w_1,w_2,\dots,w_{\ell-c}$ for some integer $c$ satisfying $0<c<\ell$. 
 
    Similar to the proof of Theorem~\ref{thm:PLS} we define a value for every set $S_j$ in binary encoding \changed{to enforce a subset of edges in every set $E$}. First, let $d$ be the smallest integer satisfying $2^{d} \geq \max_j |S_j|$ and $d > \log \ell$. Then, set the $d$-th lowest bit to $1$ and encode index $j$ using the lowest $\log \ell$ bits. All other bits are set to $0$. For simplicity, denote the resulting value by $M_j$ for every $j$ and let $M=\max_j M_j$. Observe that all values $M_j$ are pairwise different and lower bounded by $\max_j |S_j|$. The number of bits is polynomial in the input size.
    
    Banks $u_j$ and $w_j$ each have no incoming but one outgoing stub with weight $M$ where the external assets of $u_j$ are set to the same value. If item $x_i$ is element of set $S_j$ in the input instance add an incoming stub to bank $x_i$ and an outgoing stub to bank $S_j$ each with weight $M_j$. Moreover, every bank $S_i$ has one incoming stub with weight $M$. Additionally, include an outgoing stub with weight 1 to each $x_i$ and $k$ incoming stubs with weight $1$ to $v$. Finally, add $\ell-c$ incoming stubs to $v$ with weight $M$. 
    
    Banks $S_j$ are the only banks with more than one outgoing stub. We will argue below that every $S_j$ either has no assets or enough to settle all debt. For this reason, the clearing state can be computed efficiently for every monotone allocation rule.
    
    Clearly, the construction can be performed in polynomial time with respect to the input size of the \setCover\ instance. Figure~\ref{fig:setCover} illustrates an example construction.
    
    By construction, every consistent set of edges must contain $(S_j,x_i)$ with weight $M_i$ if element $x_i$ is included in set $S_j$. Moreover, since only banks $x_i$ have an outgoing stub with weight 1 and $v$ has $k$ incoming stubs with weight 1, edges $(x_i,v)$ must be contained in every set of edges. Finally, banks $w_i$ and $u_i$ each must have an edge either towards $v$ or a bank $S_j$. If bank $S_j$ has an incoming edge from a bank $w_i$ it has total assets of zero, or otherwise enough assets to pay off all debt if it has an incoming edge from $u_i$.
    
    We claim that there exists a set cover with size $c$ if and only if there exists a consistent set of edges such that $v$ has total assets of $M(\ell-c)+k$ in the corresponding network.
    
    Suppose there exists a set cover $\mathcal{C}\subset [\ell]$ with $|\mathcal{C}|= c$ for the input instance. Then, for every $j \in \mathcal{C}$ include an edge from $u_j$ to $S_j$. As a result, $S_j$ has incoming payments of $M$. For the remaining $\ell-c$ banks $S_j$, choose an incoming edge from a bank $w_i$ each. Hence, no bank $S_j$ with $j \notin \mathcal{C}$ has positive incoming payments. Observe that there are exactly $\ell-c$ many remaining banks $u_i$. For every one of them include the edge $(u_i,v)$ yielding incoming assets of at least $M(\ell-c)$ for $v$ in the resulting network. Observe that the set of banks $S_j$ with incoming assets corresponds to the collection $\mathcal{C}$ covering all elements. Hence, by construction of $M_j$ every bank $x_i$ receives incoming assets of at least 1 and forwards the payments to $v$. In conclusion, $v$ has total assets of $M(\ell-c)+k$.
    
    For the other direction, assume there exists a network with consistent set of edged where $v$ has total assets of $M(\ell-c)+k$. Because the weights of edges $(x_i,v)$ are restricted to 1 and for sufficiently large $M$, bank $v$ can only achieve incoming assets of at least $M(\ell-c)$ if $\ell-c$ banks $u_i$ pay all their assets directly to $v$. The remaining assets of $k$ then must be payments from banks $x_i$. Hence, every $x_i$ receives payments of at least 1. Consequently, there exists a collection of banks $S_j$ with cardinality $c$ covering all $x_i$.  
    
    To show the statement for $\MaxAssets$, add $B$ banks $b_1,b_2,\dots,b_B$ 
    where $b_i$ has an incoming stub with weight $B+i$ and an outgoing stub with weight $B+i+1$, for $i<B$. Add an outgoing stub with weight $B+1$ to $v$ and incoming stub with weight $2B$ to $b_B$. 
    Observe that \changed{for a sufficiently large} $B$ there exist exactly two stubs with weight $B+i$. Hence, there exists only one consistent set of edges that joins the stubs to form a path from $v$ to $b_B$. Every bank $b_i$ on the path has the same total assets as $v$. When $B$ is sufficiently large, say, $B = 3(k+\ell)$, then maximizing the sum of total assets in the network reduces to maximizing the total assets of $v$. 
\end{proof}

The reduction directly implies the following hardness of approximation.

\begin{corollary}
    \label{cor:maxAssetHardRank}
    \MaxAssets\ and \MaxAssetsv\ are \classAPX-hard.
\end{corollary}
\begin{proof}
    Recall that \vertexCover\ is \classAPX-hard in 3-regular graphs~\cite{AlimontiK00}. Interpreting a 3-regular graph with $n'$ nodes as an instance of set cover, we obtain $\ell = n'$ sets and $k = 3n'$ elements. Clearly, every feasible set cover in these instances has size $\Theta(n')$. Applying the reduction from Theorem~\ref{them:opt-setCover}, we observe that a set cover of size $c$ corresponds to bank $v$ obtaining assets $M(\ell - c) + k$, where $M > \ell$. As such, in these instances the assets are in $M \cdot \Theta(n')$ for every consistent network corresponding to a feasible set cover. As a consequence, the constant-factor approximation gap of \vertexCover\ translates to a constant-factor gap for optimizing the assets of $v$. Moreover, since the adaptation to maximizing the sum of all assets requires only $\Theta(k + \ell) = \Theta(|V|)$ additional banks, the result extends to this case as well.
\end{proof}

We proceed to show how this result implies hardness results for sequences of swaps. In the instances constructed in the \changed{proof of Theorem~\ref{them:opt-setCover}}, set $S_j$ is included in the cover $\mathcal{C}$ if and only if bank $S_j$ has positive incoming assets. Every network with consistent set of edges corresponds to some collection of sets $\mathcal{C}$, and vice versa. Suppose we are given such a network for a collection $\mathcal{C}$. We can perform a debt swap where the debtor of one bank $S_j$ is swapped from a bank $w_h$ to a bank $u_i$ or the other way around. In this way, we remove/add $S_j$ from/to the corresponding collection of subsets. Thus, we can reach the network corresponding to the optimal set cover $\mathcal{C}^*$ via a sequence of at most $|(\mathcal{C} \cup \mathcal{C}^*) \setminus (\mathcal{C} \cap \mathcal{C}^*)|$ many (arbitrary, not necessarily $v$-improving) debt swaps.

\begin{corollary}
    \label{cor:maxAssetHard}
     For a given financial network with monotone allocation rules, an efficient clearing oracle, and a bank $v$, it is \classAPX-hard to compute a sequence of arbitrary debt swaps such that (1) the total assets of $v$ are maximal or (2) the sum of total assets is maximal. 
\end{corollary}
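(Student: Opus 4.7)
The plan is to combine the hardness reduction from Theorem~\ref{them:opt-setCover} with the constructive reachability result of Theorem~\ref{thm:greedy}. Concretely, given an instance of \vertexCover\ on a 3-regular graph, apply the reduction in the proof of Theorem~\ref{them:opt-setCover} to obtain stubs, external assets and a designated bank $v$, together with a polynomial-time clearing oracle that is independent of the concrete monotone allocation rule (since each bank $S_j$ either receives $0$ or enough assets to fully pay its outgoing stubs). This instance already has the \classAPX-gap established in the previous corollary: constant-factor approximations to the maximal assets of $v$ (respectively to the maximal sum of total assets, using the same padding path through the $b_i$-banks) are ruled out under $\classP\neq\classNP$.

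Next, I would pick \emph{any} feasible initial network $\F$ consistent with these stubs (which exists and can be computed in polynomial time by solving the bipartite matching of stubs mentioned in Section~\ref{sec:local}; for instance, the network corresponding to the trivial collection $\mathcal{C}=\{S_1,\dots,S_{\ell}\}$ restricted to any fixed $\ell-c$ banks $u_i$). Given any target network $\G$ consistent with the same stubs, Theorem~\ref{thm:greedy} guarantees that $\G$ is reachable from $\F$ by a sequence of at most $O(m)$ arbitrary debt swaps, and such a sequence can be constructed in polynomial time. Hence, over all sequences of arbitrary debt swaps starting from $\F$, the set of reachable networks coincides exactly with the set of networks consistent with the stubs of $\F$.

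Therefore, maximizing the total assets of $v$ (or the sum of all total assets) over all sequences of arbitrary debt swaps starting from $\F$ is the same optimization problem as \MaxAssetsv\ (respectively \MaxAssets) on the stub instance. An $\alpha$-approximation algorithm for either swap-sequence problem would thus yield an $\alpha$-approximation for the corresponding global optimization problem, contradicting the \classAPX-hardness established after Theorem~\ref{them:opt-setCover}. The main (minor) obstacle is ensuring that the fixed initial network $\F$ is itself poly-time constructible and consistent with the stubs; this is immediate from the matching-based consistency test, and all other ingredients — reachability, polynomial sequence length, and the rule-independent clearing oracle — are already in place.
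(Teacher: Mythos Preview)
Your argument is correct and reaches the same conclusion as the paper, but via a slightly different mechanism. The paper does not invoke Theorem~\ref{thm:greedy}; instead it argues directly, within the specific \setCover\ construction, that a single debt swap exchanging the debtor of some $S_j$ between a $u_i$-bank and a $w_h$-bank adds or removes $S_j$ from the current collection, so any network corresponding to a collection $\mathcal{C}$ can be turned into the one for the optimal cover $\mathcal{C}^*$ by at most $|(\mathcal{C}\cup\mathcal{C}^*)\setminus(\mathcal{C}\cap\mathcal{C}^*)|$ arbitrary swaps. Your route---fix any consistent initial $\F$ and appeal to Theorem~\ref{thm:greedy} to conclude that the reachable set is exactly the set of stub-consistent networks---is more modular and arguably cleaner, since it reuses a result already proved in the paper rather than re-deriving reachability by hand for this particular instance. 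The paper's version, in turn, makes the swap sequence completely explicit and keeps the argument self-contained within the appendix. Either way, the reduction to the \classAPX-hardness of \MaxAssetsv/\MaxAssets\ (the corollary right after Theorem~\ref{them:opt-setCover}) goes through identically.
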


We strengthen this result as follows. Suppose we start with the collection $[\ell]$, i.e., all sets are in the collection. The optimal set cover $\mathcal{C}^*$ can be obtained by iterative removal of sets from $[\ell]$. Each removal represents a debt swap in the corresponding financial network. It is straightforward to verify that each such debt swap is $v$-improving. Hence, from the initial network corresponding to $[\ell]$, there is a sequence of at most $\ell$ $v$-improving debt swaps to the network corresponding to $\mathcal{C}^*$.

\begin{corollary}
    \label{cor:maxAssetImproveHard}
     For a given financial network with monotone allocation rules, an efficient clearing oracle, and a bank $v$, it is \classAPX-hard to compute a sequence of $v$-improving debt swaps such that (1) the total assets of $v$ are maximal or (2) the sum of total assets is maximal. 
\end{corollary}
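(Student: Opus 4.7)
The plan is to reuse the instance constructed in the proof of Theorem~\ref{them:opt-setCover} (reducing from $3$-regular Vertex Cover viewed as \setCover), noting that in that construction $|S_j| = 3$ for every $j$ and $M \geq \ell$ by the binary encoding. As the initial network $\F_0$ I would fix the consistent network corresponding to the collection $\mathcal{C}_0 = [\ell]$: each $u_j$ is matched to $S_j$, and each $w_i$ is matched to one of the $\ell - c$ incoming $M$-stubs of $v$. In $\F_0$ every element $x_i$ is covered, so $v$ has total assets exactly $k$, since the incoming $M$-edges arrive from $w$-banks with no external assets.

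The key step is to show that from any intermediate network with collection $\mathcal{C}$ satisfying $|\mathcal{C}| > c^*$, removing any chosen $j \in \mathcal{C} \setminus \mathcal{C}^*$ is a single $v$-improving debt swap. Since $|\mathcal{C}| > c^* \geq c$, at least one $w_i$ still pays $v$ (with payment $0$); swapping the edges $(u_j, S_j)$ and $(w_i, v)$ produces $(u_j, v)$ and $(w_i, S_j)$, all four of weight $M$, so it is a valid debt swap. After the swap, $v$ gains the full payment $M$ from $u_j$, while the only possible loss consists of the $x_i$-payments from elements that become uncovered. Each such element contributes $1$ to $v$, and at most $|S_j| = 3$ elements are affected, so the loss is at most $3$. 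Since $M \geq \ell$ exceeds $3$ in the asymptotic regime in which $3$-regular Vertex Cover remains \classAPX-hard, the net change in $a_v$ is strictly positive.

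Iterating this operation along any ordering of $[\ell] \setminus \mathcal{C}^*$, where $\mathcal{C}^*$ denotes a minimum set cover, yields a $v$-improving sequence of $\ell - c^*$ debt swaps from $\F_0$ to the network associated with $\mathcal{C}^*$, in which $v$ attains the globally maximum value $M(\ell - c^*) + k$. Since this is also the maximum of $a_v$ over all networks consistent with the stub profile (Theorem~\ref{them:opt-setCover}), the best $a_v$ reachable by a $v$-improving sequence from $\F_0$ coincides with the global maximum. Therefore the constant-factor inapproximability of $3$-regular Vertex Cover transfers verbatim, giving statement~(1). Statement~(2) follows by attaching the auxiliary path $b_1, \dots, b_B$ with $B = \Theta(k + \ell)$ exactly as at the end of the proof of Theorem~\ref{them:opt-setCover}: its edges are never touched by any swap of our sequence, and for sufficiently large $B$ maximizing the sum of all assets reduces to maximizing $a_v$.

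The main subtlety I would verify carefully is that every intermediate network remains consistent and that the clearing oracle stays trivial and rule-independent along the sequence. This follows from the invariant preserved by every swap: each $S_j$ retains exactly one incoming $M$-edge (originating from either a $u$-bank or a $w$-bank), and $v$ retains exactly $\ell - c$ incoming $M$-edges. Hence at every step each $S_j$ either is fully funded and distributes its $M_j$-stubs saturating them, or receives nothing; the clearing can still be computed in polynomial time independently of the allocation rule, so the hardness extends to every monotone allocation rule admitting an efficient clearing oracle.
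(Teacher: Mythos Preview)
Your proposal is correct and follows exactly the paper's approach: the paper's own argument for this corollary is the one-sentence observation that, starting from the network corresponding to the full collection $[\ell]$, one can iteratively remove sets via debt swaps until reaching $\mathcal{C}^*$, and ``it is straightforward to verify that each such debt swap is $v$-improving.'' You have simply filled in that verification (gain $M$, lose at most $|S_j|=3$) and spelled out the invariants on the clearing oracle, which is additional detail the paper omits.
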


\paragraph{Semi-Positive Swaps}
In contrast to $v$-improving swaps, the reduction in Theorem~\ref{them:opt-setCover} does not readily extend to sequences of semi-positive swaps. In the previous section, we proved a number of positive results for semi-positive swaps (and edge-ranking rules).

Here we show hardness of approximation for maximizing assets via choosing an optimal sequence of semi-positive swaps.
Our construction exploits semi-positivity more directly since it restricts the set of reachable networks from a particular initial network -- solving only the \MaxAssetsv\ problem without this restriction is easy in the instances we consider.

\begin{theorem}\label{thm:opt-IS}
    For a given financial network with monotone allocation rules, an efficient clearing oracle and a bank $v$, it is \classNP-hard to compute a sequence of semi-positive debt swaps such that (1) the total assets of $v$ are maximal and (2) the sum of total assets is maximal.
\end{theorem}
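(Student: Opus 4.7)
The plan is to reduce from \textsc{Maximum Independent Set}. Given a graph $G=(V,E)$, I would construct a financial network $\F$ with a designated target bank $v$ so that the networks reachable from $\F$ via sequences of semi-positive debt swaps correspond to independent sets of $G$, and so that the total assets of $v$ in the reached network scale linearly with the size of the chosen independent set. The general strategy is to build a \emph{vertex gadget} for each $v_i \in V$ that admits a single activation swap $\sigma_i$ whose effect in isolation is semi-positive and contributes an additional fixed payment $M$ to $v$; and to glue these gadgets together through \emph{edge gadgets}, one per $(v_i, v_j) \in E$, that block any sequence of semi-positive swaps from executing both $\sigma_i$ and $\sigma_j$.

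The vertex gadget is modelled on Example~\ref{ex:intro}: a bank $s_i$ with external assets $M$ initially routes its payment into a dead-end subtree, and the swap $\sigma_i$ exchanges one of its outgoing edges with an edge inside a would-be cycle, closing the cycle and routing $M$ units of flow along a path that terminates at $v$. In isolation $\sigma_i$ is a Pareto-improvement and therefore semi-positive by Theorem~\ref{thm:semi-swap-eq-Pareto-impro}. For the edge gadget of $(v_i, v_j)$, I plan to share a bank $w_{ij}$ on the routing paths of both $\sigma_i$ and $\sigma_j$ in such a way that the outgoing liability of $w_{ij}$ can absorb the payment from at most one of the two cycles. After $\sigma_i$ is performed, the outgoing edge of $w_{ij}$ is saturated by $\sigma_i$'s flow; attempting $\sigma_j$ would then divert a portion of that flow, strictly decreasing payments on an edge already received inside the path of $\sigma_i$ and violating Pareto-improvement (again using Theorem~\ref{thm:semi-swap-eq-Pareto-impro}). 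The analogous argument blocks $\sigma_i$ whenever $\sigma_j$ is performed first.

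Under this construction, the sequences of semi-positive swaps from $\F$ correspond exactly to choices of independent sets in $G$, and the assets of $v$ in the reached network equal a fixed baseline plus $M$ times the number of activations. Hence maximizing the assets of $v$ is equivalent to solving \textsc{Maximum Independent Set}, giving claim~(1). As in the previous reductions in this appendix, I would arrange the gadgets so that in every reachable network each bank either has zero assets or enough assets to fully pay all outgoing liabilities; this makes the clearing state independent of the allocation rule and computable in polynomial time, so the result extends to any monotone allocation rule with an efficient clearing oracle. For claim~(2), I would attach an outgoing amplification path from $v$ as in the proof of Theorem~\ref{them:opt-setCover}, so that maximizing the sum of assets is dominated by, and monotone in, the assets of $v$.

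The main obstacle I anticipate is the correctness of the edge gadget: the conflict between $\sigma_i$ and $\sigma_j$ must hold in both orders, must be robust against the many semi-positive swaps introduced by other gadgets, and must not be circumvented by non-activation swaps that reroute payments around $w_{ij}$. Addressing this requires a careful capacity analysis at each shared bank $w_{ij}$ and an argument that any alternative rearrangement either fails to be semi-positive (by the characterization in Proposition~\ref{prop:extension-swaps} and Theorem~\ref{thm:semi-swap-eq-Pareto-impro}) or leaves the assets of $v$ unchanged. Once this is established, the reduction is polynomial and the correspondence with independent sets is exact, yielding \classNP-hardness of the claimed optimization problems.
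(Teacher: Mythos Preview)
Your high-level plan matches the paper's: reduce from \textsc{Independent Set}, introduce one activation swap per vertex, and enforce the independence constraint via a shared bottleneck bank per edge whose outgoing capacity can absorb the contribution of at most one endpoint. Where the paper differs is in the concrete gadgetry, and this difference is what dissolves the obstacle you flag at the end. The paper builds a DAG (no cycles at all): each vertex bank $v_i$ has a dummy debtor $x_i$ (assets $0$) and there is a single global sink $u$ with debtors $y_i$ holding external assets $d_{v_i}+1$; the activation swap for $v_i$ is simply $(x_i,v_i)\leftrightarrow(y_i,u)$, which reroutes $y_i$'s assets through $v_i$ (one unit to $v$, the remaining $d_{v_i}$ through the incident edge-banks $e_j$, each of which forwards at most $1$ to $u$) instead of directly to $u$. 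The crucial simplification is that \emph{all} money in the network ultimately reaches $u$, so $u$ is the non-profiting creditor in every candidate swap; a swap is semi-positive iff $u$ loses nothing, which happens iff no edge-bank $e_j$ is fed by two activated endpoints. This immediately gives both directions of the correspondence and, because the only edges of equal weight not already forced are the $(x_i,v_i)/(y_i,u)$ pairs, also rules out circumventing swaps with a one-line argument (weight-$1$ edges cannot be swapped semi-positively since no path connects distinct $e_j$'s). Your cycle-closing vertex gadget and per-edge capacity analysis would likely work, but the paper's single-sink design sidesteps the robustness issues you anticipate; the amplification path for part~(2) and the allocation-rule independence are handled exactly as you propose.
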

\begin{proof}
\begin{figure}
\centering
\vspace{-2cm}
\hspace{-2cm}
\resizebox{!}{0.6\textwidth}{
	\begin{tikzpicture}[>=stealth', shorten >=1pt, auto,
		node distance=1cm, scale=1, 
		transform shape, align=center, 
		bank/.style={circle, draw, inner sep=1pt}]
		\node[bank] (x1) at (0,0) {$x_1$};
		\node[bank] (x2) [above=of x1] {$x_2$};
		\node (xdots) [above=0.25cm of x2] {$\vdots$};
		\node[bank] (xk) [above=0.75cm of udots] {$x_k$};
		\node[bank] (v1) [right=of x1] {$v_1$};
	 	\node[bank] (v2) [right=of x2] {$v_2$};
		\node (vdots) [above=0.5cm of S2] {$\vdots$};
		\node[bank] (vk) [right=of xk] {$v_k$};
		\node[bank] (e1) at (5,-1) {$e_1$};
		\node[bank] (e2) [above=of e1] {$e_2$};
		\node[bank] (e3) [above=of e2] {$e_3$};
		\node (edots) [above=0.5cm of e3] {$\vdots$};
		\node[bank] (el) [above=0.5cm of edots] {$e_{\ell}$};
		\node[circle, draw, inner sep=2.5pt] (u) [right=7cm of x2] {$u$};
		\node (ancher) [right=of u] {};
		\node (ydots) [above=0.25cm of ancher] {$\vdots$};
		\node[bank] (y2) [below=0.25cm of ancher] {$y_2$};
		\node[bank] (y1) [below=of y2] {$y_1$};
		\node[bank] (yk) [above=of ydots] {$y_k$};
		\node[circle, draw, inner sep=2.5pt] (v) [above=1.5cm of el] {$v$};
		\node[rectangle, draw=black, inner sep=2.5pt] (xy1) [right=0.1cm of y1] {\small{$d_{v_1}+1$}};
		\node[rectangle, draw=black, inner sep=2.5pt] (xy2) [right=0.1cm of y2] {\small{$d_{v_2}+1$}};
		\node[rectangle, draw=black, inner sep=2.5pt] (xyk) [right=0.1cm of yk] {\small{$d_{v_k}+1$}};
		\draw[->] (x1) to node[below, midway] {\footnotesize{$d_{v_1}+1$}} (v1);
		\draw[->] (x2) to node[below, midway] {\footnotesize{$d_{v_2}+1$}} (v2);
		\draw[->] (xk) to node[below, midway] {\footnotesize{$d_{v_k}+1$}} (vk);
		\draw[->] (v1) to node[below, pos=0.2] {\footnotesize{$1$}} (e1);
		\draw[->] (v1) to node[below=0.1cm, pos=0.2] {\footnotesize{$1$}} (e3);
		\draw[->] (v2) to node[below, pos=0.1] {\footnotesize{$1$}} (e1);
		\draw[->] (v2) to node[above, pos=0.2] {\footnotesize{$1$}} (e2);
		\draw[->] (v2) to node[above, pos=0.2] {\footnotesize{$1$}} (el);
		\draw[->] (vk) to node[below=0.2cm, pos=0.1] {\footnotesize{$1$}} (e2);
		\draw[->] (vk) to node[above=0.1cm, pos=0.2] {\footnotesize{$1$}} (e3);
		\draw[->] (e1) to node[below=0.2cm, pos=0.2] {\footnotesize{$1$}} (u);
		\draw[->] (e2) to node[below=0.1cm, pos=0.2] {\footnotesize{$1$}} (u);
		\draw[->] (e3) to node[below=0.1cm, pos=0.2] {\footnotesize{$1$}} (u);
		\draw[->] (el) to node[below=0.2cm, pos=0.2] {\footnotesize{$1$}} (u);
		\draw[->] (y1) to node[left=0.1cm, pos=0.3] {\footnotesize{$d_{v_1}+1$}} (u);
		\draw[->] (y2) to node[above=0.1cm, pos=0.1] {\footnotesize{$d_{v_2}+1$}} (u);
		\draw[->] (yk) to node[right, pos=0.3] {\footnotesize{$d_{v_k}+1$}} (u);
		\draw[->] (vk) to node[above, midway] {\footnotesize $1$} (v);
		\draw[->] (v2) to[in=170, out=150, looseness=2] node[left, midway] {\footnotesize $1$} (v);
		\draw[->] (v1) to[in=140, out=160, looseness=2.2] node[left, midway] {\footnotesize $1$} (v);
		\draw[->] (v) to[bend left] node[above=0.1cm, midway] {\footnotesize $k$} (u); 
	\end{tikzpicture}}
\caption{Example construction for reduction with \IS\ where edge labels indicate liabilities.}
\label{fig:IS}
\end{figure}
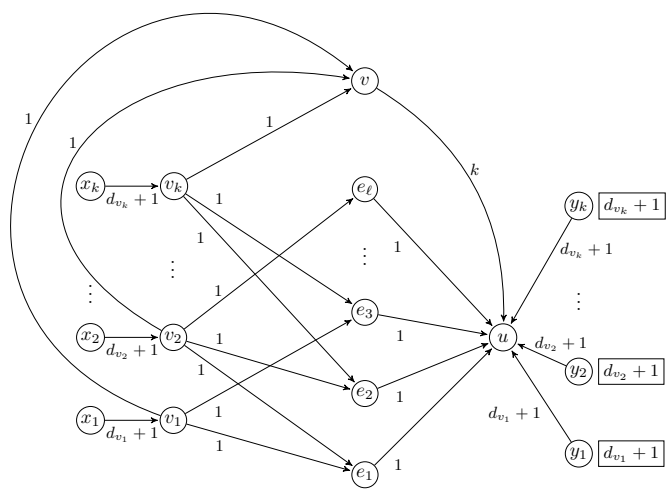
Towards (1), consider an instance of the \IS\ problem with $k$ nodes and $\ell$ edges. \changed{The objective is to derive a maximum set of nodes $\mathcal{I}$ such that no nodes in $\mathcal{I}$ are connected with an edge.}
We construct a financial network with banks $v_1,v_2,\dots,v_k$ representing the nodes and banks $e_1,e_2,\dots,e_\ell$ representing the edges in the given instance. Additionally, add banks $v,u,x_1,x_2,\dots,x_k$ and $y_1,y_2,\dots,y_k$. 
Further, if node $v_i$ is incident to edge $e_j$ in the input instance then include $(v_i,e_j)$ with weight 1 in the financial network. Additionally, every bank $v_i$ has an edge $(v_i,v)$ with weight 1. Bank $u$ has incoming edges $(e_j,u)$ with weight $1$ for all $j \in [\ell]$ and $(v,u)$ with weight $k$. Finally, add edges $(y_i,u)$ and $(x_i,u_i)$ for $i \in [k]$ each with weight $d_{v_i}+1$, respectively, where $d_{v_i}$ is the degree of $v_i$ in the input. Set the external assets of every bank $y_i$ to $d_{v_i}+1$. An example of the construction can be seen in Figure~\ref{fig:IS}.

It can be easily verified that in all networks constructed via semi-positive swaps, every bank $v_i$ with at least two outgoing edges will either have total assets of 0 or receive sufficient assets to clear all debt. As a consequence, the clearing state of every network can be computed efficiently and is independent of the allocation rule. Thus, the result applies for all monotone allocation rules.

We now show that there exists an independent set with cardinality $c$ if and only if there exists a sequence of semi-positive swaps with resulting total assets of $c$ for $v$.
Suppose there exists an independent set $I$ with cardinality $c$. Swap the edges $(x_i,v_i)$ with $(y_i,u)$, for each $i \in I$. Then, every bank $v_i$ pays assets of 1 to $v$ and distributes the remaining assets of $d_{v_i}$ fully among banks $e_j$. Since $I$ is an independent set, the payments by each $v_i$ get fully forwarded to $u$ via the incident banks $e_j$. As a result, all external assets are still forwarded to $u$ and hence all performed swaps are semi-positive. At the same time, $v$ receives total assets of $c$.

For the other direction assume there exists a sequence of semi-positive debt swaps such that the resulting total assets of $v$ are given by $c$. First, let us observe that no swap of edges with weight 1 can be semi-positive. The only debt swap where all creditors and debtors are pairwise distinct is given by two edges $(v_i,e_j)$ and $(v_{i'},e_{j'})$. However, since such a swap cannot create a new cycle and there exists no path from $e_j$ to $e_{j'}$ and vice versa, no bank can strictly profit while the other bank is not harmed. Hence, it suffices to restrict our attention to swaps between $u$ and a bank $v_i$.

Bank $v$ can only receive payments of 1 from each bank $v_i$. Thus, at least $c$ of the banks $v_i$ must have positive total assets. This implies that $u$ and at least $c$ banks $v_i$ must have executed semi-positive swaps. This only happens if all such banks $v_i$ forward all assets to $u$. As a direct implication, no two distinct banks $v_i$ and $v_{i'}$ pay assets to the same bank $e_j$. In conclusion, the banks represent an independent set of cardinality $c$ in the input instance.

For statement (2) we slightly adapt the construction. Replace the edge $(v,u)$ by a path of $B+1$ edges and $B$ additional banks $w_1,w_2,\dots,w_B$. Assign each edge $(w_i,w_{i+1})$ weight $B+i$, edge $(v,w_1)$ with weight $B$ and $(w_B,u)$ with weight $2B$. Indeed, it suffices to set $B$ to a sufficiently large value of, say, $B = 6k^2$. When executing one additional swap with a node $v_i$ and $u$, the sum of assets of banks $w_i$ increases by at least $6k^2$ whereas the sum of assets in the remaining graph is upper bounded by $5k^2+k$. For this reason, it is always optimal to perform debt swaps to maximize the assets of $v$ and, thus, the cardinality of the corresponding independent set. Maximizing the sum of total assets reduces to maximization of assets of $v$. 
\end{proof}

The \IS\ problem cannot be approximated within $\Omega(k^{1-\epsilon})$, unless \classP=\classNP\, where $k$ is the number of nodes~\cite{hastad1999clique}. In our reduction, $v$ has total assets of $c$ if and only if there exists an independent set of size $c$ in the given \IS\ instance. If the input instance of \IS\ consists of $k$ nodes, then the number $n$ of banks in the construction is bounded by $O(k^2)$. Therefore, it is straightforward to argue that, for any constant $\varepsilon > 0$, we obtain \classNP-hardness of computing a $O(n^{1/2-\varepsilon})$-approximation to the maximal total assets of a given bank $v$ that can be achieved using a sequence of semi-positive debt swaps. The argument applies also when maximizing the sum of all assets, since the total number of banks in the resulting network is still bounded by $O(k^2)$.
\begin{corollary}\label{cor:opt-IS}
    For a given financial network with monotone allocation rules, an efficient clearing oracle and a bank $v$, it is \classNP-hard to compute an $n^{1/2 - \varepsilon}$-approximation to (1) the maximal total assets of $v$ and (2) the maximal sum of total assets that can be obtained by a sequence of semi-positive debt swaps, for every constant $\varepsilon > 0$.
\end{corollary}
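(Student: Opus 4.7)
My plan is to read off the inapproximability ratio directly from the reduction in the proof of Theorem~\ref{thm:opt-IS}. The starting point is H\aa stad's result \cite{hastad1999clique}: it is \classNP-hard to approximate \IS\ on $k$-vertex graphs within a factor of $k^{1-\varepsilon'}$, for every constant $\varepsilon'>0$. The goal is to convert a hypothetical $n^{1/2-\varepsilon}$-approximation for the assets-optimization problem into a $k^{1-2\varepsilon}$-approximation for \IS, which would contradict H\aa stad.

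First, I would verify the relationship between the input size $k$ of the \IS\ instance and the number $n$ of banks in the constructed financial network. For part~(1), the construction in the proof of Theorem~\ref{thm:opt-IS} uses the banks $v_1,\ldots,v_k$, $x_1,\ldots,x_k$, $y_1,\ldots,y_k$, $e_1,\ldots,e_\ell$, together with $v$ and $u$. Since $\ell \le \binom{k}{2}$, this gives $n = O(k^2)$. For part~(2), the path $v \to w_1 \to \cdots \to w_B \to u$ with $B = 6k^2$ is added, so still $n = O(k^2)$. In both cases, $k \ge \Omega(\sqrt{n})$.

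Second, I would check the approximation-preserving property in both variants. For part~(1), the proof of Theorem~\ref{thm:opt-IS} establishes that the maximum total assets of $v$ reachable via a sequence of semi-positive swaps equals exactly the size $\alpha(G)$ of a maximum independent set in $G$. Hence any $\rho$-approximation for the optimal assets of $v$ yields a $\rho$-approximation for \IS\ directly. Plugging in $\rho = n^{1/2-\varepsilon}$ and $n = O(k^2)$ gives $\rho = O(k^{1-2\varepsilon})$; for sufficiently small $\varepsilon$ this contradicts H\aa stad by choosing $\varepsilon' = 2\varepsilon$. For part~(2), I would show that the maximum sum of total assets scales as $\Theta(B \cdot \alpha(G)) = \Theta(k^2 \cdot \alpha(G))$: every unit of asset reaching $v$ propagates through the $B+1$ edges of the path and contributes $B+1$ to the asset sum of the $w_j$'s, while all other banks contribute only $O(k^2)$ to the sum in total. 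Since this dominant term is already $\Omega(k^2)$ per unit of IS, and any additive noise is absorbed (when $\alpha(G) \ge 1$, the sum is at least $6k^2$, far larger than the other contributions), a $\rho$-approximation for the sum of assets again yields an $O(\rho)$-approximation for $\alpha(G)$ up to constant factors. A standard padding (e.g.\ ignoring constant-factor slack by reducing to instances with $\alpha(G) \ge 1$, which is trivial for non-empty graphs) handles the constants.

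The only subtle step is controlling the lower-order contributions to the sum of assets in part~(2) so that the approximation ratio carries through cleanly; but since $B = 6k^2$ dominates every other per-IS-unit contribution by a constant factor, the argument reduces to verifying that the ratio of maximum to optimum for \IS\ is preserved up to a constant, which can then be absorbed into the $\varepsilon$ in the exponent. The result follows by choosing $\varepsilon' = 2\varepsilon$ in H\aa stad's hardness for both parts.
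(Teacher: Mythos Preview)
Your proposal is correct and follows essentially the same approach as the paper: invoke H\aa stad's $k^{1-\varepsilon'}$ hardness for \IS, observe that the reduction in Theorem~\ref{thm:opt-IS} yields a network with $n=O(k^2)$ banks in both variants, and conclude that an $n^{1/2-\varepsilon}$-approximation would beat $k^{1-2\varepsilon}$ for \IS. The paper's own argument is more terse (especially for part~(2), where it simply remarks that $n=O(k^2)$ still holds after adding the path of $B=6k^2$ banks), whereas you work out more explicitly why the sum of assets is dominated by the $\Theta(B\cdot\alpha(G))$ contribution from the $w_j$ path; this extra care is fine but not a different method.
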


The next theorem shows that the problem remains \classNP-hard even under the restriction that $v$ is involved in every swap and every semi-positive swap is also $v$-improving.

\begin{theorem}\label{thm:opt-2partition}
     For a given financial network with monotone allocation rules, an efficient clearing oracle, and a bank $v$, it is \classNP-hard to compute a sequence of $v$-improving semi-positive debt swaps with creditor $v$ such that (1) the total assets of $v$ are maximal and (2) the sum of total assets is maximal.
\end{theorem}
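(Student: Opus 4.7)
My plan is to reduce from \Tpartition. Given integers $a_1,\ldots,a_n\in\NN$ with $\sum_i a_i = 2B$, I will construct a financial network on banks $v$, $v_2$, and $x_1,\ldots,x_n,y_1,\ldots,y_n$. For every item $i$ I add edges $(x_i,v)$ and $(y_i,v_2)$, each with liability $a_i$, together with a single ``throughput'' edge $(v,v_2)$ of liability $B$. External assets are $a^x_{y_i} = a_i$ for every $i$ and $0$ elsewhere. Because the underlying graph is acyclic and $v$ has a unique outgoing edge, the clearing state is uniquely determined and independent of the allocation rule, so a trivial efficient clearing oracle exists and the construction applies to every monotone allocation rule. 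Initially, $v$ has total assets $0$, $v_2$ has $2B$, each $y_i$ fully pays its debt, and no $x_i$ makes any payment.

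The heart of the argument will be a case analysis of the admissible swaps. I plan to show that in any state reachable by $v$-improving semi-positive swaps with $v$ as creditor, the only such swap available exchanges some $(x_i, v)$ with some $(y_j, v_2)$ with $a_i = a_j$, producing $(x_i, v_2)$ and $(y_j, v)$. Pairings of two $x$-type edges only move zero-payment edges and do not affect $v$; pairings of two $y$-type edges either share a creditor or make $v$ trade one $a_k$-contribution for an equally sized one; and the only remaining configuration amounts to undoing an earlier swap, which strictly decreases $v$'s assets. Since every $x_i$ carries payment $0$ regardless of where it points, the state is captured by the set $T \subseteq [n]$ of items whose $y_i$ currently points to $v$. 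A simple invariant --- for each weight value $w$, the number of $y$'s of that weight pointing to $v$ plus the number of $x$'s of that weight pointing to $v$ equals the total number of items of weight $w$ --- guarantees that whenever $y_k$ is not yet on $v$ we can always find a matching $x_i$ still at $v$ to serve as a swap partner.

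Next I will translate semi-positivity into the Partition constraint. After an admissible sequence reaching set $T$, the total assets of $v$ are $A_v = \sum_{i\in T} a_i$ and the throughput edge $(v,v_2)$ carries payment $\min(A_v,B)$. Bringing a new item $j \notin T$ onto $v$ raises $A_v$ by $a_j$; by direct bookkeeping of $v_2$'s incoming assets (or by combining \Cref{prop:extension-swaps} with a saturating-swap check at the boundary), the swap is semi-positive precisely when $A_v + a_j \leq B$, because otherwise the throughput edge saturates and $v_2$ strictly loses $A_v + a_j - B$. Hence the set of reachable values of $A_v$ is exactly $\{\sum_{i\in T} a_i : T\subseteq[n],\ \sum_{i\in T} a_i \leq B\}$, and its maximum equals $B$ if and only if the Partition instance is a \textsc{Yes}-instance, proving (1). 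For (2), I will observe that in every reachable state the sum of all banks' assets equals $A_v + 2B + \sum_i a_i = A_v + 4B$, so maximizing the sum is equivalent to maximizing the assets of $v$ and \classNP-hardness transfers.

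The main obstacle I expect is tightening the case analysis to rule out that ``mixed'' states (where for some $i$ both $x_i$ and $y_i$ end up pointing to the same side) could open $v$-improving semi-positive swaps that bypass the throughput edge. The critical observation neutralizing this concern is that the $x_i$ have no incoming edges and no external assets, hence always carry payment $0$, so any swap that does not move a $y_j$ from $v_2$ to $v$ leaves $A_v$ unchanged and cannot be $v$-improving.
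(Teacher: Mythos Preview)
Your proof is correct and follows essentially the same approach as the paper: both reduce from \Tpartition\ using a single throughput edge $(v,\cdot)$ of capacity $B=A/2$, so that semi-positivity of each swap is equivalent to the running subset sum staying at most $B$. The only cosmetic differences are that the paper gives all item edges a common weight $M$ (which spares you the pairing invariant, since any $a_i$ can swap with any $s_j$), and for part~(2) the paper inserts an auxiliary path to amplify $v$'s contribution, whereas your direct computation that the total assets equal $A_v+4B$ is cleaner and suffices.
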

\begin{proof}
For statement (1), consider an instance of \Tpartition\ with items $i \in [k]$ and integer values $a_i \in \NN$. Moreover, denote the sum of values by $A$, i.e., $A = \sum_{i \in [k]} a_i$. \changed{The goal is to divide the items into two sets $\mathcal{A}_1$ and $\mathcal{A}_2$ such that $\mathcal{A}_1 \cap \mathcal{A}_2 = \emptyset$, $\mathcal{A}_1 \cup \mathcal{A}_2 = [k]$ and $\sum_{i \in \mathcal{A}_1} a_i = \sum_{i \in \mathcal{A}_2} a_i$.}

For our reduction, we define a financial network with banks $v,u,a_1,a_2,\dots,a_k$ and $k$ auxiliary banks $s_1,s_2,\dots,s_k$. For every $i \in [k]$ there are edges $(a_i,u)$ and $(s_i,v)$ with weights $M>\max_{j \in [k]} a_j$. There is a single edge $(v,u)$ with weight $A/2$. Moreover, bank $a_i$ has external assets $a_i$, for every $i \in [k]$. Clearly, the construction can be computed efficiently in the size of the instance of \Tpartition.

Every bank has at most one outgoing edge and thus pays all assets towards that edge until the debt is settled. The clearing state is the same for every monotone allocation rule and can be computed efficiently. 

We show that there the instance of \Tpartition\ has a solution if and only if there is a sequence of $v$-improving semi-positive debt swaps with creditor $v$ such that the total assets of $v$ are $A/2$.

First, assume there exists a partition in two sets $\mathcal{A}_1,\mathcal{A}_2 \subset [k]$ such that $\mathcal{A}_1 \cup \mathcal{A}_2 = [k]$, $\mathcal{A}_1 \cap \mathcal{A}_2 = \emptyset$ and $\sum_{i \in \mathcal{A}_1} a_i = A/2 = \sum_{i \in \mathcal{A}_2} a_i$. Hence, when performing debt swaps of edges $(a_i,u), (s_i,v)$, for $i \in \mathcal{A}_1$, then $v$ has total assets of $A/2$. By construction, all assets of $v$ are payed to $u$. Thus, all swaps are semi-positive.

For the other direction, assume there exists a sequence of $v$-improving semi-positive debt swaps with creditor $v$ such that the total assets of $v$ are given by $A/2$. Then clearly, there exists a set of banks $\mathcal{A} \in [k]$ where total assets add up to $A/2$. There exists a partition $\mathcal{A}, \bar{\mathcal{A}}=\{i \notin \mathcal{A} \mid i \in [k]\}$ for the given input.

Statement (2) follows when including $B$ additional banks $w_b$ on the path from $v$ to $u$ with edge weight $A/2$ for edge $(v,w_1)$ and $B+b$ for every edge $(w_b,\cdot)$ with $b \in [B]$. This construction does not allow any additional debt swaps. Since the total assets of every bank $w_b$ equal those of $v$, maximizing the sum of total assets reduces to maximizing assets of $v$ for sufficiently large $B$.
\end{proof}

For financial networks with edge-ranking rules we show a slightly stronger result.

\begin{theorem}\label{thm:opt-3partition}
    For a given financial network with edge-ranking rules and a bank $v$, it is strongly \classNP-hard to compute a sequence of $v$-improving semi-positive debt swaps with creditor $v$ such that (1) the total assets of $v$ are maximal and (2) the sum of total assets is maximal.  
\end{theorem}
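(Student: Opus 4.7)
The plan is to reduce from the strongly \classNP-hard \partition{} problem: given items $a_1,\ldots,a_{3k}$ with $\sum_i a_i = kB$ and $B/4 < a_i < B/2$, decide whether they can be partitioned into $k$ triples each summing to exactly $B$. Because all values in a \partition{} instance are polynomial in the input size, a polynomial-time reduction from it to our problem suffices to establish strong \classNP-hardness. I build on the reduction for Theorem~\ref{thm:opt-2partition}, replacing the single compensating edge of weight $A/2$ by $k$ compensating bins of weight $B$ and exploiting edge-ranking to convert the resulting $k$-bin structure into a constraint equivalent to \partition.

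The construction introduces $3k$ item banks $a_i$ with external assets $a_i$, $3k$ zero-asset source banks $s_i$, a target bank $v$, a hub bank $u$, and $k$ collector banks $u_1,\ldots,u_k$. For each item, the edges $(a_i,u)$ and $(s_i,v)$ carry a common large weight $M$, so that the only admissible swaps are of the form $\{(a_i,u),(s_i,v)\}$, each of which reroutes item $i$'s payment of $a_i$ from $u$ to $v$. Bank $v$ has outgoing edges $(v,u_j)$ of weight $B$ with edge-ranking $u_1>u_2>\cdots>u_k$, bank $u$ has outgoing edges $(u,u_j)$ of weight $B$ with the reverse ranking, and each $u_j$ forwards $B$ downstream to a fresh sink via an edge of weight $B$. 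Initially all item payments collect at $u$, which fills the collectors $u_j$ uniformly via the waterfall, while $v$'s assets are zero. The two opposing waterfalls are arranged so that, after swapping a subset $S$ of items to $v$, each $u_j$ still receives exactly $B$ whenever the total swapped sum $s=\sum_{i\in S} a_i$ is a multiple of $B$, but a partially-filled bin breaks semi-positivity for the next swap unless the new item completes the currently in-progress bin to exactly $B$.

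The proof then proceeds in three steps. Step~(i): show that every $v$-improving semi-positive swap with creditor $v$ must be of the form $\{(a_i,u),(s_i,v)\}$, and derive the clearing state and the assets of the collectors as a function of $S$. Step~(ii): use the \partition{} property $B/4 < a_i < B/2$ together with the edge-ranking to argue that in any valid sequence the items must be added in groups each summing to exactly $B$, and that each such group must contain exactly three items, so that reaching the maximum $v$-assets $kB$ is equivalent to a valid \partition{} of the items. Step~(iii): establish part~(2) by appending a long chain of downstream banks, as in the proofs of Theorems~\ref{thm:opt-2partition} and~\ref{thm:opt-IS}, whose assets equal those of $v$ and dominate the rest of the network, so that maximizing the sum of assets reduces to maximizing $v$'s assets.

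The principal obstacle is step~(ii): a fully symmetric waterfall between $v$ and $u$ a~priori allows \emph{any} $s$ that is a multiple of $B$ to be reached, which would only show that achievable $v$-assets are multiples of $B$ rather than that each $B$-block is realized by a specific \partition{} triple. Pinning this down requires additional asymmetric structure---for instance per-bin guard edges whose weights or edge-rankings differ across $j$---so that the waterfall cannot slide partial items across bin boundaries. Engineering this while keeping all edge weights polynomial, and ensuring that the clearing state of every consistent network is independent of the specific monotone allocation rule so that the hardness holds already for edge-ranking rules, is the key technical challenge.
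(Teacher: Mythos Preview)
Your construction has a basic gap that undermines the reduction before step~(ii) is ever reached. In every swap $\{(a_i,u),(s_i,v)\}$, the two \emph{creditors} are $u$ and $v$. Bank $v$ gains $a_i$, but bank $u$ strictly loses $a_i$: the item payment leaves $u$, and in your network there is no path from $v$ (or from the collectors $u_j$, or from the fresh sinks) back to $u$. Hence $a_u^\sigma < a_u$, and the swap is never semi-positive. The reduction as written admits \emph{no} semi-positive swap at all, so it encodes nothing about \partition. Moreover, even if you repaired this by routing the collectors back into $u$, your opposing waterfalls cancel perfectly: for every $s$, bank $u_{j+1}$ receives $r$ from $v$ and $B-r$ from $u$, so every collector sits at exactly $B$ regardless of whether $s$ is a multiple of $B$. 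The ``partially-filled bin breaks semi-positivity'' mechanism you rely on in step~(ii) therefore never triggers, and any item can be swapped at any time. The guard-edge idea you mention at the end is exactly the missing ingredient, but it is doing all of the work; the proposal as it stands is a sketch of the easy parts with the hard part left open.

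The paper's construction resolves both issues simultaneously. The hub $r$ (your $u$) receives the items, and the bins are the outgoing edges of $v$, ranked $u_1,u_2,\ldots$ in order. Crucially, the odd-indexed edges $(v,u_h)$ have weight $T$ and $u_h$ routes back to $r$ via $(u_h,r)$; this is the return path that makes item swaps semi-positive (the extra assets of $v$ flow along the active path $v\to u_h\to r$). Between consecutive $T$-bins, the paper inserts a weight-$1$ ``punctuation'' edge $(v,u_h)$ (even $h$) whose creditor $u_h$ has \emph{no} edge to $r$ but instead holds a separate swappable asset of~$1$. As long as the current $T$-bin is not exactly full, the active outgoing edge of $v$ points to the wrong place for the next swap: swapping another item would fail to compensate $r$, and swapping the punctuation asset would fail to compensate $u_h$. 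Only when the bin is filled to exactly $T$ does the active edge advance, allowing the punctuation swap, after which the next $T$-bin opens. This alternation is what converts edge-ranking into a hard per-bin equality constraint; your symmetric two-sided waterfall cannot do this because it always rebalances.
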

\begin{proof}
\begin{figure}[t]
\centering
\resizebox{!}{0.32\textwidth}{
    \begin{tikzpicture}[>=stealth', shorten >=1pt, auto,
    node distance=1.5cm, scale=1, 
    transform shape, align=center, 
    bank/.style={circle, draw, inner sep=1}]
    \node[bank] (u1) at (0,0) {$u_1$};
	\node[bank] (u2) [right = of u1] {$u_2$};
	\node[bank] (u3) [right = of u2] {$u_3$};
	\node[bank] (u4) [right = of u3] {$u_4$};
	\node (mdots) [right = of u4] {$\cdots$};
	\node[bank] (ul) [right = of mdots] {$u_l$};
	\node[circle, draw, inner sep=2.5pt] (v) [above =1cm of u4] {$v$};
	\node[circle, draw, inner sep=2.5pt] (r) [below =1cm of u4] {$r$};
	\node[bank] (s1) [right =0.7cm of u2] {$s_2$};
	\node[bank] (s2) [right =0.7cm of u4] {$s_4$};
	\node[rectangle, draw=black, inner sep=2.5pt] (xs1) [below=0.05cm of s1] {\footnotesize{$1$}};
	\node[rectangle, draw=black, inner sep=2.5pt] (xs2) [below=0.05cm of s2] {\footnotesize{$1$}};
	\node[bank] (a3) [below=0.5cm of r] {$a_3$};
	\node[bank] (a2) [left=1cm of a3] {$a_2$};
	\node[bank] (a1) [left=1cm of a2] {$a_1$};
	\node (ldots) [right=0.5cm of a3] {$\cdots$};
	\node[bank] (ak) [right=0.5cm of ldots] {$a_{k}$};
	\node[rectangle, draw=black, inner sep=2.5pt] (xa1) [below=0.05cm of a1] {\footnotesize{$a_1$}};
	\node[rectangle, draw=black, inner sep=2.5pt] (xa2) [below=0.05cm of a2] {\footnotesize{$a_2$}};
	\node[rectangle, draw=black, inner sep=2.5pt] (xa3) [below=0.05cm of a3] {\footnotesize{$a_3$}};
	\node[rectangle, draw=black, inner sep=2.5pt] (xak) [below=0.05cm of ak] {\footnotesize{$a_k$}};
	\node (anchor) [above=0.5cm of v] {};
	\node[bank] (axk) [left=0.5cm of anchor] {};
	\node (adots) [left=0.5cm of axk] {$\cdots$};
	\node[bank] (ax2) [left=0.5cm of adots] {};
	\node[bank] (ax1) [left=1cm of ax2] {};
	\node[bank] (sx1) [right=0.5cm of anchor] {};
	\node[bank] (sx2) [right=1cm of sx1] {};
	\node (sdots) [right=0.5cm of sx2] {$\cdots$};
	\node[bank] (sxl) [right=0.5cm of sdots] {};
	\draw[->] (u1) to node[left=0.4cm, midway] {\footnotesize{$M$}} (r);
	\draw[->] (u3) to node[left=0.1cm, midway] {\footnotesize{$M$}} (r);
	\draw[->] (ul) to node[left=0.2cm, midway] {\footnotesize{$M$}} (r);
	\draw[->] (v) to node[left=0.3cm, pos=0.7] {\footnotesize{\textcolor{blue}{$1$},$T$}} (u1);
	\draw[->] (v) to node[left=0.2cm, pos=0.7] {\footnotesize{\textcolor{blue}{$2$},$1$}} (u2);
	\draw[->] (v) to node[left=0.1cm, pos=0.7] {\footnotesize{\textcolor{blue}{$3$},$T$}} (u3);
	\draw[->] (v) to node[left, pos=0.7] {\footnotesize{\textcolor{blue}{$4$},$1$}} (u4);
	\draw[->] (v) to node[left=0.3cm, pos=0.7] {\footnotesize{\textcolor{blue}{$l$},$T$}} (ul);
	\draw[->] (s1) to node[below, pos=0.4] {\footnotesize{$d$}} (u2);
	\draw[->] (s2) to node[below, pos=0.4] {\footnotesize{$d$}} (u4);
	\draw[->] (a1) to node[left=0.2cm, midway] {\footnotesize{$c$}} (r);
	\draw[->] (a2) to node[left=0.1cm, midway] {\footnotesize{$c$}} (r);
	\draw[->] (a3) to node[left, midway] {\footnotesize{$c$}} (r);
	\draw[->] (ak) to node[left=0.2cm, midway] {\footnotesize{$c$}} (r);
	\draw[->] (ax1) to node[left=0.3cm, midway] {\footnotesize{$c$}} (v);
	\draw[->] (ax2) to node[right=0.2cm, midway] {\footnotesize{$c$}} (v);
	\draw[->] (axk) to node[right, midway] {\footnotesize{$c$}} (v);
	\draw[->] (sx1) to node[right, midway] {\footnotesize{$d$}} (v);
	\draw[->] (sx2) to node[right=0.1cm, midway] {\footnotesize{$d$}} (v);
	\draw[->] (sxl) to node[right=0.2, midway] {\footnotesize{$d$}} (v);
\end{tikzpicture}}
\caption{Schematic construction for the reduction from \partition\ where blue edge labels indicate edge-ranking rules and black labels indicate edge weights.}
\label{fig:3-partition}
\end{figure}
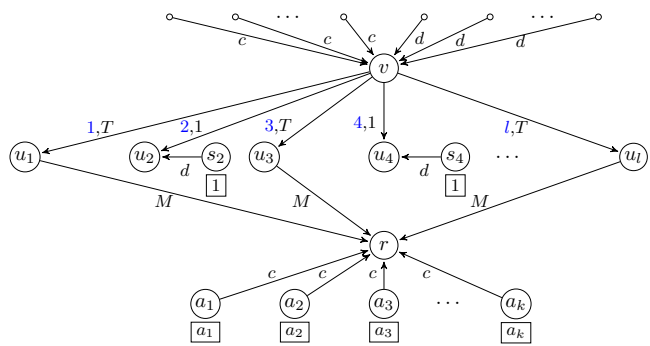
We show statement (1) by reduction from \partition. An instance of \partition\ consists of a set of $k$ elements such that $k$ is an integer multiple by $3$. Each item $i \in [k]$ has a value $a_i \in \NN$. We denote the sum of values by $A = \sum_{i \in [k]} a_i$ and let $T =  \frac 3k A$. For each value it holds that $\frac T4 < a_i < \frac T2$. The goal is to decide whether or not the given set of $k$ items can be partitioned into $k$ sets $S_1,\ldots,S_k$ with $\sum_{i \in S_j}a_i = T$, for each $j \in [k]$. Due to the value restriction, each set must have exactly three items.

For an instance of \partition, we construct a financial network as follows. First, add a root $r$ and a bank $a_i$ with external assets $a_i$, for every item $i$. Additionally, every bank $a_i$ has a single outgoing edge $(a_i,r)$ with edge weight $c \geq \frac T2$. Construct bank $v$ and $u_1, u_2, \dots, u_l$, for $l = 2\frac k3-1$, together with edges $(v,u_h)$. For all odd numbers $h$, set the weight of $(v,u_h)$ to $T$ and add the edge $(u_h,r)$ with weight $M > T$. Instead, set the weight of $(v,u_h)$ to $1$, for all even numbers $h$, and include a source $s_h$ with external assets of $1$ and edge $(s_h,u_h)$ and weight $d>1$. Finally, add $k$ auxiliary banks, each with a single outgoing edge to $v$ and weight $c$, and $\frac l2$ auxiliary banks with outgoing edge to $v$ and weight $d>1$. The values $T,M,d$ and $c$ are chosen to be distinct. $v$ is the only bank with more than one outgoing edge, and it distributes all assets according to an edge-ranking rule with $\pi_v(h)=u_h$ for all $h$. That means, first $v$ fully pays off debt to $u_1$, then to $u_2$,..., until $v$ runs out of funds. A schematic construction can be seen in Figure~\ref{fig:3-partition}. 

Note that every bank has at most one outgoing edge. As such, the clearing state is independent of the allocation rule and can be determined easily.

We show that the \partition\ instance has a solution if and only if there is a sequence of $v$-improving semi-positive debt swaps resulting in total assets of at least $A+\frac l2-1$ for $v$. 

First, assume there exists a solution to the \partition\ instance. Then, swap all edges incident to banks $a_i \in S_1$ with incident edges of $v$. This results in total assets of $T$ for $v$ and thus she can exactly pay off $(v,u_1)$. Clearly, the described swaps are all semi-positive and $v$ is the benefiting creditor. Now, execute a swap with creditors $v$ and $u_2$. Since, this results in additional assets of $1$ in $v$ and the second edge has payments zero and is active, the swap is semi-positive. When repeatedly applying this process for the remaining sets $S_j$, bank $v$ has swapped all incoming edges in the initial network. Consequently, the total assets of $v$ are given by $A+ \frac l2 -1$.

For the other direction, assume there exists a sequence of $v$-improving semi-positive debt swaps resulting in total assets of at least $A+ \frac l2 -1$ for $v$. Because the graph is cycle-free, all assets must originate from external assets. In particular, observe that the sum of external assets in the network is given by $A+ \frac l2 -1$ and hence all banks with external assets must pay all assets to $v$. Clearly, this is only the case if all banks with external assets are swapped to be adjacent to $v$. 
In the initial network, due to the allocation rule $v$ can only perform semi-positive swaps with $r$. Moreover, $v$ and $r$ can switch edges as long as the residual capacity of $(v,u_1)$ is sufficiently large. Only when the total assets of $v$ are exactly $T$, can $v$ fully pay off the first edge towards $u_1$ and execute a semi-positive swap with $u_2$. Otherwise, a fraction of the additional assets would either be payed to $u_1$ or $u_3$. These arguments can be repeated for all remaining outgoing edges of $v$. As a result, $v$ and $r$ can only swap all their initial incoming edges if the external assets $a_i$ can be grouped into sets with sum of value exactly $T$. Since $\frac T4 < a_i < \frac T2$, every set has cardinality three. 

Statement (2) follows by the same construction as for (1). We claim that there exists a solution to the given instance of \partition\ if and only if there exists a sequence of $v$-improving semi-positive debt swaps resulting in at least $4A+3(\frac l2 -1)$ sum of total assets.

If there exists a solution, then proceed as for (1). Clearly, the total assets of all sources add up to $A+\frac l2 -1$ and the total assets of $v$ are given by same amount. Additionally, $v$ pays all assets to banks $u_h$ whereas all payments of $A$ are forwarded to $r$. In total, assets of $4A+3(\frac l2 -1)$ are achieved.

Suppose there exists a sequence of semi-positive swaps resulting in a sum of total assets of at least $4A+3(\frac l2 -1)$. Clearly, this value can only be achieved when $v$ settles all debt.
By the same argument as before, this directly implies existence of a solution to \partition.     
\end{proof}

Observe that our constructions in the proofs of Theorems~\ref{them:opt-setCover},~\ref{thm:opt-IS},~\ref{thm:opt-2partition} and~\ref{thm:opt-3partition} lead to classes of cycle-free financial networks. Therefore, all \classNP-hardness results in this section apply even under the restriction that the underlying graph structure of the financial network is a DAG. 


\section{Reachability}\label{sec:reach}

A debt swap changes the underlying graph structure of the corresponding financial network. It is a natural reachability question whether, for two given networks, one network can be transformed into the other via a sequence of debt swaps.

More formally, in the \Reach\ problem we are given an initial network $\F$ and a target network $\G$. Both networks have the same set $V$ of nodes and are consistent with the stubs from an instance of \MaxAssetsv. 
We say every node $v \in V$ has the same \emph{incidence profile} in both networks (i.e., for each value $l_i$ the same number of outgoing edges with weight $l_i$ in both $\F$ and $\G$ and the same number of incoming edges with $l_i$ in both $\F$ and $\G$). Moreover, $\F$ and $\G$ have the same monotone allocation rule, e.g., $v$ uses the same edge ranking to allocate assets to outgoing stubs, for each $v \in V$. Overall, we say the networks $\F$ and $\G$ are \emph{consistent with each other}. 
If $\F$ and $\G$ are not consistent with each other, then the answer to every reachability problem is trivially negative. Hence, we assume that $\F$ and $\G$ are consistent with each other. To avoid trivialities, $\F$ and $\G$ have different edge sets.

For a finite sequence of debt swaps $\bsigma=(\sigma_1,\sigma_2,\dots)$ to $\F$, we first execute swap $\sigma_1$ in $\F$ and denote the resulting network by $\F^{\sigma_1}$. We again assume to have access to an efficient clearing oracle $\mathcal{O}$ to compute the clearing state. Then perform $\sigma_2$ in $\F^{\sigma_1}$, then $\sigma_3$ in $\F^{\sigma_2}$, etc. The process ends after all swaps in the sequence were applied. The resulting network is denoted by $\F^{\bsigma}$. Every financial network created during this sequence is called an \emph{intermediate} network. We say that $\G$ is \emph{reachable} from $\F$ if there exists a sequence of debt swaps $\bsigma$ such that $\F^{\bsigma}=\G$. The sequence $\bsigma$ is called a \emph{reaching} sequence.
In the \Reach\ problem, we are interested in reaching sequences of \emph{arbitrary} swaps. If we only consider sequences of $v$-improving arbitrary debt swaps, then the problem is denoted by \Reachv. Intuitively, instead of strict profits for a given bank $v$ it might be sufficient that $v$ stays solvent. Formally, in the \Reachvl\ problem we are given consistent networks $\F$ and $\G$ such that $v$ has the \emph{same} amount $\ell$ of total assets in both networks. We shall decide if there exists a reaching sequence of debt swaps such that the total assets of $v$ are at least $\ell$ in every intermediate network.

These problems can be interpreted as instances of the $s$-$t$-Connectivity (\textsc{Stcon}) problem in a transition graph. The nodes of this transition graph are the networks consistent with $\F$ and $\G$. There is a directed edge between networks $\F'$ and $\F''$ if and only if there is a debt swap that turns $\F'$ into $\F''$. \textsc{Stcon} is solvable in poly-logarithmic space~\cite{Savitch70}. In our case, there are less than $(m!)^{n/(2m)}<(\frac m 2)^m$ consistent networks~\cite{alon2008maximum}, where the inequality holds for large $m$. As such, all our reachability problems lie in \classPSPACE.

Our first result shows that for arbitrary swaps, the problem is much easier. The consistency property is necessary \emph{and sufficient} for reachability. A reaching sequence of polynomially many swaps can be computed in polynomial time if it exists.

\begin{theorem}\label{thm:greedy}
    The \Reach\ problem with monotone allocation rules can be solved in time polynomial in the input size.
\end{theorem}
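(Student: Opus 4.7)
The plan is to prove the theorem constructively. I would first establish that consistency is \emph{necessary}: every debt swap of $(u_1,v_1)$ and $(u_2,v_2)$ at weight $\ell$ preserves, at each weight class and each node, the number of outgoing and incoming stubs, and preserves the stub-level allocation rule (which is a ranking on out-stubs, not on endpoints). Thus any reachable target must be consistent with the source. Then I would exhibit a polynomial-time algorithm that, given consistent networks $\F$ and $\G$, outputs a sequence of $O(m)$ debt swaps turning $\F$ into $\G$.

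The key steps of the algorithm are as follows. First, test consistency explicitly; if it fails, reject. Next, decompose the transformation task by weight class, since every debt swap is weight-internal. For a fixed weight $\ell$, view the edges of $\F_\ell$ and of $\G_\ell$ as perfect matchings of the weight-$\ell$ out-stubs to the weight-$\ell$ in-stubs, and consider the symmetric difference $D_\ell = E(\F_\ell) \triangle E(\G_\ell)$. Decompose $D_\ell$ into alternating cycles in the bipartite stub-graph (this is possible because the in- and out-stub counts agree at every node by consistency). For each alternating cycle of length $2k$, apply a sequence of $k-1$ carefully ordered debt swaps that rotate the $\F$-matching on that cycle into the $\G$-matching, where each swap fixes one stub-pair into its desired $\G$-target $(u,v)$ by swapping the current $\F$-edges $(u,v^*)$ and $(u^*,v)$. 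Summing $k-1$ over cycles yields at most $|D_\ell|/2$ swaps per weight class, hence $O(m)$ in total; each swap can be found and applied in polynomial time, with the clearing oracle only needed for the output specification.

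The main obstacle will be maintaining simplicity (no self-loops, no multi-edges) throughout the intermediate graphs. When swapping $(u,v^*)$ and $(u^*,v)$ to introduce $(u,v) \in E(\G_\ell)\setminus E(\F_\ell)$, the new $\G$-edge $(u,v)$ is automatically a non-loop and not duplicated since $\G$ is simple, but the companion edge $(u^*,v^*)$ is not a priori constrained. My plan to overcome this is to order the swaps along each alternating cycle so that $(u^*,v^*)$ is the \emph{next} $\F$-edge scheduled to be rotated within the cycle: then it is itself a current edge of the network, hence automatically a non-loop and non-duplicate. Residual corner cases (short alternating cycles whose stubs are concentrated on too few nodes to admit four distinct actors) can be shown to collapse to $E(\F_\ell) = E(\G_\ell)$ at the edge-multiset level, so no swap is required there. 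Together, this yields a polynomial-length reaching sequence and a polynomial-time algorithm for \Reach.
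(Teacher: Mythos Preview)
Your overall strategy aligns with the paper's: show consistency is necessary, then construct a reaching sequence of $O(m)$ swaps that fixes one desired edge of $E^{\G}\setminus E^{\F}$ at a time. The paper does this with a direct greedy choice (pick any desired edge $(w,v)$, find a wrong incoming edge $(u,v)$ and a wrong outgoing edge $(w,x)$ of the same weight, swap them), without the alternating-cycle scaffolding you introduce. Your per-weight-class bipartite matching view and cycle decomposition are more structured but not needed for the bound; the paper's argument is shorter.

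Where your proposal has a genuine gap is exactly the part you flag as the main obstacle. Your proposed fix is logically inverted. You swap $(u,v^*)$ and $(u^*,v)$ to obtain $(u,v)$ and the companion $(u^*,v^*)$, and you want to argue the companion is neither a loop nor a multi-edge. You then say you will arrange that $(u^*,v^*)$ coincides with ``the next $\F$-edge scheduled to be rotated'', and conclude ``it is itself a current edge of the network, hence \dots\ non-duplicate''. But if $(u^*,v^*)$ is already a current edge, then producing it by the swap \emph{creates} a parallel copy --- precisely the multi-edge you are trying to avoid. Concretely, in the rotation along a cycle $o_1,i_1,\dots,o_k,i_k$, the companion at step $j$ is the stub-pair $(o_1,i_{j+1})$; at the node level this is $(U_1,V_{j+1})$, which need not be any $\F$- or $\G$-edge on the cycle, and nothing prevents it from equaling an existing edge outside the cycle or from having $U_1=V_{j+1}$. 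Your ``residual corner cases'' sentence does not cover these situations; alternating cycles of any length can have stubs on repeated nodes without forcing $E(\F_\ell)=E(\G_\ell)$.

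For comparison, the paper's proof asserts the four banks $u,v,w,x$ are pairwise distinct and does not dwell on whether the companion $(u,x)$ might duplicate an existing edge; it treats the swap as a stub re-matching and moves on. If you want to make the simplicity preservation airtight, you will need a different argument than the one you sketched --- e.g., showing that whenever the naive choice produces a bad companion, an alternative pair of wrong edges is available, or appealing to a known connectivity result for the swap graph on simple realizations of a degree sequence.
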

\begin{proof}
    Let $\mathcal{D}$ be the set of desired edges that are in $\G$ and not in $\F$, i.e., $\mathcal{D}=E^{\G}\setminus E^{\F}$. There are exactly $|\mathcal{D}| > 0$ edges in $E^{\F}\setminus E^{\G}$. First, observe that $|\mathcal{D}|>1$. To see this, assume for a contradiction that $\mathcal{D} =\{e\}$ for $\deb(e)=u$, $\cre(e)=v$ and $u,v \in V$ with weight $l_e>0$. Since $\F$ and $\G$ are consistent with each other, in $\F$ bank $v$ must have an incoming edge $e'$ with $\deb(e')=x$, and $u$ must have an outgoing edge $e''$ with $\cre(e'')=y$ each with weight $l_e$, and both these edges are not in $E^{\G}$. Since $|\mathcal{D}|=1$, banks $x$ and $y$ have more incident edges in $\F$ than $\G$, a contradiction. We assume $|\mathcal{D}|\geq 2$ for the remaining proof.

    We construct a sequence of debt swaps that in each step replaces an edge in $E^{\F}\setminus E^{\G}$ by one in $\mathcal{D}$.
    Choose an arbitrary edge $\hat{e} \in \mathcal{D}$ with $\deb(\hat{e})=w, \cre(\hat{e})=v$. By consistency there exists an edge $e \in E^{\F} \setminus E^{\G}$, where $\deb(e)=u,\cre(e)=v$, with weight $l_e=l_{\hat{e}}$. Further, $w$ has at least one outgoing edge $e'$ with $\cre(e')=x$ in $\F$ that is not in $\G$ with weight $l_{e'}=l_e$. 
    If the banks $u,v,w$ and $x$ are pair-wise distinct, choose another arbitrary edge in $\mathcal{D}$. Clearly, there always exists such an edge if $\G$ is reachable from $\F$. 
    Otherwise, perform a swap $\sigma$ of edges $e$ and $\hat{e}$. The resulting network $\F^{\sigma}$ contains the desired edge $\hat{e} \in \mathcal{D}$ connecting $w$ and $v$ and one edge in $E^{\F}\setminus E^{\G}$ was removed. We repeat this step until two desired edges in $\mathcal{D}$ have not been included. 
    
    For the final step with $|\mathcal{D}|=2$, observe that there are exactly two edges in $E^{\F}\setminus E^{\G}$. Choose an arbitrary edge $\hat{e} \in \mathcal{D}$ with $\deb(\hat{e})=w, \cre(\hat{e})=v$. By the same argument as before, there exist edges $e, e' \in E^{\F}\setminus E^{\G}$ where $\deb(e)=u,\cre(e)=v$ and $\deb(e')=w,\cre(e')=x$ each with the same weight as $\hat{e}$. By consistency, bank $u$ must have an outgoing edge with weight $l_{\hat{e}}$ that is not in $\F$. $x$ must have an incoming edge with weight $l_{\hat{e}}$ that is not in $\F$. Since $|\mathcal{D}|=2$, this implies the existence of edge from $u$ to $x$ in $\G$. Then, $\G$ is reachable from $\F$ with the debt swap of edges $e$ and $e'$ creating the edges in $\mathcal{D}$.

    This proves that the network $\G$ can indeed be reached via a sequence of debt swaps. In every step of this sequence, a debt swap of edges not in $\G$ is performed to create a new desired edge in $\mathcal{D}$. This means that a desired edge created in a previous step will never be swapped again. Thus, the number of swaps in the resulting sequence is bounded by $O(m)$ where $m=|E^{\F}|=|E^{\G}|$ and $m \leq|\mathcal{D}|$.  
\end{proof}

The greedy algorithm described in the proof of Theorem~\ref{thm:greedy} arbitrarily chooses an edge from the target network and performs a debt swap to include the edge into the current network. Note that this can lead to devastating losses of assets of a single bank $v$ in the intermediate networks, which is not aligned with incentives and potentially problematic for the financial stability of the system.
\changed{Instead, for $v$-improving swaps, in which a given bank $v$ must strictly profit in every step, the tight \classPLS-reduction in Theorem~\ref{thm:PLS} directly implies the following result~\cite[Theorem 16]{Yannakakis03}.}

\begin{corollary}
    \label{cor:PSPACE}
    The \Reachv\ problem with monotone allocation rules is \classPSPACE-complete.
\end{corollary}

We strengthen this result in the following way. Instead of guaranteeing a strict profit for $v$ in every step, we require only that the total assets of $v$ are at least $\ell$ in the initial, final, and every intermediate network. Deciding whether there exists a reaching sequence of arbitrary debt swaps so that $v$ \emph{never has strictly less assets} in any step is \classPSPACE-hard (and, hence, \classPSPACE-complete).
\begin{restatable}{theorem}{thmPSPACE} \label{thm:PSPACE}
    The \Reachvl\ problem with monotone allocation rules is \classPSPACE-comple\-te.
\end{restatable}
\begin{proof}
    An instance of \satcon\ consists of a CNF formula $\varphi$ with $k$ variables $x_i$ and $\ell$ clauses where every clause $\kappa_j$ consists of at most three variables. Every satisfying assignment of $\varphi$ is represented by a node in the graph $G(\varphi)$ where two assignments are adjacent if the assignments differ for exactly one variable. For two given assignments, the objective is to decide whether there exists a path connecting the corresponding nodes in $G(\varphi)$. Equivalently, this requires to decide whether there is a sequence of flips between these satisfying assignments changing the truth assignment of exactly one variable in each step. The authors in~\cite{GopalanSatConnectivity} show \classPSPACE-completeness for \satcon.
    
    For a reduction, use a slight adaptation of the construction illustrated in \changed{Figure~\ref{fig:Max2Sat}} described in proof of Theorem~\ref{thm:PLS} where all values $w_j$ are set to one. Also, choose edges $(x_i,x_i^T)$ and $(s_i,x_i^F)$ if variable $x_i$ is TRUE in the initial assignment, otherwise choose $(x_i,x_i^F)$ and $(s_i,x_i^T)$. Observe that every debt swap switches the incoming edges of a pair of banks $x_i^T$ and $x_i^F$ where the bank with incoming payments from $x_i$ is interpreted as TRUE. As a result, every sequence of flips in the assignment translates directly to a sequence of debt swaps and vice versa.
    
    The total assets of bank $v$ equal the number of banks $\kappa_j$ with incoming assets. Hence, if $v$ has assets of at least $l$ then every $\kappa_j$ has incoming payments and the associated assignment is satisfying. As an implication, there exists a sequence of flips in the assignment, where every intermediate assignment is satisfying, if and only if there exists a sequence of debt swaps turning the initial into the target network, where $v$ has assets of $\ell$ in every intermediate network. This concludes the proof. 
\end{proof}

Finally, we briefly consider sequences of semi-positive debt swaps. Indeed, since they are Pareto-improving, the motivation here is similar as in the previous theorem. Deciding whether a reaching sequence exists where no bank is harmed is already challenging, even for edge-ranking rules.

\begin{restatable}{proposition}{propReachNP} \label{prop:reach-NP}
    Given initial network $\F$ and a target network $\G$ with edge-ranking rules that are consistent with each other, it is strongly \classNP-hard to decide whether $\G$ is reachable from $\F$ via any sequence of semi-positive swaps.
\end{restatable}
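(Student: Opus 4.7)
The plan is to reduce from \partition, which is strongly \classNP-complete: given positive integers $a_1,\dots,a_{3m}$ with $\sum_i a_i = mB$ and $B/4 < a_i < B/2$, decide whether the items can be partitioned into $m$ triples each summing to $B$. The goal is to build consistent financial networks $\F$ and $\G$ with edge-ranking rules such that $\G$ is reachable from $\F$ via a sequence of semi-positive swaps if and only if the partition instance is a yes-instance. Strong \classNP-hardness will follow because all weights in the construction will be polynomially bounded in the unary encoding of the 3-Partition instance.

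Construction sketch. Introduce for each item $i$ a debtor node $u_i$ with external assets $a_i$ and one outgoing edge of weight $a_i$. Introduce $m$ bin nodes $b_1,\dots,b_m$, where each bin $b_j$ has one outgoing edge of weight $B$ feeding into a small \emph{completion gadget}: a short cycle through a dedicated observer bank $w_j$, wired with edge-rankings that make the gadget activate only when $b_j$ receives incoming payments summing to exactly $B$, in which case a strict Pareto-improvement of $B$ units in $w_j$'s assets is unlocked. In $\G$ we fix an arbitrary partition-consistent assignment: the edge out of $u_i$ points to the bin containing $i$ in the solution; the incidence profile of $\F$ matches $\G$ at every node but the item-edges in $\F$ are redirected so that no bin receives exactly $B$, and in particular no $w_j$ gains anything in $\F$'s clearing state.

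Main steps. (i) If the 3-partition has a solution, start from $\F$ and process the bins one at a time: perform a sequence of swaps that relocate the three items of bin $b_j$'s triple into $b_j$. By Theorem~\ref{thm:semi-swap-eq-Pareto-impro} it suffices to show that each of these swaps is Pareto-improving; this is arranged by giving every bin's gadget cycle its own priority so its payments are independent of other bins' partial states, so no bank on any unrelated gadget is affected during the relocation, and the final swap in each triple activates the gadget and strictly improves $w_j$. (ii) Conversely, if $\G$ is reachable via semi-positive swaps, then by Pareto-monotonicity $\sig{a}_{w_j} \geq a_{w_j}$ for every $j$ throughout the sequence, and strict inequalities accumulate in $\G$. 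Since $w_j$ strictly profits only when $b_j$ receives payments of exactly $B$, each bin must receive exactly $B$ in $\G$; combined with $B/4 < a_i < B/2$, this forces a valid 3-partition.

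The main obstacle will be the implementation of the gadget so that the "all-or-nothing" activation property is robust: the gadget must yield zero extra assets to $w_j$ whenever $b_j$ receives less than $B$, even after partial reassignments, and the sizes $a_i \in (B/4, B/2)$ must make this achievable without inadvertently activating cycles when only one or two items have been moved. I expect to handle this by placing $b_j$'s outgoing edge at the head of $b_j$'s ranking with weight $B$, so no payment escapes $b_j$ until it is fully saturated, and by routing the gadget cycle's low-priority edges so that none of its residual capacities can be exploited by partial flows from other bins. Once this gadget is in hand, both directions of the reduction follow as sketched, yielding strong \classNP-hardness.
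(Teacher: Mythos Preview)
Your reduction has three structural gaps. First, you define $\G$ by pointing each item edge to ``the bin containing $i$ in the solution,'' but a polynomial-time reduction must construct $\G$ from the \partition\ instance without presupposing that a solution exists, let alone which one. Second, you give the edge out of $u_i$ weight $a_i$; since a debt swap requires both edges to have the \emph{same} liability, item edges with pairwise distinct $a_i$ cannot participate in any swap at all. Third, even setting both issues aside, only the last swap into each bin activates your gadget; the first two swaps of a triple move assets into $b_j$ with no bank strictly profiting, so they are not semi-positive and cannot appear in the reaching sequence. Your proposed fix --- putting $b_j$'s weight-$B$ edge first in its ranking so that ``no payment escapes $b_j$ until it is fully saturated'' --- does not do what you claim either: with edge-ranking, $b_j$ forwards whatever assets it has along its top-ranked edge immediately, not only once the full $B$ is available.

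The paper's reduction avoids all of this. Item values are encoded in external assets; all item edges carry the same weight $c$ and are swapped against a pool of auxiliary weight-$c$ edges incident to a single bank $v$, so the target $\G$ is solution-independent (it is simply the network in which all weight-$c$ edges between $v$ and a root $r$ have been exchanged). Every individual item-swap is semi-positive because $v$'s edge-ranking has outgoing weights $T,1,T,1,\dots$ to nodes $u_1,u_2,\dots$ with $u_h$ feeding back into $r$ for odd $h$: the newly acquired assets of $v$ are routed straight back to $r$, so $r$ is not harmed while $v$ strictly gains. The alternating $T,1$ pattern (with the weight-$1$ edges going to sinks) then forces $v$'s accumulated assets to hit exactly $T$ before the next block of swaps can remain semi-positive, which encodes the partition constraint in the \emph{sequence} rather than in the target.
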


\begin{proof}
    The statement follows by reduction from \partition. As initial network we choose the construction in proof of Theorem~\ref{thm:opt-3partition} (see Figure~\ref{fig:3-partition}). The target is the network where all incoming edges with weight $c$ of $v$ and $r$ are switched.
    By the same arguments as in proof of Theorem~\ref{thm:opt-3partition} there exists a desired sequence of semi-positive debt swaps if and only if there exists a solution to the instance of \partition.
\end{proof}


\section{Conclusion}
In this paper we have studied the computational complexity of problems surrounding the debt swap operation, a natural and interesting edge swap operation in financial networks. Our results characterize the convergence time of natural debt swap dynamics for edge-ranking payment rules. Moreover, our work sheds light on the complexity of optimization and reachability properties of debt swap dynamics.

Our work gives rise to a plethora of interesting future directions. Most prominently, the general \SequenceSwapSP\ problem remains a challenging open problem. Also, it would be interesting to identify additional meaningful cases in which reachability or optimization problems can be solved (approximately) optimally in polynomial time. More generally, debt swaps can lead to interesting subtle structural effects. For example, a sequence of debt swaps from a network $\F$ with edge-ranking rules can result in a financial network $\G$ with the same underlying graph but a different edge-ranking rule. Understanding the structural, algorithmic, and economic consequences of these effects represent fascinating open problems.


\subsection*{Acknowledgments}
Martin Hoefer gratefully acknowledges support from DFG Research Unit ADYN under grant DFG 411362735 and ATHENE National Research Center for Applied Cybersecurity.
Lisa Wilhelmi gratefully acknowledges support from DFG Research Unit ADYN under grant DFG 411362735.


\bibliographystyle{plainurl}
\bibliography{ref}

\clearpage


\appendix


\section{Missing Proofs in Section~\ref{sec:characterization}}
\label{app:proofs}

\setcounter{lemma}{0}
\renewcommand{\thelemma}{\Alph{section}.\arabic{lemma}}
\renewcommand{\thedefinition}{\Alph{section}.\arabic{lemma}}
\renewcommand{\thecorollary}{\Alph{section}.\arabic{lemma}}

\begin{definition}
Consider a financial network $\F$ with external assets $\veca^x$ and a non-negative vector $\hat{\veca}^x \leq \veca^x$, where $\hat{a}^x_v < a^x_v$ for at least one $v\in V$. The network $\F$ can be separated into two networks: $\pre{\F} = (G, \deb, \cre, \vecl, \hat{\veca}^x, \vecf)$ is called a \emph{pre-network} of $\F$; $\diff{\F} = (G, \deb, \cre, \diff{\vecl}, \veca^x-\hat{\veca}^x, \diff{\vecf})$ is the \emph{difference network} of $\F$ and $\pre{\F}$. The liabilities in $\diff{\F}$ are given by the remaining liabilities after clearing in $\pre{\F}$, (i.e., $\diff{l}_e = l_e - \pre{p}_e$), and the allocation rule is defined as $\diff{f}_e(b_{\deb(e)}) = f_e(\pre{a}_{\deb(e)} + b_{\deb(e)}) - f_e(\pre{a}_{\deb(e)})$.
\end{definition}

The clearing state in $\F$ equals the sum of clearing states in pre- and difference networks.

\begin{restatable}[Separability]{lemma}{lemSeparability}\label{lem:separability}
    Suppose $\pre{\F}$ is a pre-network of $\F$ and $\diff{\F}$ the corresponding difference network. The clearing state of $\F$ is given by the sum of clearing states in $\pre{\F}$ and $\diff{\F}$, i.e., $\vecp = \pre{\vecp} + \diff{\vecp}$.
\end{restatable}


\begin{proof}
As mentioned in Section~\ref{sec:model}, the feasible asset vectors for a financial network $\F$ are solutions of a fixed point problem. We reformulate this property for the feasible payments -- they are fixed points of a function $g$ defined for every $e \in E$ by $g_e(\vecc) = f_e(a^x_u + \sum_{e'\in E^-(u)}c_{e'})$, where $u = \deb(e)$. $g$ maps the convex, compact set $\{ \vecc \mid 0 \le c_e \le l_e\}$ to itself. The fixed points of $\pre{g}_e(\vecc)= f_e(\hat{a}^x_u + \sum_{e' \in E^-(u)} c_{e'})$ are the feasible payments in $\pre{\F}$, the fixed points of $\diff{g}_e(\vecc)= \diff{f}_e(a^x_u - \hat{a}^x_u + \sum_{e' \in E^-(u)} c_{e'})$ the feasible payments in $\diff{\F}$.

For the maximal fixed point $\diff{\vecp}$ of $\diff{g}$ it holds that
\begin{align*}
    \diff{g}_e(\vecp) &= \diff{f}_e\left(a^x_u - \hat{a}^x_u + \sum_{e'\in E^-(u)}p_{e'}\right)  \\
               &= f_e\left(\pre{a}_u + a^x_u - \hat{a}^x_u + \sum_{e'\in E^-(u)}p_{e'}\right) - f_e\left(\pre{a}_u\right) \\
               &= f_e\left(\hat{a}^x_u + \sum_{e' \in E^-(u)}\pre{p}_{e'} + a^x_u - \hat{a}^x_u + \sum_{e'\in E^-(u)}p_{e'}\right) - f_e\left(\hat{a}^x_u + \sum_{e' \in E^-(u)}\pre{p}_{e'}\right) \\
               &= f_e\left(a^x_u + \sum_{e'\in E^-(u)}(p_{e'} + \pre{p}_{e'})\right) - \pre{p}_{e} \enspace.
\end{align*}
The second equality follows from the definition of $\diff{f}_e$. The last equality follows since $\pre{\vecp}$ is the maximal fixed point of $\pre{g}$. 

The clearing state of $\F$ satisfies $\vecp \geq \pre{\vecp}$ due to monotonicity of all $f_e$. Indeed, for \emph{every} fixed point $\vecc$ of $g$ it holds that $\vecc \geq \pre{\vecc}$, since $a^x_v > \hat{a}^x_v$ for at least one $v$. Thus, every fixed point $\vecc$ can be separated into $\pre{\vecp}$ and $\bar{\vecc}$ such that $\vecc = \pre{\vecp}+\bar{\vecc}$. Now consider $\bar{\vecc}$. Since $\vecc = \pre{\vecp}+\bar{\vecc}$ is a fixed point, we have $\pre{p}_e+\bar{c}_e = f_e(a^x_u + \sum_{e' \in E^-(u)} (\pre{p}_{e'} + \bar{c}_{e'}))$ for all $e \in E$ and $u = \deb(e)$. Equivalently,
$$
   \pre{p}_e = f_e\left(a^x_u + \sum_{e \in E^-(u)} (\pre{p}_e + \bar{c}_e)\right) - \bar{c}_{e}
$$
and, thus, $\bar{\vecc}$ is a fixed point of $\diff{g}$. Hence, every fixed point of $g$ can be separated into $\pre{\vecp}$ and a fixed point for $\diff{g}$. Since the clearing state $\vecp$ is the \emph{maximal} fixed point of $g$, it must use the \emph{maximal} fixed point of $\diff{g}$, i.e., $\vecp = \pre{\vecp} + \diff{\vecp}$.
\end{proof}

Note that if edge $e\in E$ gets saturated in $\pre{\F}$ (i.e., $l_e=\pre{p}_e$), then $\diff{l}_e=0$ in $\diff{\F}$.

\begin{corollary}\label{cor:pre-network-leq-clearing}
If $\pre{\F}$ is a pre-network of $\F$, then for the clearing payments $p_e \geq \pre{p}_e$, for all $e\in E$, and the total assets $a_v \geq \pre{a}_v$, for every bank $v\in V$.
\end{corollary}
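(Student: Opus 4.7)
The plan is to derive both inequalities as direct consequences of the Separability Lemma (Lemma~\ref{lem:separability}). First, I would invoke separability to write $\vecp = \pre{\vecp} + \diff{\vecp}$, so that establishing $p_e \geq \pre{p}_e$ reduces to showing $\diff{p}_e \geq 0$ for every edge $e \in E$. This is immediate from the definition of the difference network: its payment functions $\diff{f}_e : \RR_{\geq 0} \to [0,\diff{l}_e]$ are non-negative by construction (they inherit the codomain constraint from the original allocation rule through $\diff{l}_e = l_e - \pre{p}_e \geq 0$), and its external assets $a^x_v - \hat{a}^x_v$ are non-negative by the hypothesis $\hat{\veca}^x \leq \veca^x$. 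Therefore every feasible payment vector in $\diff{\F}$, including the maximal one $\diff{\vecp}$, is coordinate-wise non-negative, which yields the first claim.

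For the second claim, I would just unfold the definition of total assets:
\[
    a_v = a^x_v + \sum_{e \in E^-(v)} p_e, \qquad \pre{a}_v = \hat{a}^x_v + \sum_{e \in E^-(v)} \pre{p}_e.
\]
Subtracting, $a_v - \pre{a}_v = (a^x_v - \hat{a}^x_v) + \sum_{e \in E^-(v)}(p_e - \pre{p}_e)$, which is non-negative because both summands are non-negative by the hypothesis and by the edge-wise inequality established above. There is no real obstacle here; the whole content of the corollary is already encoded in Lemma~\ref{lem:separability}, and the only non-trivial observation is that the difference network is itself a well-defined financial network whose clearing payments are therefore non-negative.
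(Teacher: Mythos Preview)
Your proposal is correct and matches the paper's intended argument: the corollary is stated without proof immediately after Lemma~\ref{lem:separability}, and your derivation from separability together with the non-negativity of $\diff{\vecp}$ is exactly the reasoning that makes it a corollary. The only remark is that the inequality $\vecp \geq \pre{\vecp}$ already appears as an intermediate step inside the proof of Lemma~\ref{lem:separability} itself, so one could also cite that directly; but deducing it post hoc from $\vecp = \pre{\vecp} + \diff{\vecp}$ and $\diff{\vecp} \geq 0$, as you do, is equally valid and arguably cleaner.
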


The next definition establishes a relationship between the external assets of a node $v$ and the incoming assets of a sink.

\begin{definition}[Linearity]\label{def:linearity}
If an increase of the external assets $a^x_v$ of a node $v \in V$ by $\Delta > 0$ results in total assets $a'_t = a_t + \Delta$ for a sink $t \in V$, then $v$ is called \emph{$t$-linear} on $[a^x_v, a^x_v + \Delta]$. If the increase leads to $\sum_{t \in U} a'_t = \sum_{t \in U} a_t + \Delta$ for a set $U$ of sink nodes, then $v$ is called \emph{$U$-linear} on $[a^x_v, a^x_v + \Delta]$.
\end{definition}

If $v$ is $t$-linear, then the benefit of $t$ due to the increase of external assets of $v$ is maximal.

\lemMonotonicity*

\begin{proof}
The network $\F$ can be seperated into the pre-network $\pre{\F} = \hat{\F}$ and difference network $\diff{\F}$. The external assets in $\diff{\F}$ are given by $a^x_v-\hat{a}^x_v$ for $v$ and 0 for all other banks $u \neq v, u \in V$. By separability, the clearing state of $\F$ is the sum of clearing states of $\pre{\F}$ and $\diff{\F}$, i.e., $\vecp = \pre{\vecp} + \diff{\vecp}$. Since the total assets of every bank are non-negative, we have $a_u = \pre{a}_u + \diff{a}_u \geq \pre{a}_u = \hat{a}_u$.
\end{proof}

\lemNonExpan*

\begin{proof}
By slight abuse of notation, we use $\F$ to denote the resulting financial network \emph{after} raising the external assets of $s$ to $a^x_s + \Delta$. By separability, we define $\pre{\F}$ as the pre-network with external assets $a^x_u$ for every bank $u \in V$, i.e., the one \emph{before} raising the external assets of $s$. Then, the external assets in the difference network $\diff{\F}$ are given by $\Delta$ for $s$ and 0 for all other banks $u\neq s$. The statement follows when $\sum_{t \in T} \diff{a}_t \leq \Delta$.

We use $p^-_u$ and $p^+_u$ to denote the incoming and outgoing assets of $u$ in the clearing state of $\diff{\F}$, i.e., $p^-_u = \sum_{e \in E^-(u)} \diff{p}_e$ and $p^+_u = \sum_{e \in E^+(u)} \diff{p}_e$. Then, $\sum_{u \in V} p^-_u = \sum_{u \in V} p^+_u$, as both are the sum of all payments in the network. Dividing the banks into sinks, source $s$ and all other banks, it holds that
\begin{equation}\label{eq:non-expan}
    \sum_{u \in V\setminus (T\cup \{s\})} p^-_u + \sum_{t \in T} p^-_t + p^-_s = \sum_{u \in V\setminus (T\cup \{s\})} p^+_u + \sum_{t \in T} p^+_t + p^+_s \;.  
\end{equation}
Since $s$ is a source and the banks in $T$ are sinks, $p^-_s = 0$ and $\sum_{t \in T} p^+_t = 0$. Equation~(\ref{eq:non-expan}) simplifies to
\[
\sum_{t \in T} p^-_t = \sum_{u \in V\setminus (T\cup \{s\})} (p^+_u - p^-_u) + p^+_s \;.
\]
Every bank $u \in V \setminus (T \cup \{s\})$ has external assets of 0 and, thus, the outgoing payments of $u$ are upper bounded by the incoming ones, i.e., $p^+_u \leq p^-_u$. As a direct consequence is $\sum_{t \in T} p^-_t \leq p^+_s \leq \Delta$. Hence, $\sum_{t \in T} \diff{a}_t = \sum_{t \in T} p^-_t \leq \Delta$.
\end{proof}

\lemSourceSinkEq*

\begin{proof}
First, observe that (2) follows directly from (1). Hence, it remains to show property (1).
As in the proof of Lemma~\ref{lem:separability}, we use that the clearing state of $\F$ is the maximal fixed point of $g$ with $g_e(\vecc) = f_e(a^x_u + \sum_{e'\in E^-(u)}c_{e'})$, and the clearing state of $\HH$ is the maximal fixed point of $g^{\HH}$ with $g^{\HH}_e(\vecc^{\HH}) = f^{\HH}_e(a^x_u + \sum_{e'\in {E_{\HH}}^-(u)}c^{\HH}_{e'})$, where $u = \deb(e)$.

First, suppose $\vecp$ is the maximal clearing state of $\F$. Now consider $\vecp^{\HH}$ as in property (1) of the lemma. Then for every $e \in E$ with $\deb(e) \neq v \neq \cre(e)$, clearly $g_e(\vecp) = g^{\HH}_e(\vecp^{\HH})$ is true. Similarly for incoming edges of $e \in E^{-}(v)$ with $u = \deb(e)$, 
\[
g_e(\vecp) = f_e\left( a^x_u + \sum_{e'\in E^-(u)} p_{e'} \right) = f^{\HH}_e \left( a^x_u + \sum_{e'\in E_{\HH}^-(u)} p_{e'}^{\HH} \right) = g_e(\vecp^{\HH}).
\]
Now consider the outgoing edges $e \in E^+(v)$. It holds that
\begin{align*}
p_e = g_e(\vecp) &= f_e\left(a^x_v + \sum_{e' \in E^-(v)} p_{e'} \right) = f_e(a_v) =  f^{\HH}_e(a_{s_v}) = g_e(\vecp^{\HH})\;.
\end{align*}
Hence, $\vecp^{\HH}$ is a fixed point for the network $\HH$ and evaluates all payments as in the maximal fixed point $\vecp$ of $g$. 

Now, for the converse, suppose that $\vecp^{\HH}$ is the clearing state for $\HH$. Then, by the same equations, we obtain an equivalent fixed point of $g$. In conclusion, this proves (1).
\end{proof}


The next lemma introduces a useful connection between $\HH$ and $\sig{\HH}$.

\begin{restatable}{lemma}{lemNoPosSwapOne}
\label{lem:no-pos-swap1}
Consider the financial networks $\F$ before and $\sig{\F}$ after a debt swap. If $\atvo \geq \aovo$ and $\atvt \geq \aovt$, then the following hold:

\begin{enumerate}
    \item[(1)] $\HH$ and $\sig{\HH}$ only differ in the external assets at $\svo$ and $\svt$ (and thus the clearing state).
    \item[(2)] $\HH$ is a \textit{pre-network} of $\sig{\HH}$.
    \item[(3)] $s_{v_1}$ and $s_{v_2}$ are $\{t_{u_1}, t_{u_2}, \tvo, \tvt\}$\textit{-linear} in $\HH$ on intervals $[\aovo,\, \atvo]$ and $[\aovt,\, \atvt]$, respectively. 
\end{enumerate}
\end{restatable}


\begin{proof}
By definition, the underlying graphs of $\F$ and $\sig{\F}$ only differ in the liabilities of $u_1$ and $u_2$ to $v_1$ and $v_2$. When applying source-sink equivalency to these edges, all other edges remain the same. The resulting networks $\G$ and $\sig{\G}$ only differ in external assets of $v_1$ and $v_2$. This still holds when applying source-sink equivalency to $v_1$ and $v_2$ to $\G$ and $\sig{\G}$. Hence, (1) is satisfied. By definition, (1) directly implies (2).

Towards (3), consider the pre-network $\HH$. Suppose the external assets at the sources $\svo$ and $\svt$ are increased by $\atvo - \aovo$ and $\atvt - \aovt$, respectively. In total, the external assets of both banks are raised by $(\atvo + \atvt) - (\aovo + \aovt).$

By (1), every bank now has the same external assets and liabilities as in $\sig{\HH}$ and, thus, the clearing state is $\sig{\vecp}$. For $\sig{\HH}$ we use the notation $\sig{e}_1 = (u_1,t_{u_2})$ and $\sig{e}_2 = (u_2,t_{u_1})$ for the swapped edges. Then, the total assets of the sinks $t_{u_1}$, $t_{u_2}$, $\tvo$, and $\tvt$ increase by a total amount of
\begin{align*}
&(\sig{a}_{t_{u_1}} + \sig{a}_{t_{u_2}} + \sig{a}_{t_{v_1}} + \sig{a}_{t_{v_2}}) - (a_{t_{u_1}} + a_{t_{u_2}} + a_{t_{v_1}} + a_{t_{v_2}})\\
=\;& \left( \sig{p}_{\sig{e}_2} + \sig{p}_{\sig{e}_1} + (\atvo - a^x_{v_1} - \sig{p}_{\sig{e}_2}) + (\atvt - a^x_{v_2} - \sig{p}_{\sig{e}_1}) \right) \\
& \hspace{0.5cm} - \left( p_{e_1} + p_{e_2} + (\aovo - a^x_{v_1} - p_{e_1}) + (\aovt - a^x_{v_2} - p_{e_2})  \right) \\
=\;& (\atvo + \atvt) - (\aovo + \aovt).
\end{align*}
Consequently, sources $\svo$ and $\svt$ are $\{t_{u_1}, t_{u_2}, \tvo, \tvt\}$-linear on $[\aovo,\,\, \atvo]$ and $[\aovt,\,\, \atvt]$, respectively.
\end{proof}

\lemNoPosSwapTwo*

\begin{proof}
(1) directly follows from $\{t_{u_1}, t_{u_2}, \tvo, \tvt\}$-linearity of $\svo$ and $\svt$ on intervals $[\aovo, \atvo]$ and $[\aovt, \atvt]$ (Lemma~\ref{lem:no-pos-swap1}).
For (2), assume that $\delta_{\svo}> 0$ or $\delta_{\svt}>0$ and $\delta_{\svo} = \Delta_{\tvo} + \Delta_{t_{u_1}}$. Then, by (1) it follows that $\delta_{\svt} = \Delta_{\tvt} + \Delta_{t_{u_2}}$. 

Intuitively, the additional external assets of $v_1$ are forwarded through the network and, eventually, are payed back to $v_1$. The same holds for the additional external assets of $v_2$. To determine the external assets of $v_1$ in $\F_{\Delta}$, first, observe that the incoming assets of $v_1$ are at least the incoming assets of $v_1$ in $\F$ of $a_{v_1} - a^x_{v_1}$. In $\F_{\Delta}$, the bank receives additional payments of $\Delta_{\tvo} + \Delta_{t_{u_1}}$. Furthermore, the raised external assets could enable payments in new cycles where all payments from $v_1$ towards the cycle are forwarded back to $v_1$. 

Formally, recall that the total assets of $s_{v_1}$ in $\HH_{\Delta}$ equal the total assets of $v_1$ in $\F_{\Delta}$. The incoming payments of $v_1$ in $\F_{\Delta}$ are the sum of total assets of $t_{u_1}$ and $t_{v_1}$ in $\HH_{\Delta}$. Since $\HH$ is a pre-network of $\HH_{\Delta}$ with clearing state equivalent to $\F$, this sum is given by $(a_v - a_v^x) + \Delta_{t_{v_1}} + \Delta_{t_{u_1}}$. Now the external assets of $v_1$ in $\F_{\Delta}$ are the total assets of $s_{v_1}$ minus the sum of total assets of $t_{u_1}$ and $t_{v_1}$ in $\HH_{\Delta}$, i.e.,
\begin{align*}
    & a_{v_1} + \delta_{v_1} - \left((a_{v_1} - a^x_{v_1}) + \Delta_{\tvo} + \Delta_{t_{u_1}}\right)\\
    =\;& a^x_{v_1} + \delta_{v_1} - \Delta_{\tvo} - \Delta_{t_{u_1}}\\
    =\;& a^x_{v_1} 
\end{align*}
where the last equality follows by assumption.
Analogously, $v_2$ has external assets of $a^x_{v_2}$ in $\F_{\Delta}$. Hence, every bank has the same external assets in $\F$ and in $\F_{\Delta}$. However, since $a_{v_1} + \delta_{v_1} > a_{v_1}$ or $a_{v_2}+ \delta_{v_2} > a_{v_2}$ the payments in the clearing state in $\F_{\Delta}$ are strictly higher than in $\F$. This contradicts the fact that the clearing state $\vecp$ is the maximal fixed point. With the same argument, $\delta_{\svt} \neq \Delta_{\tvt} + \Delta_{t_{u_2}}$ follows.
\end{proof}

\thmNoPositive*

\begin{proof}
For a financial network $\F$, suppose the debt swap of $e_1$ and $e_2$ is positive, i.e., $\atvo > \aovo$ and $\atvt > \atvo$. We set $\delta_{v_1}$ and $\delta_{v_2}$ to the increase in total assets of $v_1$ and $v_2$ due to the swap, i.e., $\delta_{v_1} \coloneqq \atvo - \aovo$ and $\delta_{v_2} \coloneqq \atvt - \aovt$. Now consider the financial network $\HH$ corresponding to $\F$ and raise the external assets of $s_{v_1}$ by at most $\delta_{v_1}$ and external assets of $s_{v_2}$ by at most $\delta_{v_2}$. Let $g_{t_{u_1}}: [0, \delta_{v_1}] \times [0, \delta_{v_2}] \to [0, \infty]$ be the increase in total assets at $t_{u_1}$ as a function of the increase in external assets at $\svo$ and $\svt$ in the clearing state of network $\HH$. We define $g_{t_{v_1}}$ similarly. The payment functions of all banks are non-decreasing and continuous, and so are $g_{t_{v_1}}$ and $g_{t_{u_1}}$. In particular, $g_{t_{v_1}} (0,0) = g_{t_{u_1}}(0,0) = 0$. 

The sum $g_1 = g_{t_{u_1}} + g_{t_{v_1}}$ is non-decreasing and continuous as well, with $g_1(0, 0) = 0$. By non-expansivity, we have that $g_1(x,y) \le x + y$ for all $(x,y) \in [0, \delta_{v_1}] \times [0, \delta_{v_2}]$. We will now show that whenever the debt swap of $e_1$ and $e_2$ is positive, there exists an $(x,y) \in [0, \delta_{v_1}] \times [0, \delta_{v_2}]$ with $(x,y)>(0,0)$, i.e., $x>0$ and $y>0$, such that $g_1(x,y) = x$. This is a contradiction to Lemma~\ref{lem:no-pos-swap2} (2). 

We proceed in three steps.
\begin{enumerate}[(i)]
\item $g_1(x,0)<x$ for all $x \in (0,\delta_{v_1}]$: \\
Since $g_1(x, y) \leq x + y$, it follows that $g_1(x, 0) \leq x$, for all $x>0$. Now $g_1(x,0)=x$ for some $x > 0$ would imply $\{\tvo, t_{u_1}\}$-linearity of $s_{v_1}$ (Lemma~\ref{lem:no-pos-swap2} (1)) and a contractiction to Lemma~\ref{lem:no-pos-swap2} (2).

\item  $g_1(c,y) > c$ for some $c\in (0, \delta_{v_1}], y \in (0, \delta_{v_2}]$: \\
Fix a value $y \in (0, \delta_{v_2}]$ with $y \leq \delta_{v_1}$. Then clearly $g_1(0,y)\geq 0$. Moreover, $g_1(0,y)>0$ since otherwise $s_{v_2}$ would be $\{\tvt,t_{u_2}\}$-linear (Lemma~\ref{lem:no-pos-swap2} (1)) and we obtain a contradiction to Lemma~\ref{lem:no-pos-swap2} (2). Note that $g_1(0,y) \le y \le \delta_{v_1}$. Since $g_1(0,y)>0$ and $g_1$ is monotone, we obtain $g_1\left(g_1(0, y), y\right) \geq g_1(0, y)$.
Using the notation $c \coloneqq g_1(0,y)$, we have $0 < c \leq \delta_{v_1}$ and $g_1(c,y) \leq c$. Since $g_1(c,y) = c$ implies a contradiction to Lemma~\ref{lem:no-pos-swap2} (2), we must have $g_1(c,y) > c$.

\item  $g_1(x,y)=x$ for some $(x,y) \in (0,\delta_{v_1}] \times (0,\delta_{v_2}]$: \\
(i) and (ii) imply that there exists a pair $(c,y) \in [0,\delta_{v_1}] \times [0,\delta_{v_2}]$ such that $g_1(c,0) < c$ and $g_1(c,y)>c$. By the intermediate value theorem, there exists a $d \in [0,y]$ satisfying $g_1(c,d)=c$. Observe that in particular $y > 0$ holds true. \qedhere
\end{enumerate}
\end{proof}

The theorem substantially generalizes an earlier result for proportional payments~\cite{PappDebtSwapping}. Let us note that the proof in the full version of~\cite{PappDebtSwapping} relies on a Lemma B.1 which is incorrect. The lemma claims the following: For two financial networks $\F^1$ and $\F^2$ with proportional payments on the same set of nodes, suppose there exists one source node $s$ with external assets $a^x_s$ while $a^x_u=0$ for all other banks in $\F^1$ and $\F^2$. If $l^{\F^1}_e \leq l^{\F^2}_e$ for every edge, then $a^{\F^1} \leq a^{\F^2}$ follows. 

As a counterexample to this claim, consider a financial network over the three banks $s,w_1$ and $w_2$ and edges $(s,w_1)$ and $(s,w_2)$ with $l_{(s,w_1)}=1$ and $l_{(s,w_2)}=2$. Let the external assets be given by $a^x_s = 2$ and $a^x_{w_1}=a^x_{w_2}=0$. Then, $s$ pays $\frac{1}{1+2}\cdot 2= \frac{2}{3}$ towards $w_1$ and $\frac{2}{1+2}\cdot 2= \frac{4}{3}$ towards $w_2$. Now assume the edge weight $l_{(s,w_1)}$ is raised to $2$. Then, the payments from $s$ to $w_1$ as well as $w_2$ are changed to $\frac{2}{2+2}\cdot 2 = 1 < \frac 43$.

\propNoSwapPosPaym*

\begin{proof}
For a given network $\F$ consider the corresponding network $\G$ after applying source-sink equivalency to the edges $e_1$ and $e_2$. This procedure adds the new banks $t_{u_1},t_{u_2}$ and changes the edges $e_1$ and $e_2$ from $(u_1,v_1)$ and $(u_2,v_2)$ to $(u_1,t_{u_1})$ and $(u_2,t_{u_2})$, respectively. Additionally, the external assets of $v_1$ and $v_2$ are updated to $a^x_{v_1}+p_{e_1}$ and $a^x_{v_2}+p_{e_2}$, respectively. Analogously, define $\sig{\G}$ for $\sig{\F}$.
Observe that $\G$ and $\sig{\G}$ only differ in the external assets of $v_1$ and $v_2$. Hence, $\G$ is a pre-network of $\sig{\G}$ and, by monotonicity of the allocation rule, both $v_1$ and $v_2$ strictly profit from the edge swap. This is a contradiction to Theorem~\ref{thm:no-pos-swap}.
\end{proof}

\propSemiSwapPaym*

\begin{proof}
Proposition~\ref{prop:no-swap-pos-payments} shows that $\sig{p}_{e_1} > p_{e_1}$ and $\sig{p}_{e_2} > p_{e_2}$ can never be satisfied simultaneously. If $\sig{p}_{e_1} \leq p_{e_1}$ and $\sig{p}_{e_2} \leq p_{e_2}$, then the same source-sink transformation as in the proof of Proposition~\ref{prop:no-swap-pos-payments} shows that the network after swapping $\sig{\G}$ is a pre-network of $\G$. By Corollary~\ref{cor:pre-network-leq-clearing} such a swap cannot be semi-positive. Consequently, either $\sig{p}_{e_1} > p_{e_1}$ and $\sig{p}_{e_2} \leq p_{e_2}$, or $\sig{p}_{e_2} > p_{e_2}$ and $\sig{p}_{e_1} \leq p_{e_1}$.
\end{proof}

\thmPareto*

\begin{proof}
First, let the debt swap of $e_1$ and $e_2$ be Pareto-improving. For a contradiction, assume it is not semi-positive. Then, w.l.o.g.\ the following two cases can occur: (1) $\atvo < \aovo$ and $\atvt > \aovt$, or (2) $\atvo \leq \aovo$ and $\atvt \leq \aovt$. Towards (1), it is easy to see that $v_1$ is harmed by the swap. Hence, it is not Pareto-improving. For (2), consider the two networks $\HH$ and $\sig{\HH}$ after applying source-sink equivalency to the swapping edges and $v_1$ and $v_2$ in $\F$ and $\sig{\F}$. Since $\HH$ and $\sig{\HH}$ only differ in the external assets of $v_1$ and $v_2$, $\HH$ is a pre-network of $\sig{\HH}$. Consequently, by Corollary~\ref{cor:pre-network-leq-clearing}, the total assets of all banks in $\HH$ are at least as high as those $\sig{\HH}$. Hence, no bank strictly profits by the debt swap, a contradiction.

For the other direction, suppose the debt swap of $e_1$ and $e_2$ is semi-positive and w.l.o.g.\ $\atvo > \aovo$ and $\atvt = \aovt$. Then, by Lemma~\ref{lem:no-pos-swap1},  $\HH$ is a pre-network of $\sig{\HH}$, and thus, by Corollary~\ref{cor:pre-network-leq-clearing}, the total assets of all banks after the swap are at least as high as those before the swap. Thus, the debt swap is a Pareto-improvement.
\end{proof}


\end{document}